 \newtheorem{prop}{Proposition}
 \newtheorem{theo}{Theorem}
 \newtheorem{lemma}{Lemma}
 \newtheorem{corol}{Corollary}
\theoremstyle{remark}
\newtheorem{rem}{Remark}
\def\om{\omega}
\def\de{\delta}
\def\si{\sigma}
\def\var{\varepsilon}
\def\R{{\Bbb R}}
\def\E{{\Bbb E}}
\def\N{{\Bbb N}}
\def\pa{\partial}
\def\Cal{\mathcal}
\def\Scr{\mathscr}
\def\cO{{\Cal O}}
\def\cP{{\Cal P}}
\def\Som{\sum\limits}
\def\Sup{\sup\limits}
\def\Lim{\lim\limits}
\def\Bigcup{\bigcup\limits}
\def\lmul{\substack}
\def\deb{\rightharpoonup}
\def\blackbox{\unskip\kern 6pt\penalty 500%
\raise -1pt\hbox{\vrule\vbox to 8pt{\hrule width 6pt\vfill\hrule}\vrule}}
\title[particle model for coupled PDE's] {A simple particle model for a system of coupled equations with absorbing collision term}
\author[C\'edric Bernardin and Valeria Ricci]{}
\subjclass{82C22, 82C21, 60F05, 60K35.}
\keywords{Interacting particle systems, large numbers limit, absorption.}
\email{Cedric.Bernardin@umpa.ens-lyon.fr}
\email{valeria.ricci@unipa.it}
\thanks{Valeria Ricci acknowledges the support by the GNFM (research project 2008:``Approssimazione di equazioni alle derivate parziali tramite sistemi di particelle").}
\begin{document}
\maketitle

\centerline{\scshape C\'edric Bernardin}
\medskip
{\footnotesize
 \centerline{Universit\'e de Lyon and CNRS,}
   \centerline{UMPA, UMR-CNRS 5669, ENS-Lyon,}
   \centerline{46, all\'ee d'Italie, 69364 Lyon Cedex 07 - France}
}

\medskip

\centerline{\scshape Valeria Ricci}
\medskip
{\footnotesize
 \centerline{Dipartimento di Metodi e Modelli Matematici}
   \centerline{Universit\`a di Palermo, Viale delle Scienze Edificio 8}
   \centerline{I90128 Palermo - Italy}
}


\begin{abstract}
We study a particle model for a simple system of partial differential equations describing, in dimension $d\geq 2$, a two component mixture  where light particles move in a medium of absorbing, fixed obstacles; the system consists in a transport and a reaction equation coupled through pure absorption collision terms. We consider a particle system where the obstacles, of radius $\var$, become inactive at a rate related to the number of light particles travelling in their range of influence at a given time and the light particles are instantaneously absorbed at the first time they meet the physical boundary of an obstacle; elements belonging to the same species do not interact among themselves. We prove the convergence (a.s. w.r.t. the product measure associated to the initial datum for the light particle component) of the densities describing the particle system to the solution of the system of partial differential equations in the asymptotics  $ a_n^d n^{-\kappa}\to 0$ and $a_n^d \var^{\zeta}\to 0$, for $\kappa\in(0,\frac 12)$ and $\zeta\in (0,\frac12 - \frac 1{2d})$, where $a_n^{-1}$ is the effective range of the obstacles and $n$ is the total number of light particles.
\end{abstract}

\section{Introduction}

In this paper we propose a microscopic model for a  system of partial differential equations consisting in a transport equation, having pure loss collision term, and a pure loss reaction equation; the equations are  self--consistently coupled and they are meant to describe the evolution in time of the phase space densities in a binary mixture where the two species interact in such a way that each species inhibits the activity of the other, with an interaction proportional to their macroscopic (position space) densities. In particular, we want to rigorously derive the system of equations in a suitable asymptotics for the particle system.

The system of equations we shall deal with is a sort of very simple reactive transport system, though without conservation of masses; more complex reactive transport systems arise in many different contexts, like for instance in the modeling of biological systems, of porous media, of radiative transfer or, more in general, of various systems in the presence of chemical reactions (see e.g.\cite{SDL,MM}).
We are interested here in the analysis of the simplest reactive self-consistent coupling (i.e. the absorption coupling), and we shall therefore not include other terms in the equations.

From the point of view of particle modeling, an analysis of the absorption self-consistent coupling for one species nonlinear equations
can be found in
models for reaction-diffusion equations, such as the ones proposed in \cite{NOR} and \cite{Szn}; in both papers, the particle system evolves according to a Brownian motion (to get the diffusion) and destruction of particles occurs with different mechanism: in the first paper the destruction is stochastic, with a death rate for the Brownian particles which is a function of stochastic exponential times, in the second one it is deterministic, and it happens as soon as collisions between particles occur.

We shall consider a binary semi--deterministic system where both kinds of interactions occur.
More precisely, the particle system we shall consider consists in point like (light) particles  moving uniformly among fixed, spherical particles (obstacles). Particles not belonging to the same species do  not interact among themselves and particles of different species interact in the following way: a light particle becomes inactive (or is absorbed) at the first time it meets an obstacle and an obstacle becomes inactive (or disappears) in a stochastic interval of time whose size is connected, through a local mean-field type interaction, to the number of light particles traveling within the area of detection (range) of the obstacle.

The main difficulty in the derivation of the (otherwise simple) system of partial differential equations originates from the self consistent structure of the problem: in order to overcome the mathematical troubles introduced from the self--consistent terms, we shall adopt a natural scheme for facing self--consistent problems in partial differential equations and particle systems and we shall prove the convergence using two levels of approximation of the original particle system, the first one eliminating the self-consistent structure and the second one reducing the correlations between the two species with respect to the original many-particle system. Our formalism is quite explicit and the strategy we shall adopt is similar to the procedure used in \cite{NOR}. The adjustment of this procedure to our particle system is not trivial, since, at variance with the situation treated in \cite{NOR}, we deal with a two species system interacting in a non symmetric way and the instantaneous absorption of the light particles is in some sense a singular interaction with respect to the local mean-field type interaction which govern the obstacles lifetime,
i.e. the interaction type considered in \cite{NOR}. The presence of the deterministic component (the light particles) imposes a sufficiently careful analysis of the trajectories, in the style of the analysis performed for particle models of linear equations based on similar deterministic components (see e.g. \cite{BGW}, \cite{DR1, DR2}).  Nevertheless, the structure of the stochastic component (the obstacles)  simplifies by a considerable amount the mathematics with respect to a two component, totally deterministic system.

Our final theorem establishes a weak law of large numbers for
the empirical measures (associated to the two species of particles in the mixture) to the solution of the system of partial differential equations, almost surely with respect to the initial distribution of positions in the phase space of  the light particles and in probability with respect to the distributions of positions and life time of the obstacles. This weak law of large numbers is valid in an asymptotics where the radius and the effective range of the obstacles vanish, keeping a finite action in the limit, and the number of  particles grows up to infinity, these quantities being related in a way that  we shall determine while proving the theorem.
As a part of this relationship, we shall prove a simple condition (relating the diminishing rate of the effective range of the obstacles to the number of light particles)  which guarantees the weak convergence of the product of an empirical measure times a sufficiently regular approximant of a Dirac delta distribution toward the product of their weak limits (under suitable regularity assumptions on the weak limit of the empirical measure and on the choice of the delta's approximant). This condition entails very useful bounds for our estimates and allows to get easily the required asymptotic result.

\section{The equations, the particles model and the main result}
\label{sezdue}

Throughout the paper we shall use the following notations: in dimension $d\geq 2$, we denote by  $x\in\R^d$ the position of a light particle and by $v\in\R^d$ its velocity;
$t\in\R_+$ is the time variable. In general, configurations of $M$ variables in $\R^d$ will be denoted by boldface letters with subscript $M$ ($\mathbf{y}_M=(y_1, \ldots, y_M)$) and sequences of variables by capital letters with the subscript $\infty$ ($Y_\infty =\{y_i\}_{i=1}^{\infty}$). When needed (e.g., in the functional spaces) we shall use $\R_x^d$ and $\R_v^d$ resp. for the position and the velocity spaces.
For $p\in\R^d$ and $r\in \R_+ $ we shall denote by $B_r(p)=\{y\in \R^d : |p-y|\leq r\}$ the ball of radius $r$ centered in $p$. For a given $z=(x, v)\in \R^d\times\R^d$, the free flow associated to the light particle with initial position (in the phase space) $z$ is
$$T^t(z)=(T^t_1(z), T^t_2(z))=(x+vt, v),\qquad t\in\R_+,$$
while $x (t)=T^t_1(z)$, $t\in\R_+$, is its trajectory,
and for $\var>0$
\begin{equation}
\label{tubo}
{\Cal T}_\var (t,z)=\left\{ y\in \R^d \quad; \exists s \in [0,t), \quad |y-x(s)| \leq \varepsilon \right\}
\end{equation}
denotes the flow tube of radius $\var$ associated to the trajectory
up to the time $t$.

Unless differently stated, for a given random variable $\eta$, we denote by $P_{\eta}$ its associated probability distribution, and for a  measure $\pi$ (random variable $\eta$), we denote by $\E_{\pi}$  ($\E_{\eta}$) the expectation w.r.t. $\pi$ ($\eta$).
In order to simplify the notation, in many stochastic variables depending on configurations $\mathbf{y}_M$, we shall label this dependence by a simple subscript $M$, instead of rewriting each time the whole configuration.

We shall moreover denote $\deb$ the weak convergence (convergence in law) in the space of finite measures and $\stackrel{*}{\deb}$ the *-weak (vague) convergence on the space of Radon measures, and by $K$ (with some subscript) any generic constant whose value needs not to be specified. We shall sometimes shorten the notation for sums and difference of functions with the same argument as $(f\pm g)(w)=f(w)\pm g(w)$.

\smallskip

For $T>0$, we consider the following system of partial differential equations for the two densities $f=f(t,x,v)$, $\si=\si(t,x)$, $t\in [0,T]$, $x\in \R^d$, $v\in \R^d$
:
\begin{eqnarray}
\label{sistlim}
\left\{
\begin{array}{l}
\pa_t f +v\cdot\nabla_x f = -C_d |v|\si f \\
\pa_t \si= -\Theta (\int_{\R^d} dv f ) \si\\
f(0,x,v)=f_0(x,v)\\
\si(0,x)=\si_0(x),
\end{array}
\right.
\end{eqnarray}
where $\Theta$  and $C_d=  \int_{\{\omega\in\R^d : |\omega|=1\}} |n\cdot \omega| d\omega$, $|n|=1$, are positive constants.
In order to have a convenient existence and uniqueness theorem for the solution (cf. Appendix \ref{app1}), we assume $f_0\in L^1(\R_v^d; W^{1,\infty}(\R_x^d))$, $vf_0\in L^1 (\R_v^d; W^{1,\infty}(\R_x^d))\cap L^{\infty}(\R_x^d\times \R_v^d)$,  $v^2 f_0\in L^1 (\R_v^d; L^{\infty}(\R_x^d))$ and $\sigma_0\in W^{1,\infty}(\R_x^d)$.

We shall show that the system (\ref{sistlim}) can be derived from a semi-deterministic particle system of the kind we specified in the Introduction. The particle system will be described in next paragraphs.

\smallskip

Obviously, once established the correct asymptotics to get equation (\ref{sistlim}) from the particle system we have chosen, we shall simply write $\Lim_{n\to\infty}$ to denote a limit in this asymptotics ($n$ and $\var$ being related). This peculiar use of the notation will be consequent to the hypothesis in the environments where the notation is used.

\subsection{The particle system: initial data statistics}

We consider configurations of spherical fixed obstacles of radius $\var$ with stochastically distributed  positions at time $t=0$ and we denote by $\mathbf{c}_M= (c_1, \ldots, c_M)$  the coordinates of their centers. Obstacles may overlap (i.e., configurations such that for some $i,k$, $|c_i - c_k|\leq 2\var$ are allowed) and  $M= 0$ corresponds to absence of obstacles.
In this paper, we shall assume that obstacles positions follow a Poisson distribution with parameter $\mu_{\var}$, i.e. that the probability
distribution of finding $M$ obstacles in a bounded measurable set
$\Lambda \subset \R^d$ is
given by:
\begin{equation}\label{poisson}
P(d{\mathbf{c}_M})= e^{-\mu_{\var}[\Lambda]_L}\frac{\mu_{\var}^M}{M!} dc_1 \dots dc_M,
\end{equation}
where (here and in what follows) $[\Lambda]_L$ denotes the Lebesgue measure of the set $\Lambda$. We shall adapt the initial datum $\si_0(x)$ to this choice for the statistics of the obstacles.

We consider then $n$ point-like particles,
located initially at positions $x_1,\ldots, x_n$ and moving uniformly among the obstacles with velocities $v_1,\ldots, v_n$; we shall denote the phase space coordinate of the $i$-th light particles  as $z_i=(x_i,v_i)$. A point in the $n$-particles phase space is denoted as $\mathbf z_n= (x_1, v_1,\!\!\! \ldots, x_n, v_n)$ and a sequence of initial data is denoted as $Z_{\infty}=\{z_i\}_{i=1}^{\infty}\in(\R^d\times\R^d)^{\infty}$.

We describe both species of particles by means of their empirical measure (see e.g. \cite{G1}).

Given a $1$-particle probability density $f_0$,
we denote by $\Scr{P}$ the (infinite product) probability measure defined on the space of infinite sequences $(\R^d\times\R^d)^{\infty}$ by $f_0$, i.e. such that, for $n=1,2,\ldots$ and $A^{(i)}_x\times A^{(i)}_v \subset \R^d\times\R^d$,

\begin{equation}
\label{pgrande}
\Scr{P}(Z_{\infty}: z_i\in
A^{(i)}_x\times A^{(i)}_v , 1\leq i\leq n )\!=\!
\prod_{k=1}^{n} \int_{A^{(k)}_x\times A^{(k)}_v}f_0(x,v)dx dv.
\end{equation}

The sequence of initial empirical measures for the light particles $\{ \mu_n^0 \}_{n=1}^{\infty}$ is then such that:
\begin{equation}
\mu_n^0 (t,x,v; \mathbf{z}_n)=\frac 1n \sum_{i=1}^n \de_{z_i}(x,v) \deb f_0(x,v) \qquad \Scr{P}-a.e. \; w.r.t. Z_{\infty}.
\end{equation}

We shall require moreover some regularity condition on $f_0$.

\subsection{The particle system: dynamics}\label{dinamica}
We define now the dynamics of the system.

We consider a sequence $\mbox{\Large{$\tau$}}_{\infty}=\{\tau_1,\ldots, \tau_k, \ldots\}$ of independently distributed exponential variables
and we define the following stochastic functions:

\begin{itemize}

\item The \textit{risk function}  at a given position $c$,

\begin{equation}
\label{funris}
V_{n, \var, M}(t,c)=\frac 1n \sum_{i=1}^{n} \int_0^t ds q_n(x_i(s) -c )\xi_{n,\var,M}(s, z_i),
\end{equation}
where $q_n(x)=a_n^d q(a_n x )$, is an (at least continuous) approximant, up to a multiplicative constant, of the Dirac delta distribution, with  $q$ a non negative, radial ($q(x)=q(|x|)$) function  such that $\int_{\R^d} q(y)dy = \Theta>0$ and $\xi_{n,\var,M}$ is the stochastic function defined below.

\item The \textit{life functions}
\footnote{We use, here and later, the notation $\Bigcup_{k=1}^M B_{\var}(c_k)\eta_{n,\var, M}(s,c_k)$ to mean $\Bigcup_{\lmul{k\in \{1,\ldots,M\}:\\\eta_{n,\var, M}(s,c_k)=1}} B_{\var}(c_k)$. Although not formally correct, this notation allows us to write less cumbersome formulas.}, resp. $\xi_{n,\var,M}(t, z)$ for a light particle with initial position and velocity $(x,v)$ and $\eta_{n,\var, M}(t,c_k)$ for an obstacle centered in  $c_k\in\{c_1, \ldots, c_M\}$,

\begin{equation}
\label{sistpart}
\begin{array}{l}
\xi_{n,\var,0}(t,z)=1 \\
\\
M\geq 1\\
\\
\xi_{n,\var,M}(t,z)=I_{\{ x(s)\notin \Bigcup_{h=1}^M B_{\var}(c_h)\eta_{n,\var, M}(s,c_h)\quad\forall s \in [0,t)\}}\\
\\
\eta_{n,\var,M}(t, c_k )=I_{\{ V_{n,\var, M}(t, c_k)< \tau_k\}}.
\end{array}
\end{equation}

\item The \textit{maximal collision time} between a light particle with initial phase space position $z=(x,v)$ and an obstacle located in $c$,
\begin{equation}
\label{tcollis}
T_{z,c}=\inf_{s\in \R_{+}}\{s: \,\,|x(s)-c|\leq \var\},
\end{equation}
with no reference to the activity of both particles. Because no life functions are involved in this definition, $T_{z,c}$ can be infinite.

\end{itemize}

\smallskip

In the particle system defined through (\ref{funris}) and (\ref{sistpart}),  the obstacles become inactive at stochastically distributed times defined at a given position $c_k$ through the risk function (\ref{funris}) as the time $t_k=t_k(c_k, \tau_k)$ s.t. $V(t_k, c_k)=\tau_k$, while the light particles are absorbed at the first time they meet an active obstacle (i.e. as soon as $|x_i(T_{coll}) -c_k|=\var$ for some $k$ and $T_{coll}<t_k$; notice that $T_{coll}$ is defined w.r.t. active particles, and it is therefore a different variable w.r.t. $T_{z,c}$).

Unless we require $q$ to have compact support,  a light particle located at $x$ can affect the life function of a far  obstacle (it suffices that $q(x-c)\neq 0$); on the contrary, a given light particle interacts only with obstacles which are met by its trajectory in space.
Since,
for a given $n$, the volume including all light particle trajectories up to time $T$, $\mathcal{V}_n$, is such that
\begin{equation}
\label{bn}
\mathcal{V}_n\subset B_{\max\limits_{i=1,\ldots,n}|x_i|+\max\limits_{j=1,\ldots,n}|v_j| T +1} (0)=B_n ,
\end{equation}
and, in studying the interacting system, we do not need to consider quantities related to obstacles which can not be met from any light particles,
we may (for a given $n$) restrict the expectation value with respect to (\ref{poisson}) to the volume $\Lambda=\Lambda_n$, for a growing sequence of bounded  Lebesgue measurable sets $\{\Lambda_n\}_n$ s.t. $B_n\subset\Lambda_n$ and $\Lim_{n\to\infty} \Lambda_n= \R^d$. We denote by $\E_c^n$ the expectation value with respect to the centers  distribution (\ref{poisson}) with such a choice for $\Lambda$.
We shall denote by $\E^n= \E^n_c \E_{\mbox{\Large{$\tau$}}}= \E_{\mbox{\Large{$\tau$}}}\E^n_c$, where $\E_{\mbox{\Large{$\tau$}}}$ is the expectation value w.r.t. to the exponential times sequence $\mbox{\Large{$\tau$}}_{\infty}$, and by $P^n$ the corresponding probability distributions.

\subsection{Scaling laws: heuristics}

In order to derive the system (\ref{sistlim}) from the particle system defined above in the macroscopic limit where the radius $\var$ of the obstacles vanishes, we have to choose the scaling for densities of both species of particles in a suitable way.

As a first requirement, we want the mean free path of the light particles to be finite at the scale at which the system is observed,  in such a way to keep track, in the chosen asymptotics, of the interaction between the light particles and the obstacles. To this purpose, we fix the rate $\mu_\var$ of the Poisson process
(\ref{poisson}) to be such that:
\begin{equation}
\label{BG}
\mu_{\varepsilon} {\varepsilon^{d-1}} =\mu
>0
\end{equation}
and, since we consider a uniform initial macroscopic distribution of obstacles, in (\ref{sistlim}) we assume $\si_0(x)=\mu$.

When $V_{n, \var, M}\equiv0$ (i.e. $q\equiv0$), formula (\ref{BG}) defines the so called Boltzmann-Grad scaling, which has been analysed for linear particle systems in particular in connection with the asymptotics of the Lorentz gas and its variants (see e.g. \cite{Ga, Sp, BBS}, \cite{DR1} for the stochastic case, \cite{BGW, G2, RW} for the periodic case
and the beautiful review \cite{G3}
-focused on the periodic case-).
When $V_{n, \var, M}\not\equiv0$, (\ref{BG}) guarantees the finiteness of the mean free path for a given light particle as soon as $V_{n, \var, M}$, in
flow tubes and for a non-negligible set of  obstacles configurations, is bounded uniformly in $n$. This is true, under the conditions on $f_0$  specified at the beginning of Section \ref{sezdue} and the conditions on $q$ given in Section \ref{dinamica} (i.e. $q\in L^1(\R^d)\cap C(\R^d)$), whenever $a_n^d n^{-\frac 12}\leq const$.

A second requirement is to have, in the $\var\to 0$ asymptotics, vanishing correlations between light particles and obstacles, so as to obtain a Markovian limit (i.e. without memory effects), coherently with the structure of (\ref{sistlim}).
Since the dominant part of correlations is associated to grazing crossings of light particle trajectories, and in particular to the mean volume $V^\var_g$ spanned from grazing trajectories crossings within the effective range of an obstacle, this requirement connects the scaling
in the effective range $a_n^{-1}$ to the scaling in the mean density of the obstacles. In order to have negligible correlations, the mean number of multiple collisions per unit volume  has to vanish in the limit.
This leads to the condition $\mu_\var a_n^d V^\var_g  \to 0.$
Notice that possibly $V^\var_g =O(\var^{d-1})$, and having a better asymptotic behavior depends essentially on the regularity of the limit measure of the light particle component.

We have then a third, technical, requirement due to the form of (\ref{funris}): since eventually we want to obtain a mean-field limit, and therefore we want to be able to perform limits of mean values with respect to $\mu_n^0$ of sequences of functions converging to singular limits (such as Dirac delta distributions), we need a condition assuring the convergence (in a suitable sense) of the product of such sequences times the initial empirical measure: this is achieved if the empirical measure converges faster to its (sufficiently regular) limit than the chosen approximant concentrates, in such a way that the empirical measure is in practice equivalent to its limit density well before than the delta's approximant concentrates in its center.
We can guess roughly that this happens if the fraction of particles fluctuating around mean values in a volume corresponding to the effective range of an obstacle vanishes in the chosen asymptotics, and therefore $\frac{1}{\sqrt{n}}=o(a_n^{-d})$. We shall prove a more precise asymptotics in Appendix \ref{app2}

Given (\ref{BG}) and the just described scaling laws, for a given configuration of obstacles, the empirical measure at time $t$ for the light particle component (representing its mesoscopic density in phase space) is:
\begin{equation}
\label{empit}
\mu_n(t,x,v; \mathbf{z}_n, \textbf{c}_M)=\frac 1n \sum_{i=1}^{n}
\de_{T^t(z_i)}(x,v)\xi_{n,\var,M}(t,z_i)
\end{equation}
and the (macroscopic) density of  the obstacle component (i.e. the number of obstacles per unit volume) is expressed as:
\begin{equation}
\label{densit}
\si_n (t,x; \mathbf{z}_n, \textbf{c}_M)=\var^{d-1}  \sum_{k=1}^{M}
\de_{c_k}(x)\eta_{n,\var,M}(t, c_k ).
\end{equation}

\subsection{The limit process}

Under the conditions we assumed on $(f_0,\si_0)$, as shown in paragraph \ref{app1} in the Appendix, the problem (\ref{sistlim}) admits a unique solution $(f,\si)\in L^{\infty}([0,T]\times\R^d\times\R^d)\times L^{\infty}([0,T]\times\R^d)$, with $\int dv f \in L^{\infty}([0,T]\times\R^d)$ (and actually $\int dv f(t,\cdot,v) \in C_b(\R^d)$). This solution can be expressed in semi-explicit form as:
\begin{equation}
\label{solsemiexpl}
\begin{split}
f(t,x,v)=f_0(x-vt,v) e^{-C_d |v| \int_0^t ds \si (s, x-v(t-s))}\\
\\
\si(t,x)=\si_0 (x) e^{- \Theta \int_0^t ds \int_{\R^d} dv f(s,x,v)}.\qquad\qquad\;
\end{split}
\end{equation}

In order to be able to compare solutions of (\ref{sistlim}) with the stochastic measures (\ref{empit}) and (\ref{densit}),
we define the risk functions and the life functions associated to the limit process described by (\ref{sistlim}) as:
\begin{equation}
\label{risslo}
\begin{split}
V_L^f (t, c) = \Theta \int_0^t ds \int_{\R^d} dv f(s, c, v)\\
\\
\eta_L^f (t, c)=I_{\{ V_L(t, c)< \tau\}}\qquad\qquad\;\;\;\\
\end{split}
\end{equation}

\begin{equation}
\begin{split}
\label{csisl}
U_L^{\si} (t, z)=C_d |T^t_2(z)| \int_0^t ds \si(s,T^s_1(z))\\
\\
\xi_L^{\si}(t,z)=I_{\{ U_L(t,z)< \tau_p\}},\qquad\qquad\qquad
\end{split}
\end{equation}
where $\tau$ and $\tau_p$ are independently distributed exponential variables
and $\tau=\tau_k$ when $c=c_k$.

In this way, the semi-explicit form of the solution (\ref{solsemiexpl}) can be re-expressed as
\begin{equation}
\label{solsemialt}
\begin{split}
f(t,x,v)=f_0(x-vt,v) \E_{\tau_p}\bigl[\xi_L^{\si}(t, T^{-t}(x,v))\bigr]
\\
\\
\si(t,x)=\si_0 (x) \E_{\tau}\bigl[\eta_L^f (t, x)\bigr].\qquad\qquad\qquad\;
\end{split}
\end{equation}

In order to simplify the notation, we shall omit in what follows the dependence on the density functions ($V_L=V_L^f$, $U_L=U_L^{\si}$).

\bigskip

We want to establish a (weak) law of large numbers for the measures (\ref{empit}) and (\ref{densit}).
More precisely, we shall prove the following theorem (we recall that $\Scr{S}(\R^n)$ denotes the space of $C^{\infty}(\R^n)$ functions of rapid decay at infinity \cite{Sch, Lions}):

\begin{theo}
\label{teorema}

Consider the non-negative functions $f_0$ and $q$ and assume

\begin{itemize}
\item $f_0\in\Scr{S}(\R^d\times\R^d)\bigcap L^1(\R^d_v; W^{1,\infty}(\R^d_x))$ is a probability density such that
\begin{equation}
\label{mom0}
\int_{\R^d} dv f_0\in \Scr{S}(\R^d) \qquad with \qquad \int_{\R^d}dvf_0(0,v)>0,
\end{equation}
and
$vf_0\in L^1(\R^d_v; W^{1,\infty}(\R^d_x))$, $v^2f_0\in L^1(\R^d_v; L^{\infty}(\R^d_x))$;
\vspace{.2cm}
\item $q$ is a radial function s.t. $q\in \mathscr{S}(\R^d)$ and $\int_{\R^d} dx q(x) =\Theta>0$;
\vspace{.2cm}
\item $\{a_n\}_{n=1}^{\infty}$ is such that $a_n>0$, $\Lim_{n\to\infty} a_n= \infty$ and  there exists some $\kappa\in (0,\frac 12)$ such that
\begin{equation}
\label{conditio}
\Lim_{n\to\infty}\frac{a^d_n}{n^{\kappa}}=0 ;
\end{equation}
\item
$\{\var\}_{n=1}^{\infty}=\{\var_n\}_{n=1}^{\infty}$ s.t. $\var_n>0$  and
\begin{equation}
\label{conditiovar}
\Lim_{
n\to\infty} a_n^d \var^{\zeta} = 0
\end{equation}
for some $\zeta\in(0,\frac12 -\frac1{2d})$.

\end{itemize}

Then, $\Scr{P}$-almost everywhere w.r.t. sequences of initial data $Z_\infty $ and in probability w.r.t. $P$ and $P_\tau$, when $n\to\infty$
\begin{equation}
\label{empit2}
\mu_n(t,x,v; \mathbf{z}_n, \textbf{c}_M)=\frac 1n \sum_{i=1}^{n}
\de_{T^t(z_i)}(x,v)\xi_{n,\var,M}(t,z_i)
\deb
f(t,x,v)
\end{equation}
\begin{equation}
\label{densit2}
\si_n (t,x; \mathbf{z}_n, \textbf{c}_M)=\var^{d-1}  \sum_{i=1}^{M}
\de_{c_i}(x)\eta_{n,\var,M}(t, c_i )
\stackrel{*}{\deb}
\si(t,x),
\end{equation}
where $\xi_{n,\var,M}$, $\eta_{n,\var,M}$ are defined in paragraph \ref{dinamica} and $(f,\si)$ is the unique solution of
\begin{eqnarray}
\left\{
\begin{array}{l}
\pa_t f +v\cdot\nabla_x f = -C_d |v|\si f \\
\pa_t \si= -\Theta (\int_{\R^d} dv f ) \si\\
f(0,x,v)=f_0(x,v)\\
\si(0,x)=\mu.
\end{array}
\right.
\end{eqnarray}

\end{theo}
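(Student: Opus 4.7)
The plan is to adapt the two-step decoupling strategy used for reaction-diffusion systems in \cite{NOR}: first replace the self-consistent life and risk functions by their limit counterparts $\xi_L,\eta_L$ defined in (\ref{risslo})--(\ref{csisl}) through the unique solution $(f,\si)$ of (\ref{sistlim}), breaking the feedback loop between the two species; second, show that the original coupled particle system and its decoupled companion are close in the regime $a_n^d n^{-\kappa}\to 0$, $a_n^d\var^\zeta\to 0$. The weak convergences (\ref{empit2})--(\ref{densit2}) then follow by combining a quantitative closeness estimate between $\xi_{n,\var,M}$ and $\xi_L$ (resp.\ $\eta_{n,\var,M}$ and $\eta_L$) with a law-of-large-numbers statement for the decoupled empirical measures.

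Concretely, I would introduce the decoupled indicators
\[
\tilde\xi_{n,\var,M}(t,z)=I_{\{x(s)\notin \Bigcup_{h=1}^{M} B_{\var}(c_h)\eta_L(s,c_h)\;\forall s\in[0,t)\}},\qquad \tilde\eta_{n,\var,M}(t,c)=I_{\{V_L(t,c)<\tau\}},
\]
and the associated measures $\tilde\mu_n$, $\tilde\si_n$ obtained by replacing $\xi,\eta$ with $\tilde\xi,\tilde\eta$ in (\ref{empit})--(\ref{densit}). For $\tilde\si_n$, vague convergence in probability against $C_c(\R^d)$-test functions reduces, after integrating out the exponential times, to a standard Poisson-integral computation using (\ref{BG}) together with the continuity of $V_L$. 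For $\tilde\mu_n$, testing against $\varphi\in \Scr{S}(\R^d\times\R^d)$ turns $\int\varphi\,d\tilde\mu_n$ into an empirical mean of independent random variables (w.r.t.\ $\Scr{P}$, conditionally on the Poisson configuration and the exponential times), whose $\Scr{P}$-expectation equals $\int\varphi f\,dx\,dv$ by the semi-explicit representation (\ref{solsemialt}) and whose variance is $O(1/n)$; a Borel--Cantelli argument then upgrades convergence in probability to $\Scr{P}$-a.s.

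The heart of the proof is the closeness step, where I would bound
\[
\E^n\bigl|\xi_{n,\var,M}(t,z)-\tilde\xi_{n,\var,M}(t,z)\bigr|\quad\text{and}\quad \E^n\bigl|\eta_{n,\var,M}(t,c)-\tilde\eta_{n,\var,M}(t,c)\bigr|
\]
by comparing the associated risk-type functions. Because the exponential distribution is absolutely continuous, the second difference is controlled by $|V_{n,\var,M}(t,c)-V_L(t,c)|$, which splits into a deterministic bias $|\E_{\Scr{P}}[q_n\ast\mu_n]-q_n\ast f|$ (small by the strong convergence of $\mu_n$ to $f$ in the decoupled system) plus an initial-data fluctuation tamed by (\ref{conditio}) as proved in Appendix~\ref{app2}. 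The first difference is controlled geometrically by the Lebesgue measure of the set of obstacles whose centers lie in the boundary layer of the flow tube ${\Cal T}_\var(t,z)$ or whose indicator $\eta_{n,\var,M}$ and $\eta_L$ disagree; the heuristic $\mu_\var a_n^d V^\var_g\to 0$ of Section~\ref{sezdue} quantifies exactly this, and the choice (\ref{conditiovar}) with $\zeta<\frac12-\frac1{2d}$ is what makes the quantitative version tight. Since the two differences feed into each other through the dynamics, I would close a Gronwall-type estimate on $E(t)=\sup_{z,c}\E^n(|\xi_{n,\var,M}(t,z)-\xi_L(t,z)|+|\eta_{n,\var,M}(t,c)-\eta_L(t,c)|)$ to obtain $E(t)\to 0$ uniformly on $[0,T]$.

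The main obstacle is the simultaneous control of these errors, which mixes three incompatible regularizations: the singular, purely geometric absorption of light particles that is sensitive to grazing trajectories at scale $\var$; the local mean-field absorption of obstacles sensitive to the approximation of the delta function by $q_n$; and the empirical fluctuations of the initial data governed by $n$. The non-trivial technical content is precisely the identification of the window $(\kappa,\zeta)$ in (\ref{conditio})--(\ref{conditiovar}) in which all three errors close simultaneously, and the handling of the non-symmetric interaction (instantaneous absorption one way, smeared in time the other) so that the asymmetric Gronwall loop still closes.
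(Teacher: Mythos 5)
Your overall architecture (decouple the two species through the limit life functions, prove a law of large numbers for the decoupled system, then estimate the distance to the true system) is close in spirit to the paper, and your decoupled system is essentially the paper's intermediate system (\ref{cap}) with $\bar V^{(k-1)}$ replaced by $V_L$; its LLN is indeed the content of Lemma \ref{lemmauno1} and of the Poisson computation you sketch. The genuine gap is in the closeness step, which you correctly identify as the heart but whose mechanism you do not supply. Your claim that, by absolute continuity of the exponential law, $\E^n|\eta_{n,\var,M}(t,c)-\eta_L(t,c)|$ is controlled by $|V_{n,\var,M}(t,c)-V_L(t,c)|$ uses the simplified form of Lemma \ref{3.2} (Remark \ref{oss}), which requires the risk function to be independent of the exponential clock $\tau_c$. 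That independence fails here: $V_{n,\var,M}(t,c)$ is built from the light-particle life functions $\xi_{n,\var,M}$, which depend on whether the obstacle at $c$ is active, hence on $\tau_c$ itself; moreover, once an obstacle is conditioned to lie in the flow tube of a tagged particle (which is exactly the event appearing in the bound (\ref{riduzione}) for $|\xi_{n,\var,M}-\tilde\xi|$), its risk function is correlated with that particle's contribution, so the naive factorization of expectations inside your Gronwall loop is not legitimate. One must use the full three-term bound (\ref{rischio}) with a $\tau$-independent reference risk function and then actively break these correlations; the paper does this by deleting the offending particle and obstacle (the $\cO,\cP$-frozen systems (\ref{frozen-system})), proving stability under such deletions (Lemma \ref{auxil}), controlling the conditioned second moments along collision times (Proposition \ref{corrie}), and closing an induction in the approximation level $k$ (Lemma \ref{gelato}, Proposition \ref{capbar}). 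None of this, nor a substitute for it, appears in your plan.

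The second, related gap is how you close the self-consistency. You propose a Gronwall estimate on $E(t)=\sup_{z,c}\E^n(|\xi_{n,\var,M}-\xi_L|+|\eta_{n,\var,M}-\eta_L|)$, but the instantaneous (purely geometric) absorption of the light particles is not a Lipschitz functional of the obstacle field, so the map from obstacle errors to light-particle errors carries the Poisson factor $\mu_\var\var^{d-1}$ together with the correlation problems above, and there is no available uniform-in-$n$ stability estimate that would let the loop close by Gronwall. The paper avoids a particle-level fixed-point argument altogether: it linearizes both the PDE system and the particle system into the $k$-indexed hierarchies (\ref{sistk}) and (\ref{approxlevk}), and uses the monotone sandwiching property (Lemma \ref{sandwlem}, Corollary \ref{corminor}) to transfer convergence at each finite level $k$ (where the dynamics is non-self-consistent) to the original system, the $k\to\infty$ step being handled at the PDE level by the contraction (\ref{convergenze}). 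To make your one-step comparison rigorous you would either have to import this sandwiching/iteration structure, or prove a quantitative stability of the absorbing dynamics with respect to perturbations of the obstacle death times that is strong enough to beat the $1/\delta$ factors coming from the full Lemma \ref{3.2}; as written, the proposal asserts the conclusion of that missing estimate rather than proving it.
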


We shall see in paragraph \ref{app2} in the appendix that  hypothesis (\ref{conditio}), together with suitable regularity assumptions on $f_0$ (all included in our theorem), guarantees the following $\Scr{P}$-a.s. convergence
\begin{eqnarray}
\label{conditiomomj}
|v|^j \mu_n^0(x,v) q_n (x) &\deb& \Theta |v|^j f_0(x,v) \delta_0(x)\\
\nonumber\\
\nonumber
|v|^j \mu_n^0  &\deb& |v|^j f_0 \\
\nonumber\\
\nonumber
\otimes_{i=1}^{k} |v|^j \mu_n^0 &\deb& \otimes_{i=1}^{k}|v|^j f_0
\end{eqnarray}
for $j=0, 1,2,\ldots, P$, $k=1,\ldots Q$, with given positive integers $P,Q$.

\section{Definitions of the approximating systems and proof of the main theorem}

The main difficulty in studying (\ref{sistlim}) and its associated particle system (\ref{funris}),(\ref{sistpart}) is that we have to deal with a self-consistent problem.
The first step is therefore to find, both for the limit system (\ref{sistlim}) and for the particle system defined from (\ref{funris}),(\ref{sistpart}), suitable approximating systems which do not share this self-consistent structure.

We first recall here without proof, since it will be useful for the sequel and we shall use it largely in various steps of the convergence proof, lemma 3.2 in \cite{NOR}, concerning bounds of distances of stochastic variables of the form $\eta(t)=I_{\{S(t)<\tau\}}$ in terms of distances between their associated risk function $S$.

\begin{lemma}

\label{3.2}
Let $\tau>0$ be an exponentially distributed real variable and define, for  $i=1,2$, $t\in [0,T]$, $\eta_i(t)=I_{\{S_i(t)<\tau\}}$,
where $S_i, S$ are non negative, non decreasing, right continuous random functions. Then, for any $\delta>0$
\begin{equation}
\label{rischio}
|\eta_1(t)-\eta_2(t)|\leq \frac 1\delta |S_1(t)-S(t)| + \frac 1\delta |S_2(t)-S(t)|
+I_{\{|S(t)-\tau|<\delta\}}.
 \end{equation}
\end{lemma}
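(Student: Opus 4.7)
The plan is to reduce the bound to a deterministic, pathwise inequality between indicators, in which the monotonicity, right-continuity, and exponential distribution hypotheses play essentially no role; the only nontrivial ingredient is a real-number case analysis. My starting observation is that for each fixed $t$, $|\eta_1(t)-\eta_2(t)|$ is itself an indicator: it equals $1$ precisely when exactly one of $S_1(t)<\tau$ and $S_2(t)<\tau$ holds, i.e.\ when $\tau$ lies in the interval with endpoints $S_1(t)$ and $S_2(t)$. Thus the whole task reduces to dominating the single indicator $I_{\{\min(S_1(t),S_2(t))<\tau\leq\max(S_1(t),S_2(t))\}}$ by the three terms on the right-hand side of (\ref{rischio}).

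Next I would split this event according to whether $|S(t)-\tau|<\delta$ or not. The near case is absorbed directly into the last term of (\ref{rischio}). On the complementary far event $\{|S(t)-\tau|\geq\delta\}$, the key claim is that at least one of $|S_1(t)-S(t)|$, $|S_2(t)-S(t)|$ must itself be $\geq\delta$: assuming without loss of generality $S_1(t)\leq\tau\leq S_2(t)$, if $S(t)\geq\tau+\delta$ then $S(t)-S_1(t)\geq\delta$, while if $S(t)\leq\tau-\delta$ then $S_2(t)-S(t)\geq\delta$. Hence on the far event
\[
I_{\{\min(S_1,S_2)<\tau\leq\max(S_1,S_2)\}}\leq I_{\{|S_1(t)-S(t)|\geq\delta\}}+I_{\{|S_2(t)-S(t)|\geq\delta\}},
\]
and a final application of Markov's inequality in the form $I_{\{|S_i(t)-S(t)|\geq\delta\}}\leq\delta^{-1}|S_i(t)-S(t)|$ converts the right-hand side into the first two terms of (\ref{rischio}).

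No step is a real obstacle: the proof consists of an elementary indicator identity, a signed case analysis on three real numbers, and a pointwise Markov bound. The hypotheses on monotonicity, right-continuity, and the exponential distribution of $\tau$ are not used at all in this argument, which confirms that (\ref{rischio}) is really a deterministic pathwise inequality phrased in probabilistic language for its later use in controlling differences of the stochastic life functions $\xi_{n,\var,M}$ and $\eta_{n,\var,M}$.
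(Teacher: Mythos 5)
Your proof is correct: the identity $|\eta_1(t)-\eta_2(t)|=I_{\{\min(S_1(t),S_2(t))<\tau\leq\max(S_1(t),S_2(t))\}}$, the split on $\{|S(t)-\tau|<\delta\}$ versus its complement, the case analysis showing that on the far event one of $|S_i(t)-S(t)|\geq\delta$ must hold, and the pointwise bound $I_{\{|S_i(t)-S(t)|\geq\delta\}}\leq \delta^{-1}|S_i(t)-S(t)|$ together give exactly (\ref{rischio}). Note that the paper itself states this lemma without proof, quoting it from \cite{NOR}, so there is no in-paper argument to compare with; your pathwise argument is the standard one, and your observation that monotonicity, right-continuity and the exponential law of $\tau$ are not needed for the inequality itself (they matter only later, e.g.\ when taking $\E_\tau$ of the last term) is accurate.
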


\begin{rem}
\label{oss}
Whenever at least one or both risk functions $S_i$, $i=1,2$ are independent of the exponential time $\tau$, the bound simplifies in the first case as \mbox{$\E_\tau|\eta_1(t)-\eta_2(t)|\leq \frac1\delta\E_\tau|S_1(t)-S_2(t)| +2\delta$},
and  in the second case as 
$\E_\tau|\eta_1(t)-\eta_2(t)|\leq |S_1(t)-S_2(t)|$
.
\end{rem}

\subsection{Approximation of the limit system}
As shown in Appendix \ref{app1}, under suitable hypothesis on $f_0$, the solution of the system of equations (\ref{sistlim}) can be obtained as the $k\to\infty$ limit of the sequence of solutions of the sequence of linear
systems defined as:
\begin{eqnarray}
\label{sistk}
f^{(0)}(t,x,v)=f_0(x-vt,v),\quad\si^{(0)}(t,x)=\mu&&\\
\nonumber\\
\nonumber
\left\{
\begin{array}{l}
\pa_t f^{(k)} +v\cdot\nabla_x f^{(k)} = - C_d |v|\si^{(k-1)} f^{(k)} \\
\pa_t \si^{(k)}= -(\Theta\int_{\R^d} dv f^{(k-1)} ) \si^{(k)}\\
f(0,x,v)=f_0(x,v)\\
\si(0,x)=\mu
\end{array}
\right.&&
k=1\ldots \,\,\,
.
\end{eqnarray}

More precisely, the sequence of semiexplicit solutions of (\ref{sistk}) is, for $k=1,2,\ldots$,
\begin{equation}
\label{semiexplk}
\begin{split}
f^{(k)}(t,x,v)=f_0(x-vt,v) e^{-C_d |v| \int_0^t ds \si^{(k-1)} (s, x-v(t-s))}\\
\\
\si^{(k)}(t,x)=\mu e^{- \Theta \int_0^t ds \int_{\R^d} dv f^{(k-1)}(s,x,v)},\qquad\qquad\qquad\;
\end{split}
\end{equation}
 and we have, when $k\to\infty$ (cf. Appendix \ref{app1})
\begin{equation}
\label{convergenze}
\begin{array}{ll}
\|f-f^{(k)}\|_{L^{\infty}([0,T]\times\R^d\times\R^d)}\to 0\, ,
&
\|\int_{\R^d} dv (f-f^{(k)})\|_{L^{\infty}([0,T]\times\R^d)}\to 0 ,
\\
\\
\|\si-\si^{(k)}\|_{L^{\infty}([0,T]\times\R^d)}\to 0 .&
\end{array}
\end{equation}

The risk functions associated to the $k$-th system (\ref{sistk})
are defined as:
\begin{eqnarray}
\label{risoko}
\bar{V}^{(k)} (t, c) &=& \Theta \int_0^t ds \int_{\R^d} dv f^{(k-1)}(s, c, v)\\
\label{risokp}
\bar{U}^{(k)} (t, z) &=& C_d |T^t_2(z)| \int_0^t ds \si^{(k-1)}(s,T^s_1(z)),
\end{eqnarray}
and, because we shall need it later (cf. the definition of the system (\ref{cap})), we define $\bar{V}^{(0)} (t, c)=0$. Their associated life functions are:
\begin{equation}
\label{vitak}
\begin{split}
\bar{\eta}^{(k)} (t, c)=I_{\{ \bar{V}^{(k)} (t, c) < \tau\}}\\
\\
\bar{\xi}^{(k)} (t,z)=I_{\{\bar{U}^{(k)}(t, z)< \tau_p\}},
\end{split}
\end{equation}
with $\tau$ and $\tau_p$ exponentially distributed times. Of course, (\ref{semiexplk}) can be expressed in terms of $\bar{V}^{(k)}$ and $\bar{U}^{(k)}$ in a form analogous to (\ref{solsemialt}).

\subsection{Approximation of the particle system}\label{approximation}
In the same spirit, we may also approximate, for each $n$, the particle system described by (\ref{funris}),  (\ref{sistpart}) by a suitable sequence of systems. For  an initial datum $\mathbf{z}_n$ for the $n$ particles phase space position and configuration of obstacles $\mathbf{c}_M$ ($M= 0$ in the absence of obstacles), we define this system, for integers $M,k\geq 1$, $j=1,\ldots,n$ and $\ell=1,\ldots,M$,  as:

\begin{equation}
\label{approxlevk}
\begin{array}{l}
\xi^{(0)}_{n, \var, 0}(t,z_j)=\xi^{(k)}_{n, \var, 0}(t,z_j)=1
\\
\\
V^{(0)}_{n,\var, M} (t,c_\ell)=0, \quad
\eta^{(0)}_{n, \var, M} (t,c_\ell)=1,\quad \xi^{(0)}_{n, \var, M} (t,z_j)=1\\
\\
V^{(k)}_{n, \var, M} (t, c_\ell)= \cfrac{1}{n} \Som_{i=1}^n \int_0^t ds q_n (x_i (s) -c_\ell)\xi_{n, \var, M}^{(k-1)} (s,z_i) \\
\\
\eta^{(k)}_{n, \var, M} (t,c_\ell)=I_{\{V^{(k)}_{n, \var, M} (t, c_\ell) < \tau_\ell\}}\\
\\
\xi^{(k)}_{n, \var, M} (t,z_j)=I_{\{ x_j (s)\notin \Bigcup_{h=1}^M B_{\var}(c_h)\eta^{(k-1)}_{n,\var, M}(s,c_h)\,\,\,\forall \in [0,t)\}}.
\end{array}
\end{equation}
This sequence is a linearization of the original particle system (which is its formal limit when $k\to\infty$)  in the same  way as (\ref{sistk}) is a linearization of (\ref{sistlim}) and it satisfies what is called in \cite{NOR}  \textit{sandwiching} property (see (\ref{panino1})). This property implies, in a given asymptotics for $n$ and $\var$:

\begin{equation}
\label{impl}
\begin{array}{ll}
\E^n|V^{(k)}_{n,\var,M}-\bar{V}^{(k)}|
\to
0 &\implies\E^n|V_{n,\var,M} -V_L|
\to
0 \\
\\
\E^n|\xi_{n,\var, M}^{(k)}-\bar{\xi}^{(k)}|
\to
0& \implies \E^n\E_{\tau_p}|\xi_{n,\var, M}- \xi_L|
\to
0
\\
\\
\E^n\E_{\tau}|\eta_{n,\var, M}^{(k)}-\bar{\eta}^{(k)}|
\to
0 &\implies \E^n\E_{\tau}|\eta_{n,\var, M}- \eta_L|
\to
0\\
\\
\E^n|\int
dz \mu_n \phi (\xi_{n,\var, M}^{(k)} -\bar{\xi}^{(k)})|
\!\!\to\!
0&\implies\E^n\E_{\tau_p}
|\!\int\!\!
dz \mu_n \phi(\xi_{n,\var,M}- \xi_L)|
\!\to\!
0\\
\end{array}
\end{equation}
and using these implications we shall be able to bypass the direct evaluation of quantities  related to the particle system (\ref{funris}),(\ref{sistpart}).

The advantage in dealing with the  two approximating sequences of systems defined by  (\ref{sistk}) and (\ref{approxlevk})
instead of
the original systems
is that,
for each $k$, the two components evolve in a given field, associated to the functions defined at the previous step $k-1$ in the sequence, and the original self-consistent structure is lost.

The next step would be then to show that (\ref{sistk}) and (\ref{approxlevk}) are, for a given $k$, asymptotically equivalent. Unfortunately, the system defined from
(\ref{approxlevk}), when $\var$ is positive, still keeps strong correlations between all light particles and obstacles positions (in the phase space), correlations which are absent in (\ref{sistk}); this makes hard the direct comparison of the two systems. Therefore, we need to define an intermediate system in which correlations among light particles and obstacles are further reduced.
We shall prove then that this system is equivalent in the limit $n\to\infty$ both to (\ref{sistk}) and to (\ref{approxlevk}).

\subsection{A system asymptotically equivalent to both approximating systems}
For  an initial datum $\mathbf{z}_n$ for the $n$ particles phase space position, and for configurations of obstacles $\mathbf{c}_M$ ($M= 0$ in the absence of obstacles), we define, for integers $M,k\geq 1$, $j=1,\ldots,n$ and $\ell=1,\ldots,M$, the intermediate system in the following way

\begin{equation}
\label{cap}
\begin{array}{l}
\hat{\xi}^{(0)}_{n,\var,0} (t,z_j)=\hat{\xi}^{(k)}_{n,\var,0} (t,z_j)=1
\\
\\
\hat{\xi}^{(0)}_{n,\var,M} (t,z_j)=1 ,
\\
\\
{\hat\xi}^{(k)}_{n,\var,M} (t,z_j)=I_{\{x_j(s) \notin \Bigcup_{h=1}^{M} B_\var (c_h)I_{\{{\bar V}^{(k-1)}(s,c_h)< \tau_{h}\}}  \quad \forall s \in [0,t)\}}\\
{\hat A}^{(k)}_{n,\var,M} (t,c_\ell)= \cfrac 1n \Som_{i=1}^n \int_0^{t} ds\, q_n (x_i (s) -c_\ell) {\hat \xi}^{(k-1)}_{n,\var,M} (s, z_i)\\
\\
 {\hat\eta}^{(k)}_{n,\var,M} (t,c_{\ell})=I_{\{{\hat A}^{(k)}_{n,\var,M}(t,c_{\ell})< \tau_{\ell}\}}
\end{array}
\end{equation}
where $\tau_k$, $1\leq k \leq M$, are independent exponentially distributed times.

In (\ref{cap}), the life function of a light particle at level $k$ is defined through fictitious obstacles life functions ($I_{\{{\bar V}^{(k-1)}(t,c) < \tau_{c}\}}$), corresponding to the obstacles life functions at level $k-1$ associated to (\ref{sistk});
the life function of an obstacle at level $k$, $I_{\{\hat{A}^{(k)}_{n,\var, M}< \tau_c\}}$, is defined through the light particles life functions at level $k-1$. In this way, the correlation between light particles and obstacles is weaker, compared to the same correlation in system (\ref{approxlevk}), and
this allows us to prove, when $n$ and $\var$ verify conditions (\ref{conditio}) and (\ref{conditiovar}),
the convergence
to both systems  (\ref{sistk}) and (\ref{approxlevk})
in quadratic mean w.r.t. the expectation value $\E^n$.

\subsection{Proof of the main theorem}

We proceed now with the proof of our main theorem. To this purpose, we assume we have already proved the sandwiching property for the system (\ref{approxlevk}) and  the asymptotic equivalence of this last system to (\ref{sistk}): we shall postpone to next sections the proof of lemmas and propositions concerning these
two facts, since they are the core of the derivation; as pointed out in section \ref{approximation}, the convergence of (\ref{approxlevk}) to (\ref{sistk}) will be obtained passing through the equivalence in this asymptotics of (\ref{cap}) to both systems  (\ref{approxlevk}) and (\ref{sistk}).

\begin{proof}[Proof of Theorem \ref{teorema}]

Given the definitions in the previous paragraphs, we write
$$
\xi_{n,\var,M}=(\xi_{n,\var,M}-\xi_L) + (\xi_L - \bar{\xi}^{(k)}) +\bar{\xi}^{(k)}
$$
and
$$f=(f-f^{(k)})+f^{(k)}.$$
Then, using formula (\ref{solsemiexpl}) for $f$ and (\ref{semiexplk}) for $f^{(k)}$,
we can write,
for all $\phi\in C_b(\R^d\times\R^d)$ and for all $k\geq 1$:
\begin{eqnarray}
\nonumber
\lefteqn{
\E^n \left[\bigl|\frac 1n \sum_{i=1}^{n} \phi(T^t(z_i))\xi_{n,\var,M}(t,z_i)-\int_{\R^d\times\R^d} dx dv \phi(x,v) f(t, x, v)\bigr|\right]\leq}
\\
\label{fmu1}\\
\nonumber
&&
\E^n \E_{\tau_p}\left[\bigl|\frac 1n
\sum_{i=1}^{n} \phi(T^t(z_i))(\xi_{n,\var,M}
-\xi_L) (t,z_i)
\bigr|\right]+\\
\label{fmu2}\\
\nonumber
&&
\|\phi\|_{\infty}
\left\{\E^n \E_{\tau_p}\left[\frac 1n \sum_{j=1}^{n}|\xi_L
- \bar{\xi}^{(k)}|(t,z_j)
\right] \right.+\\
\nonumber
&&\left.
\int_{\R^d\times\R^d}\!\!\! dx dv C_d|v|f_0(x,v)\int_0^t ds |\si
 -\si^{(k-1)}|(s,x(s))
\right\}+\\
\label{fmu4}\\
\nonumber
&&\E^n \E_{\tau_p}\left[\bigl|\frac 1n \sum_{i=1}^{n}\phi(T^t(z_i))\bar{\xi}^{(k)}(t,z_i)-\!\!\!\!\int_{\R^d\times\R^d}
dx dv f^{(k)}(t,x,v) \phi(x, v)\bigr|\right].
\end{eqnarray}

The term (\ref{fmu1}) vanishes on a set of full measure w.r.t. $\Scr{P}$ as a consequence of the sandwiching property valid for (\ref{approxlevk}) (cf. Corollary \ref{corminor}, Sec. \ref{capitolopanino}) and of the
convergence of (\ref{approxlevk}) to (\ref{sistk}) in the asymptotics specified in (\ref{conditio}),
(\ref{conditiovar}) (cf. Proposition \ref{capbarfin}, Sec. \ref{capitolofinale}).

Thanks to (\ref{conditiomomj}), and in particular to the convergence $|v|\mu^0_n \deb |v|f_0$, the term in curly brackets (\ref{fmu2}) is bounded (excepted on a set of zero measure with respect to $\Scr{P}$) by
\mbox{$K_a \|\si-\si^{(k-1)}\|_{L^{\infty}([0,T]\times\R^d)}$}, so that, by choosing a suitable $k$, it can be made arbitrarily small because of (\ref{convergenze}).

Using Cauchy-Schwarz's inequality, we get then for (\ref{fmu4})
$$
\!\!\!\left|\E^n \E_{\tau_p}\left[\bigl|\frac 1n \sum_{i=1}^{n}\phi(T^t(z_i))\bar{\xi}^{(k)}(t,z_i)-\!\!\!
\int_{\R^d\times\R^d}\!\!\! dx dv f^{(k)}(t,x,v) \phi(x, v)\bigr|\right]\right|^2 \leq
$$
$$
\E^n \left[\bigl|\frac 1n \sum_{i=1}^{n} \phi(T^t(z_i))\E_{\tau_p}[\bar{\xi}^{(k)}(t,z_i)]\bigl|^2 \right]+\int_{\R^d\times\R^d}\!\!\!
dx dv f^{(k)}(t,x,v)\phi(x, v)\times
$$
$$
\times\left[\int_{\R^d\times\R^d}
dx dv f^{(k)}(t,x,v) \phi(x, v)- \frac 2n \sum_{i=1}^{n} \phi(T^t(z_i))\E_{\tau_p}[\bar{\xi}^{(k)}(t,z_i)]\right].
$$
Because of the hypothesis on $f_0$, $\phi(T^t(\cdot))\E_{\tau_p}[\bar{\xi}^{(k)}(t,\cdot)]\in C_b(\R^d\times\R^d)$, so that,
 thanks to (\ref{conditiomomj}) (and in particular to $\mu^0_n\deb f_0$),
on a set of full measure w.r.t. $\Scr{P}$ we have
$$\frac 1n \sum_{i=1}^{n} \phi(T^t(z_i))\E_{\tau_p}[\bar{\xi}^{(k)}(t,z_i)]\to
\int_{\R^d\times\R^d}
dx dv f^{(k)}(t,x,v)\phi(x, v),$$  and (\ref{fmu4}) vanishes on this set.

In the same way, by writing
$$\eta_{n,\var,M}=(\eta_{n,\var,M}-\eta_L)+(\eta_L -\bar{\eta}^{(k)})+\bar{\eta}^{(k)}$$
and
$$\si=(\si-\si^{(k)}) + \si^{(k)},$$ we get,
for all $\psi\in C_K(\R^d)$ and for all $k\geq 1$,
\begin{eqnarray}
\nonumber
\lefteqn{\E^n\left[\bigl|\var^{d-1} \sum_{i=1}^{M}\psi(c_i)\eta_{n,\var,M}(t, c_i)-\int_{\R^d} dx \psi(x)\si(t,x)\bigr|\right]\leq}
\\
\label{fsi1}\\
\nonumber
&&\|\psi\|_{\infty}\left\{\E^n \left[\var^{d-1} \sum_{i=1}^{M}I_{\mathrm{supp}\psi}(c_i)|\eta_{n,\var,M}
-\eta_L|(t,c_i)
\right]\right.+\\
\label{fsi2}\\
\nonumber
&&\left.\!\!\left( \E^n \left[\var^{d-1} \sum_{i=1}^{M}I_{\mathrm{supp}\psi}(c_i)|\bar{\eta}^{(k)}
-\eta_L|(t,c_i)
\right]+
\int_{\mathrm{supp} \psi}dx|\si^{(k)}
-\si|(t,x)\right)
\right\}\\
\label{fsi4}\\
\nonumber
&&+ \E^n \left[\bigl|\var^{d-1} \sum_{i=1}^{M}\psi(c_i)\bar{\eta}^{(k)}(t,c_i)-\int_{\R^d} dx \psi(x) \si^{(k)}(t,x) \bigr|\right],
\end{eqnarray}
where the term in round brackets (\ref{fsi2}) is (everywhere) bounded by 
$K_b\|\int dv (f-f^{(k-1)})\|_{L^{\infty}([0,T]\times\R^d)}$
and
can be made arbitrarily small because of (\ref{convergenze}).

The term (\ref{fsi1}) vanishes on a set of full measure w.r.t. $\Scr{P}$ for the same reasons as (\ref{fmu1}), i.e. as a consequence of the sandwiching property valid for (\ref{approxlevk}) and of the convergence of (\ref{approxlevk}) to (\ref{sistk}) in the asymptotics specified in (\ref{conditio}),
(\ref{conditiovar}).

As for (\ref{fsi4}), we can write
$$
\left|\E^n \left[\bigl|\var^{d-1} \sum_{i=1}^{M}\psi(c_i)\bar{\eta}^{(k)}(t,c_i)-\int_{\R^d} dx \psi(x) \si^{(k)}(t,x)\bigr|\right]\right|^2\leq
$$
$$
\left|\int_{\Lambda_n} dc \psi(c) \si^{(k)}(t,c)-\int_{\R^d} dc \psi(c) \si^{(k)}(t,c)\right|^2 + \var^{d-1}\int_{\Lambda_n} dc \psi^2(c) \si^{(k)}(t,c).
$$
Since $\psi\in C_K(\R^d)$, we have $\psi^h (\cdot)\si^{(k)}(t,\cdot)\in L^1(\R^d)$) for $h=1,2$ and when $n$ grows to infinity we have both $\Lambda_n \to \R^d$ and $\var\to0$,  so that this term vanishes, in the chosen asymptotics.

Collecting all the assertions about the different terms, the theorem is proved.
\end{proof}

\section{Asymptotic equivalence of (\ref{approxlevk}) and (\ref{sistk}) and convergence of the particle system to the limit system}

We collect in the present section the Lemmas and Propositions which will help us to build up the last step of the proof of our main theorem.

\subsection{A few useful bounds}
Let us list a few bounds and formulas which we shall use often in our
calculations.

\bigskip

It will be useful to adopt in next sections, for stochastic variables $\lambda^{\Cal{F}_{n,M}}$ of the form
$$
\lambda^{\Cal{F}_{n,M}}(t,z_j)=I_{\{x_j(s) \notin \bigcup\limits_{h=1}^{M} B_\var (c_h) I_{ \{\Cal{F}_{n,M}
(s,\, c_h)<\tau_h\}}   \quad \forall s \in [0,t)\}},
$$
(with $\Cal{F}_{n,M}(s,c_h)=\Cal{F}_{\mathbf{c}_{M}, \mathbf{\tau}_M, \mathbf{z}_n}(s,c_h)$ a given random function)
the following representation:
\begin{equation}
\label{formaxi}
\lambda^{\Cal{F}_{n,M}}(t,z_j)=
\prod_{h=1}^M \left(1-I_{\{c_h \in {\Cal T}_\var (t,z_j)\}} I_{\{\Cal{F}_{n,M}(T_{z_j, c_h},\,c_h) < \tau_{c_h}\}}\right).
\end{equation}

Given any couple of stochastic variables $\lambda^{\Cal{F}^{(1)}_{n,M}}$, $\lambda^{\Cal{F}^{(2)}_{n,M}}$ of the form (\ref{formaxi}), using
the trivial inequality
$$
|\prod_h I_{A_h} -\prod_h I_{B_h}|\leq
\sum_h |I_{A_h} - I_{B_h}|,
$$
for $M\geq 1$, we get the bound
\begin{equation}
\label{riduzione}
\left|\lambda^{\Cal{F}^{(1)}_{n,M}}_1
-\lambda^{\Cal{F}^{(2)}_{n,M}}_2 \right|(t,z_j)
\leq
\sum_{h=1}^M I_{c_h \in {\Cal T}_\var (t,z_j)}
|\eta^{\Cal{F}^{(1)}_{n,M}}
- \eta^{\Cal{F}^{(2)}_{n,M}}|(T_{z_j,c_h},c_h),
\end{equation}
where
$$
\eta^{\Cal{F}^{(i)}_{n,M}}(T_{z_j,c_h},c_h)= I_{\{\Cal{F}^{(i)}_{n,M}(T_{z_j,c_h},c_h) <\tau_{c_h}\}},\qquad i=1,2 .
$$
The bound (\ref{riduzione}) will be largely used in the last paragraph of this section.

\bigskip

Moreover, because of the hypothesis on $f_0$ (assigning initially uniformly bounded number and kinetic energy limit densities for the light particle component,
cf. Remark \ref{rdebotrasl} in Appendix \ref{app2}),
$\Scr{P}$-a.e. with respect to $Z_\infty$, we may use the bounds
\begin{eqnarray}
\label{somma1}
\|\frac 1n \sum_{j=1}^{n} q_n(x_{j}(s)- c)\|_{\infty}&\leq&K_1\\
\label{somma2}
\|\frac 1n \sum_{j=1}^{n} q_n(x_{j}(s)- c) [{\Cal T}_\var (s, z_{j})]_L\|_{\infty}
&\leq& \var^{d-1} C_d T K_2
\end{eqnarray}
where $K_1, K_2 >0$ are constant independent of $n$.

\subsection{Quantities related to flow tubes}
When bounding correlations in our particle system, we shall need to evaluate expectation values with respect to the Poisson distribution (\ref{poisson}) and to $\mu_n^0$ on volumes corresponding to the intersections of flow tubes.

Defining, for two vectors $v,w\in\R^d$, $\cos \alpha(v,w)=\frac{v\cdot w}{|v||w|}$ (and $\alpha_{ij}=\alpha(v_i,v_j)$), and recalling the
notation $B_r(p)$ introduced at the beginning of Section \ref{sezdue} to denote the ball of radius $r$ centered in $p$, we have, for the intersection of two flow tubes associated to the particles with initial phase space positions $z_i$ and $z_j$, the following trivial bound:
\begin{lemma}
\label{lemmatubo}
For any $\beta\in(0,\frac 1d)$:
\begin{equation}
\label{mistuflu2}
[{\Cal T}_\var (t,z_j)\cap {\Cal T}_\var (t,z_i)]_L\leq
\phantom{C_d \var^{d-1}\min(|v_i|, |v_j|)T I_{|\sin \alpha_{ij}|<\var^{\beta}}}
\end{equation}
$$
[B_2(0)]_L{\var}^{d(1-\beta)} I_{|\sin \alpha_{ij}|\geq\var^{\beta}}
+ C_d \var^{d-1}\min(|v_i|, |v_j|)T I_{|\sin \alpha_{ij}|<\var^{\beta}}.
$$
\end{lemma}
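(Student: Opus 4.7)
My plan is to split the proof according to the two indicator terms on the right-hand side of (\ref{mistuflu2}) and to handle each regime by elementary Euclidean geometry.

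In the almost-parallel regime $|\sin\alpha_{ij}|<\var^\beta$, I simply discard one of the two tubes and bound the intersection by $\min\{[\Cal{T}_\var(t,z_i)]_L,[\Cal{T}_\var(t,z_j)]_L\}$. Each flow tube is the Minkowski sum of the straight segment $\{x+sv:s\in[0,t)\}$ with $B_\var(0)$, so its Lebesgue measure is at most $\omega_{d-1}\var^{d-1}|v|t+\omega_d\var^d$, where $\omega_k$ denotes the volume of the unit ball in $\R^k$. A short spherical-coordinates computation gives $C_d=2\omega_{d-1}$, and the $O(\var^d)$ spherical-cap correction is absorbed into the leading term for $t\leq T$ and $\var$ small, producing the bound $C_d\var^{d-1}\min(|v_i|,|v_j|)T$.

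For the transversal regime $|\sin\alpha_{ij}|\geq\var^\beta$, my strategy is to show that the whole intersection is contained in a ball of radius $O(\var^{1-\beta})$; once the numerical constant is absorbed into the Lebesgue measure of the ball this yields $[B_2(0)]_L\var^{d(1-\beta)}$. Pick any two points $y_1,y_2$ in the intersection: by definition of the flow tube there exist $s_k,s_k'\in[0,t)$ and vectors $w_k^{(i)},w_k^{(j)}$ of norm at most $\var$ with $y_k=x_i+v_is_k+w_k^{(i)}=x_j+v_js_k'+w_k^{(j)}$ for $k=1,2$. Subtracting the two representations gives
\begin{equation*}
v_i(s_1-s_2)-v_j(s_1'-s_2')=(w_1^{(j)}-w_2^{(j)})-(w_1^{(i)}-w_2^{(i)}),
\end{equation*}
a vector of norm at most $4\var$. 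Projecting onto the unit vector in $\mathrm{span}(v_i,v_j)$ orthogonal to $v_j$ eliminates the $v_j$ term and yields $|v_i|\,|\sin\alpha_{ij}|\,|s_1-s_2|\leq 4\var$; inserting this into $y_1-y_2=v_i(s_1-s_2)+(w_1^{(i)}-w_2^{(i)})$ produces $|y_1-y_2|\leq 4\var/|\sin\alpha_{ij}|+2\var\leq(\textrm{const})\,\var^{1-\beta}$ for $\var$ small.

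The only subtle step is the choice of projection in the transversal case: one must pick a direction that isolates the unknown scalar $s_1-s_2$, which is legitimate precisely because $v_i$ and $v_j$ are linearly independent whenever $\sin\alpha_{ij}\neq 0$. Nicely, the factor $|v_i|$ produced by the projection cancels against the $|v_i|\,|s_1-s_2|$ contribution to $|y_1-y_2|$, so the final diameter bound is velocity-free, as demanded by (\ref{mistuflu2}). Everything else is elementary and the dimensional constants $[B_2(0)]_L$ and $C_d$ serve as convenient containers for the numerical factors arising in the estimates.
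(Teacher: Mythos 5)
Your proof is correct and proceeds essentially as the paper's: you use the same dichotomy, bounding the intersection by the smaller tube volume when $|\sin\alpha_{ij}|<\var^{\beta}$ and localizing it in a ball of radius $O(\var^{1-\beta})$ when $|\sin\alpha_{ij}|\geq\var^{\beta}$. The only difference is in the transversal case: the paper contains the whole intersection in the ball $B_{2\var/\sin\alpha_{ij}}(y)$ centered at the crossing point $y$ of the two (extended) trajectories, which recovers exactly the constant $[B_2(0)]_L$, whereas your two-point diameter estimate gives radius $4\var/|\sin\alpha_{ij}|+2\var$ and hence a larger dimensional constant than the one written in the statement — immaterial in practice, since the lemma is only ever invoked through unspecified constants $K$.
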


\begin{proof}

We have always
\begin{eqnarray*}
[{\Cal T}_\var (t,z_j)\cap {\Cal T}_\var (t,z_i)]_L&\leq& \min([{\Cal T}_\var (t,z_j)]_L , [{\Cal T}_\var (t,z_i)]_L)\\
\\
&=& C_d \var^{d-1}\min(|v_i|, |v_j|)t
\end{eqnarray*}
and, whenever $\sin \alpha_{ij}\neq 0$, denoting as $y$ the (unique) crossing point for the trajectories $x_i(s)$, $x_j(s)$ (i.e.
$y= x_i+v_i s_1 = x_j +v_j s_2 $ for some $s_1, s_2 \in \R$),
$$
{\Cal T}_\var (t,z_j)\cap {\Cal T}_\var (t,z_i)\subset B_{\frac{2\var}{\sin \alpha_{ij}}}(y).
$$

We can therefore write:

\begin{equation}
|{\mathcal T}_\var (t,z_j)\cap {\mathcal T}_\var (t,z_i)|\leq |B_2(0)|\left(\frac{\var}{\sin \alpha_{ij}}\right)^d \wedge C_d \var^{d-1}\min(|v_i|, |v_j|)t
\end{equation}
and (\ref{mistuflu2}) follows.

\end{proof}

\bigskip

Thanks to the weak convergence of the initial empirical measure toward a regular density, we can estimate the measure (w.r.t. $\mu_n^0$ ) of the set corresponding, for a given velocity $w\in\R^d$, to the grazing collisions. We can prove in fact  the following Lemma:
\begin{lemma}
\label{misI}
Let $\mu_n^0\deb f_0\in \Scr{S}(\R^d\times\R^d)$, $\beta>0$ and $w\in\R^d$. Assume (\ref{conditio}). Denoting
$\cos \alpha_{i}=\frac{v_i\cdot w}{|v_i||w|}$, the following bound
is verified
$$
\frac{1}{n}\sum_{i=1}^{n}
I_{|\sin \alpha_{i}|<\var^{\beta}}\leq K \var^{\zeta}+ o(a_n^{-d})
$$
for any $\zeta\in (0,\frac{d-1}{2} \beta)$.
\end{lemma}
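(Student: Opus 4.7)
The plan is to split $\tfrac{1}{n}\sum_{i=1}^n I_{|\sin\alpha_i|<\var^\beta}$ as its $\Scr{P}$-expectation plus a fluctuation, and bound each piece. For the expectation I would set $\bar f_0(v):=\int_{\R^d}f_0(x,v)\,dx$; since $f_0\in\Scr{S}(\R^d\times\R^d)$, $\bar f_0\in\Scr{S}(\R^d)$, and
$$
p_\var:=\E_{\Scr{P}}\bigl[I_{|\sin\alpha_i|<\var^\beta}\bigr]=\int_{\R^d}\bar f_0(v)\,I_{|\sin\alpha(v,w)|<\var^\beta}\,dv.
$$
Passing to spherical coordinates $v=r\omega$ with polar axis along $w$, the constraint depends only on the polar angle $\theta$, and since $\sin^{d-2}\theta\sim\theta^{d-2}$ near $\theta=0,\pi$ the two caps $\{|\sin\theta|<\var^\beta\}$ have total $S^{d-1}$ surface measure of order $\var^{(d-1)\beta}$; together with the Schwartz radial decay $\int_0^\infty r^{d-1}\sup_\omega|\bar f_0(r\omega)|\,dr<\infty$ this yields $p_\var\le K\var^{(d-1)\beta}$.

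For the fluctuation $\tfrac{1}{n}\sum I-p_\var$, I would invoke the $\Scr{P}$-a.s.\ convergences proved in Appendix~\ref{app2}, which under assumption (\ref{conditio}) supply the rate $o(a_n^{-d})$ for empirical averages against sufficiently regular test functions. Because $I_{|\sin\alpha|<\var^\beta}$ is not smooth I would first sandwich it between two radial smoothings $\chi^\pm_\var$ supported near the cone $\{|\sin\alpha|<\var^\beta\}$ and coinciding with the indicator on, respectively, slightly smaller/larger cones; the approximation gap $\int(\chi^+_\var-\chi^-_\var)\bar f_0\,dv$ is still $O(\var^{(d-1)\beta})$, while $|\tfrac{1}{n}\sum_i\chi^\pm_\var(v_i)-\int\chi^\pm_\var\bar f_0\,dv|$ is $o(a_n^{-d})$, $\Scr{P}$-almost surely.

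The main obstacle is reconciling the non-smoothness of the indicator with the regularity required by Appendix~\ref{app2}. A natural route is a Chebyshev--Borel-Cantelli argument applied directly to the i.i.d.\ sum of indicators: its variance is bounded by $p_\var/n\le K\var^{(d-1)\beta}/n$, so the typical $\Scr{P}$-a.s.\ deviation scales like $\sqrt{p_\var/n}\sim\var^{(d-1)\beta/2}n^{-1/2}$. The $L^2$-type norm of the cone indicator contributes only half a power of $\var$, and it is precisely this $L^2$ loss that forces the restriction $\zeta<\tfrac{d-1}{2}\beta$ in the conclusion. Since (\ref{conditio}) guarantees $a_n^d/\sqrt n\to0$, the product $a_n^d \var^{(d-1)\beta/2}n^{-1/2}$ tends to zero; hence the deviation is absorbed into the $o(a_n^{-d})$ term, and combining this with the mean bound $p_\var\le K\var^{(d-1)\beta}\le K\var^\zeta$ yields the claim.
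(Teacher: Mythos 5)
There is a genuine mismatch between your approach and what Lemma \ref{misI} actually asserts. The lemma is a \emph{deterministic} statement: it must hold for \emph{any} fixed sequence $Z_\infty$ whose empirical measures satisfy $\mu_n^0\deb f_0$ (the Remark following the lemma in the paper stresses exactly this, and the constants must be uniform in $w$ and $n$ because the bound is later applied with $w=v_j$). Under these hypotheses the $v_i$ are just deterministic points; nothing gives you an i.i.d.\ structure, so the objects your plan rests on --- the expectation $p_\var$, the variance bound $p_\var/n$, Chebyshev for the i.i.d.\ sum, Borel--Cantelli --- are not available. At best you would prove a $\Scr{P}$-a.s.\ variant for a fixed $w$, with a null set depending on $w$, which is weaker than the statement (and less convenient downstream, where $w=v_j$ is itself part of the random sequence). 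Moreover, even within the probabilistic reading the fluctuation step has a hole: a second-moment Chebyshev bound gives deviation probabilities of order $p_\var a_n^{2d}/n\lesssim n^{2\kappa-1}$, which is not summable for any $\kappa\in(0,\tfrac12)$, so Borel--Cantelli does not apply without higher moments (this is precisely why Lemma \ref{convergdebole} in Appendix \ref{app2} works with $j$-th moments for $j$ large); and Appendix \ref{app2} does not provide the generic ``$o(a_n^{-d})$ rate for empirical averages against sufficiently regular test functions'' you invoke --- its statements concern the specific products $q_n\mu_n^0$ with $q_n(x)=a_n^dq(a_nx)$, not families of $n$-dependent velocity cutoffs such as your $\chi^{\pm}_{\var_n}$.

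The paper's proof avoids all of this by never leaving the deterministic setting: the exactly-parallel velocities are split off and shown to contribute $o(a_n^{-d})$ (Toeplitz's lemma), and on the rest the indicator is bounded by $\var^{\iota\beta}|\sin\alpha_i|^{-\iota}$ with $\iota<\tfrac{d-1}{2}$ (a Markov-type bound in the angle), after which a smooth regularization $\bar R^{\delta}$ lets one use only the weak convergence $\mu_n^0\deb f_0$ against a fixed bounded smooth function, with constants uniform in $w$. This is also where the exponent restriction really comes from: the halving $\zeta<\tfrac{d-1}{2}\beta$ is the price of working with mere weak convergence via the power $\iota<\tfrac{d-1}{2}$. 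Your explanation attributes it to an ``$L^2$ fluctuation loss'', but in your own decomposition the mean is already $O(\var^{(d-1)\beta})$ and the deviation is absorbed into $o(a_n^{-d})$, so nothing in your argument produces that restriction --- a further sign that it is not capturing the mechanism of the lemma. If you want to salvage your route, you would have to either (i) restate the lemma as a $\Scr{P}$-a.s.\ claim and redo the fluctuation control with high moments, uniformly enough in $w$, or (ii) abandon the sampling picture and argue, as the paper does, directly from weak convergence.
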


\begin{proof}

We observe first that
we have
$
\Lim_{n\to\infty}\!\frac{\# \{j: |v_j\cdot w|=|w||v_j|\}}{n}\!=\!0,
$
 since \mbox{$f_0\!\in\!\!\Scr{S}(\!\R^d\!\times\!\R^d\!)$},
so that, uniformly in $\var$, the contribution of the corresponding term
vanishes when $n\to\infty$ and it is actually $o(a_n^{-d})$ in the prescribed asymptotics\footnote{In fact, since by Toeplitz's lemma, for all $a\in (0,1]$, $\frac1n\Som_{k=1}^n (\frac{\# \{j: |v_j\cdot w|=|w||v_j|\}}{k})^{a}$ vanishes when $n\to\infty$, we have at least  $\frac{\# \{j: |v_j\cdot w|=|w||v_j|\}}{n}= o(\frac{1}{\sqrt{n}})$ ).}.

We now evaluate the contribution of grazing crossings of particle trajectories.
We choose a  suitable (standard) $C^{\infty}$ regularization   $\bar{R}^{\delta}$ of $I_{|v\cdot w|\neq |v||w|}$, for instance $\bar{R}^{\delta}$ is s.t.:
$$
I_{|v\cdot w|\neq |v||w|}-\bar{R}^{\delta}\leq e^{-\frac{\sin^2 \alpha(v,w)}{\de^2(\de^2 -\sin^2 \alpha(v,w))}}I_{|\sin \alpha(v,w)|\leq \de}.
$$
Here the parameter $\delta$ denotes the radius of the set where the regularization differs from the original characteristic functions (i.e. the set in $\R^d\times\R^d$ s.t.  $|\sin \alpha(v,w)|\leq \de$),
and, for all $\delta>0$ and $0<\iota<\frac{d-1}{2}$, we write:
\begin{eqnarray*}
\lefteqn{\frac{1}{n}\sum_{i=1}^{n} I_{|v_i\cdot w|\neq |v_i||w|}
I_{|\sin \alpha_{i}|<\var^{\beta}}
\leq
\frac{1}{n} \sum_{i=1}^{n}
\frac{\var^{\iota\beta}}{(1-\frac{|v_i\cdot w|^2}{|v_i|^2|w|^2})^{\frac{\iota}{2}}}
{\bar{R}}^{\delta} (v_i) + o(\de)}\\
&&
\leq
\var^{\iota\beta}
\int_{\R^d\times\R^d} dx dv  \frac{f_{0}(x,v)
}{(1-\frac{|v\cdot w|^2}{|v|^2|w|^2})^{\frac{\iota}{2}}} + K_R \frac{\var^{\iota\beta}}{\de^{3+\iota}} + o(\de).
\end{eqnarray*}
The error coming from the regularization of the characteristic function $I_{|v\cdot w|\neq |v||w|}$ is $o(\de)$ because of the weak convergence of $\mu_n^0$ toward $f_0\in \Scr{S}(\R^d\times\R^d)$ and, for the same reason, all constants (here and in the following estimates) are uniform in $w$ and $n$ and depend only on the dimension $d$ and on few $L^p$ norms of $f_0$.
We obtain then, for suitable choices of
$\de$,
$$
\frac{1}{n}\sum_{i=1}^{n} I_{|v_i\cdot w|\neq |v_i||w|}
I_{|\sin \alpha_{i}|<\var^{\beta}}
\leq
K \var^{\zeta},
$$
for any $\zeta\in(0, \iota\beta)$ and therefore for any $\zeta\in(0,\frac{d-1}{2}\beta)$. Collecting the two estimates, the lemma is proved.
\end{proof}

We shall use Lemma \ref{misI} for free choices of $\beta\in(0, \frac 1d)$, so that $\zeta\in (0,\frac{d-1}{2} \beta)\subset (0,\frac12 -\frac1{2d})$.

\begin{rem}
The estimate obtained in Lemma \ref{misI} is valid for any given sequence of empirical measures $\{\mu_n^0\}$ such that $\mu_n^0\deb f_0\in \Scr{S}(\R^d\times\R^d)$ and it is actually a stronger assertion with respect to what we need to prove Theorem \ref{teorema} (which is valid $\Scr{P}-a.e.$ w.r.t. $Z_\infty$). In order to prove Lemma \ref{misI}, where, as stated in the introductory sentence to the lemma, we bound the measure of a set with respect to the empirical measure $\mu_n^0$,
we need to evaluate the measure of the set $\{(x,v): |v\cdot w|=|w||v|\}$ with respect to the measure $\mu_n^0$ because the measure of this set vanishes only asymptotically (i.e. with respect to the limit measure with density $f_0$). We choose to evaluate this measure with respect to the parameter $a_n^d$ (i.e. as $o(a_n^{-d})$) for further convenience.

A similar, simpler statement can be proved $\Scr{P}$-a.e. w.r.t. $Z_\infty$: in this case we obtain the bound $\frac{1}{n}\sum_{i=1}^{n}
I_{|\sin \alpha_{i}|<\var^{\beta}}\leq K \var^{\zeta}$ ($\Scr{P}$-a.e. w.r.t. $Z_\infty$).
This alternative statement could be used instead of Lemma \ref{misI} to prove our main theorem.
We prefer nevertheless to use Lemma \ref{misI}, getting in this way bounds which are (as much as possible) valid on the whole subset of initial sequences $Z_{\infty}$ such that the associated sequence of empirical measures $\{\mu_n^0\}_{n=1}^{\infty}$ converges weakly to $f_0\in \Scr{S}(\R^d\times\R^d)$.
\end{rem}

\begin{rem}
Notice that, since $\frac{1}{n^2}\Som_{i,j: |v_i\cdot v_j|=|v_i||v_j|}1=o (a_n^{-d})$ and all bounds are uniform in the velocity $w$, we get also
\begin{equation}
\label{I}
\frac{1}{n^2}\sum_{i,j=1}^{n}
I_{|\sin \alpha_{ij}|\leq
\var^{\beta}}\leq
\frac1n\sum_{j=1}^{n}
(\frac1n\sum_{i=1}^{n} I_{|\sin \alpha_{ij}|\leq
\var^{\beta}})\leq K \var^{\zeta}+ o(a_n^{-d}).
\end{equation}
\end{rem}

\bigskip

A last very useful tool will be the following parametrization of the points \small{\mbox{$c\!\in\!\Cal T_\var \!(t,\!z)$}}. Let $u=T_{z,c}$ be the maximal collision time defined in
(\ref{tcollis}); then we may define the change of variables
$$
c=x (u) + \var \om, \quad u \in [0,t),\quad \om \in \pa B_1 (0).
$$
Using this parametrization, for each  non negative function $V\in L^{\infty}([0,T]; W^{1,\infty}(\R^d))$ we can write $V(T_{z,c},c)=V\left(u,x (u) + \var \omega \right)$, getting then

\begin{equation}
\label{camvar}
\int_{\Cal T_\varepsilon (t,z)}\!\!\! dc e^{-V(T_{z,c},c)}=
\varepsilon^{d-1} C_d |v|\left(\int_0^t du [e^{-V(u,x (u))} + \phi_{\var}(u,x(u))]\right)
\end{equation}
\noindent
where we may bound the remainder through the trivial inequality \mbox{$|e^{-y}-e^{-x}|\leq |x-y|$}, for $x,y\geq 0$, and the function
$\phi_{\var}$ is such that $\sup_{[0,T]}\|\phi_{\var}\|_{L^{\infty}(\R^d)}=O(\var)\to 0$ when $\var\to 0$.

\subsection{The sandwiching property } \label{capitolopanino}
We prove now the sandwiching property for the system defined by (\ref{approxlevk}) and its implication on the convergence of (\ref{sistpart}).

\begin{lemma} (sandwiching property)
\label{sandwlem}
Consider $ V_{n, \var, M}$,  $\xi_{n, \var, M}$, $\eta_{n, \var, M}$ defined by (\ref{funris}), (\ref{sistpart}) and $\xi_{n, \var, M}^{(k)}$, $V_{n, \var, M}^{(k)}$, $\eta^{(k)}_{n, \var, M}$ defined by  (\ref{approxlevk}). Then, for $k=1,2\ldots$,
\begin{equation}
\label{panino1}
\begin{split}
\xi_{n,\var, M}^{(2k-1)} \leq \xi_{n, \var, M}^{(2k+1)} \leq \xi_{n,\var,M} \leq \xi_{n,\var, M}^{(2k)} \leq \xi_{n, \var, M}^{(2k-2)}\\
\\
V_{n, \var, M}^{(2k-2)} \leq V_{n,\var, M}^{(2k)} \leq V_{n,\var,M} \leq V_{n, \var, M}^{(2k+1)} \leq V_{n, \var, M}^{(2k-1)}\\
\\
\eta^{(2k-1)}_{n, \var, M}\leq \eta^{(2k+1)}_{n,\var, M} \leq \eta_{n, \var, M}\leq \eta^{(2k)}_{n, \var, M} \leq \eta^{(2k-2)}_{n, \var, M}.
\end{split}
\end{equation}
\end{lemma}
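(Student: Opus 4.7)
The plan is to base the sandwiching property on the monotonicity of each of the three building blocks in (\ref{approxlevk}), combined with a short induction on the level $k$. First I would isolate the three maps that appear in the recursion. Let $\Phi_V:\xi\mapsto V$ be defined by $V(t,c)=\frac1n\Som_{i=1}^n\int_0^t q_n(x_i(s)-c)\xi(s,z_i)\,ds$; this map is monotone increasing in $\xi$ because $q_n\geq 0$. Let $\Phi_\eta:V\mapsto\eta$ be defined by $\eta(t,c)=I_{\{V(t,c)<\tau_c\}}$; this is monotone decreasing in $V$ (larger risk, smaller survival). Let $\Phi_\xi:\eta\mapsto\xi$ be defined by $\xi(t,z)=I_{\{x(s)\notin\Bigcup_{h=1}^M B_\var(c_h)\eta(s,c_h)\,\forall s\in[0,t)\}}$; this is monotone decreasing in $\eta$, since enlarging $\eta$ can only enlarge the forbidden region along the trajectory. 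All three comparisons are pointwise in the arguments and for every realization of $(\mathbf{c}_M,\mbox{\Large{$\tau$}}_\infty)$.

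The key observation is that the composition $G:=\Phi_\xi\circ\Phi_\eta\circ\Phi_V$ is monotone \emph{increasing} (one increasing and two decreasing maps). Setting, by convention, $\xi^{(-1)}_{n,\var,M}\equiv 0$ (so that $V^{(0)}=\Phi_V(0)\equiv 0$ and $\eta^{(0)}=\Phi_\eta(0)\equiv 1$, as prescribed by (\ref{approxlevk})), the recursion reads $\xi^{(k+1)}_{n,\var,M}=G(\xi^{(k-1)}_{n,\var,M})$ for all $k\geq 0$, while the original system (\ref{sistpart}) is the fixed point: $\xi_{n,\var,M}=G(\xi_{n,\var,M})$. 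The sandwich for $\xi$ then follows by induction. In the base step, $\xi^{(1)}=G(0)\leq G(\xi_{n,\var,M})=\xi_{n,\var,M}$ and $\xi^{(2)}=G(\xi^{(0)})=G(1)\geq G(\xi_{n,\var,M})=\xi_{n,\var,M}$, while $\xi^{(2)}\leq 1=\xi^{(0)}$ and $\xi^{(1)}\geq 0=\xi^{(-1)}$ give $\xi^{(3)}=G(\xi^{(1)})\geq G(\xi^{(-1)})=\xi^{(1)}$ and $\xi^{(4)}=G(\xi^{(2)})\leq G(\xi^{(0)})=\xi^{(2)}$. The inductive step consists in applying the monotone map $G$ to the already-proved chain $\xi^{(2k-1)}\leq\xi^{(2k+1)}\leq\xi_{n,\var,M}\leq\xi^{(2k)}\leq\xi^{(2k-2)}$, which immediately yields the same chain at level $k+1$.

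Once the chain for $\xi^{(k)}$ is established, the corresponding chains for $V^{(k)}$ and $\eta^{(k)}$ are immediate: since $V^{(k+1)}=\Phi_V(\xi^{(k)})$ and $\Phi_V$ is monotone increasing, the $\xi$-sandwich is transported to a $V$-sandwich (with indices shifted by one, which accounts for the parity flip between the $\xi$ and $V$ chains displayed in (\ref{panino1})); then $\eta^{(k)}=\Phi_\eta(V^{(k)})$ and $\Phi_\eta$ is monotone decreasing, so the $V$-ordering is reversed and gives the $\eta$-sandwich. The only bookkeeping point one has to check is the consistency of the formal device $\xi^{(-1)}\equiv 0$ with $V^{(0)}\equiv 0$ and $\eta^{(0)}\equiv 1$, which is trivial from $\Phi_V(0)\equiv 0$ and $\Phi_\eta(0)\equiv 1$. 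There is no essential obstacle beyond this: the proof is a pure monotone-iteration argument around a fixed point, and its main virtue for the rest of the paper is that it reduces questions about the fully self-consistent system (\ref{funris})--(\ref{sistpart}) to questions about the linear approximants at any given level $k$, as exploited in (\ref{impl}).
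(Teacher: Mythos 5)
Your proof is correct: the three monotonicity facts (\(\Phi_V\) increasing, \(\Phi_\eta\) and \(\Phi_\xi\) decreasing, hence \(G=\Phi_\xi\circ\Phi_\eta\circ\Phi_V\) increasing with \(\xi_{n,\var,M}\) a fixed point) together with the induction on the parity of the level are exactly the mechanism of the paper's proof, which just writes the same monotone iteration out as explicit chains of implications \(V\Rightarrow\eta\Rightarrow\xi\Rightarrow V\) starting from \(0=V^{(0)}\le V_{n,\var,M}\le V^{(1)}\). So this is essentially the same argument, merely packaged in operator form.
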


\begin{proof}
From definitions (\ref{funris}), (\ref{sistpart}) and (\ref{approxlevk}), we have  $0=V^{(0)}_{n, \var, M} \leq V_{n, \var, M} \leq V^{(1)}_{n,\var, M}$, which implies  $\eta^{(0)}_{n,\var, M}\geq \eta_{n,\var, M}\geq \eta^{(1)}_{n,\var, M}$, which in turn implies $ \xi^{(1)}_{n,\var, M}\leq \xi_{n,\var, M}\leq \xi^{(2)}_{n,\var, M}\leq \xi^{(0)}_{n,\var, M}=1$.

We have therefore, using again these definitions:
$$
\xi_{n,\var, M}^{(1)} \leq \xi_{n, \var, M}^{(3)} \leq \xi_{n,\var,M} \leq \xi_{n,\var, M}^{(2)} \leq \xi_{n, \var, M}^{(0)}
$$
$$
V_{n, \var, M}^{(0)} \leq V_{n,\var, M}^{(2)} \leq V_{n,\var,M} \leq V_{n, \var, M}^{(3)} \leq V_{n, \var, M}^{(1)}
$$
$$
\eta^{(1)}_{n, \var, M}\leq \eta^{(3)}_{n,\var, M} \leq \eta_{n, \var, M}\leq \eta^{(2)}_{n, \var, M} \leq \eta^{(0)}_{n, \var, M}
$$
and (\ref{panino1}) is valid for $k=1$.

Since the following chain is also valid:
\begin{eqnarray*}
V^{(2k-2)}_{n,\var, M}&\leq& V^{(2k)}_{n,\var, M} \leq V_{n, \var, M}
 \leq V^{(2k+1)}_{n, \var, M} \leq V^{(2k-1)}_{n, \var, M} \implies\\
\\
\eta^{(2k-2)}_{n, \var, M}&\geq& \eta^{(2k)}_{n,\var, M} \geq \eta_{n, \var, M}\geq \eta^{(2k+1)}_{n, \var, M} \geq \eta^{(2k-1)}_{n, \var, M} \implies\\
\\
\xi^{(2k-1)}_{n,\var, M}&\leq& \xi^{(2k+1)}_{n, \var, M} \leq  \xi_{n, \var, M} \leq \xi^{(2k+2)}_{n,\var, M} \leq \xi^{(2k)}_{n, \var, M}\implies\\
\\
V^{(2k)}_{n, \var, M}&\leq& V^{(2k+2)}_{n, \var, M} \leq  V_{n, \var, M} \leq V^{(2k+3)}_{n, \var, M} \leq V^{(2k+1)}_{n, \var, M}
\end{eqnarray*}
the statement of the proposition follows by induction.
\end{proof}

We may then prove the following corollary to Lemma \ref{sandwlem}:
\begin{corol}
\label{corminor}
Consider $ V_{n,\var,M}$,  $\xi_{n,\var,M}$, $\eta_{n, \var, M}$ and $V_{n, \var, M}^{(k)}$, $\xi_{n,\var, M}^{(k)}$, $\eta^{(k)}_{n, \var, M}$  defined as in Lemma \ref{sandwlem} and $V_L$ and $\xi_L$ given by (\ref{risslo}) and (\ref{csisl}), with $(f,\si)$ unique solution of (\ref{sistlim}).
Then, when $n\to \infty, \var\to 0$, for any sub-linear operators $L$ on
$L^{\infty}([0,T]\times\R^d\times\R^d)$

$$L|V^{(k)}_{n,\var,M}-\bar{V}^{(k)}|\to 0\implies
L|V_{n,\var,M} -V_L|\to 0
$$

$$
L|\xi_{n,\var, M}^{(k)} -\bar{\xi}^{(k)}|\to 0 \implies L
\E_{\tau_p}|\xi_{n,\var,M}- \xi_L|\to 0
$$

$$
L\E_\tau|\eta_{n,\var, M}^{(k)} -\bar{\eta}^{(k)}|\to 0 \implies L
\E_\tau|\eta_{n,\var,M}- \eta_L|\to 0
$$

$$
L|\int_{\R^d\times\R^d}\!\!\!\!\!\!\!\! dz \mu^{0}_n \phi (\xi_{n,\var, M}^{(k)} -\bar{\xi}^{(k)})|\to 0 \implies L\E_{\tau_p}
|\int_{\R^d\times\R^d}\!\!\!\!\!\!\!\! dz \mu^{0}_n \phi(\xi_{n,\var,M}- \xi_L)|\to 0 .
$$
\end{corol}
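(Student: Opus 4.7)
The plan is to combine three ingredients that are already in place: the monotone sandwich of Lemma \ref{sandwlem}, the triangle inequality with insertion of the $k$-th linearised quantities $\bar V^{(k)},\bar\xi^{(k)},\bar\eta^{(k)}$, and the $L^{\infty}$-convergence of the latter to $V_L,\xi_L,\eta_L$ (which follows from (\ref{risoko}), (\ref{risokp}) and (\ref{convergenze})). No new estimate beyond these will be needed; Remark \ref{oss} is used to dispose of the deterministic life-function residuals in the simplest possible way.

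\textbf{Step 1 (sandwich).} From Lemma \ref{sandwlem}, for every $k\geq 1$, $V^{(2k)}_{n,\var,M}\leq V_{n,\var,M}\leq V^{(2k+1)}_{n,\var,M}$, $\xi^{(2k+1)}_{n,\var,M}\leq \xi_{n,\var,M}\leq \xi^{(2k)}_{n,\var,M}$, and the analogous chain for $\eta$. Hence, for $X\in\{V,\xi,\eta\}$ with corresponding limit $Y_L$, the difference $X_{n,\var,M}-Y_L$ lies in the interval delimited by the two level-$k$ differences, so
$$
|X_{n,\var,M}-Y_L|\leq |X^{(2k)}_{n,\var,M}-Y_L|+|X^{(2k+1)}_{n,\var,M}-Y_L|.
$$

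\textbf{Step 2 (triangle plus Remark \ref{oss}).} For $j\in\{2k,2k+1\}$ I insert $\pm \bar X^{(j)}$ to obtain $|X^{(j)}_{n,\var,M}-Y_L|\leq |X^{(j)}_{n,\var,M}-\bar X^{(j)}|+|\bar X^{(j)}-Y_L|$. The deterministic residual $|\bar V^{(j)}-V_L|$ has $L^{\infty}$-norm bounded by $\Theta T\,\|\int dv(f-f^{(j-1)})\|_{L^{\infty}}$ by (\ref{risoko}), which vanishes as $j\to\infty$ thanks to (\ref{convergenze}). For the life-function residuals $\bar\xi^{(j)}-\xi_L$ and $\bar\eta^{(j)}-\eta_L$ the two risk functions entering are both deterministic, so Remark \ref{oss} (the second case) gives the clean pointwise bounds
$$
\E_{\tau_p}|\bar\xi^{(j)}-\xi_L|\leq |\bar U^{(j)}-U_L|,\qquad \E_{\tau}|\bar\eta^{(j)}-\eta_L|\leq |\bar V^{(j)}-V_L|,
$$
whose $L^{\infty}$-norms again tend to $0$ as $j\to\infty$ by (\ref{convergenze}) and (\ref{risokp}).

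\textbf{Step 3 (passing to the limit in $n$ and in $k$).} Applying the sublinear $L$ (resp.\ $L\E_\tau$, $L\E_{\tau_p}$) to the bounds of Steps 1--2, the ``particle-vs-linearised'' terms $L|V^{(j)}_{n,\var,M}-\bar V^{(j)}|$, $L|\xi^{(j)}_{n,\var,M}-\bar \xi^{(j)}|$, $L\E_\tau|\eta^{(j)}_{n,\var,M}-\bar\eta^{(j)}|$ vanish as $n\to\infty$, $\var\to0$ for each \emph{fixed} $k$ by the hypothesis of the corollary. Taking $\limsup_{n\to\infty}$ leaves only residuals controlled by $\|\bar V^{(j)}-V_L\|_{\infty}$ and $\|\bar U^{(j)}-U_L\|_{\infty}$ evaluated at $j=2k, 2k+1$; since $k$ is arbitrary and these norms tend to $0$, the limsup is $0$, which proves the first three implications. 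The fourth implication is handled by the same scheme after decomposing $\phi=\phi_+ -\phi_-$ into positive and negative parts so that each integral $\int dz\,\mu^0_n\phi_\pm(\xi_{n,\var,M}-\xi_L)$ is again pinched between its level-$2k$ and level-$2k+1$ counterparts; adding and subtracting $\bar\xi^{(j)}$, commuting $\E_{\tau_p}$ past the non-negative integrand $\mu^0_n\phi_\pm$, and using Remark \ref{oss}, the residual is bounded by $\|\bar U^{(j)}-U_L\|_\infty \int dz\,\mu^0_n|\phi|$, whose second factor has a finite $\Scr{P}$-a.s.\ limit thanks to (\ref{conditiomomj}) and $\phi\in C_b$.

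The main obstacle is purely organisational and not analytic: one must apply $\E_\tau$ (or $\E_{\tau_p}$) at the precise moment when the life-function difference involved is $\bar\eta^{(j)}-\eta_L$ or $\bar\xi^{(j)}-\xi_L$, whose risk functions are deterministic and to which Remark \ref{oss} applies in its simplest ``no-$\delta$'' form; attempting to integrate out the exponentials at the level of the particle-system difference would introduce a $\delta$-term of uncontrolled sign. Once this order of operations is respected, the proof reduces to three lines: sandwich, triangle, and (\ref{convergenze}).
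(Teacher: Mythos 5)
Your strategy is the same as the paper's: pinch each particle--system quantity between its even and odd linearised levels via Lemma \ref{sandwlem}, insert $\bar V^{(j)},\bar\xi^{(j)},\bar\eta^{(j)}$ by the triangle inequality, dispose of the particle--versus--linearised terms by the hypothesis of the corollary, and of the limit--side residuals by (\ref{convergenze}) together with Remark \ref{oss}. (The paper phrases the sandwich through the nonnegative surrogates $V$, $1-\xi$, $1-\eta$, $\int dz\,\mu_n^0\phi_\pm\,\xi$, but that is cosmetic; your direct version of the pinching inequality is the same argument.) The $V$- and $\eta$-implications are handled correctly, since $\bar V^{(j)}-V_L$ is genuinely small in $L^\infty$ and $\E_\tau|\bar\eta^{(j)}-\eta_L|\leq|\bar V^{(j)}-V_L|$.

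There is, however, one step that fails as written: the claim that $\E_{\tau_p}|\bar\xi^{(j)}-\xi_L|\leq|\bar U^{(j)}-U_L|$ has $L^{\infty}$-norm tending to zero, and its reuse in the fourth implication through the factor $\|\bar U^{(j)}-U_L\|_\infty$. By (\ref{csisl}) and (\ref{risokp}), $\bar U^{(j)}(t,z)-U_L(t,z)=C_d|v|\int_0^t ds\,(\si^{(j-1)}-\si)(s,x+vs)$ carries the unbounded factor $|v|$, so its supremum over $[0,T]\times\R^d\times\R^d$ is not small for any $j$; (\ref{convergenze}) only controls $\|\si-\si^{(j-1)}\|_\infty$. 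The correct statement, and the one the paper invokes, is that $\E_{\tau_p}|\bar\xi^{(j)}-\xi_L|=|e^{-\bar U^{(j)}}-e^{-U_L}|$ is uniformly small: since $\si,\si^{(j-1)}\geq\mu e^{-\Theta T\|\int dv f_0\|_\infty}=:c>0$ uniformly in $j$, one gets $|e^{-\bar U^{(j)}}-e^{-U_L}|\leq C_d|v|t\,\|\si-\si^{(j-1)}\|_\infty e^{-cC_d|v|t}\leq (ce)^{-1}\|\si-\si^{(j-1)}\|_\infty$, i.e. the smallness must be extracted at the level of the exponentials, not of the risk functions. Alternatively, as the paper does for the analogous term (\ref{fmu2}) in the main theorem, keep the pointwise bound $C_d|v|\,T\|\si-\si^{(j-1)}\|_\infty$ and absorb the $|v|$ inside the operator: in the fourth implication bound the residual by $C_dT\|\si-\si^{(j-1)}\|_\infty\|\phi\|_\infty\int dz\,|v|\mu_n^0(z)$ and use $|v|\mu_n^0\deb|v|f_0$ from (\ref{conditiomomj}); your factorisation $\|\bar U^{(j)}-U_L\|_\infty\int dz\,\mu_n^0|\phi|$ pulls the supremum out at the wrong place. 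With either repair the rest of your argument closes and reproduces the paper's proof.
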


\begin{proof}

We denote here by $\phi_+=\phi\wedge 0$,  $\phi_-=-\phi\vee 0$ resp. the positive and the negative part of $\phi$.

Let $\Cal{F}_{n,\var,M}$ be one among the non negative functions $V_{n,\var,M}$, \mbox{$1-\xi_{n, \var, M}$}, \mbox{$1-\eta_{n,\var, M}$}, $\int dz \mu^{0}_n \phi_+\, \xi_{n, \var, M}$, $\int dz \mu^{0}_n \phi_-\, \xi_{n, \var, M}$ and $\Cal{F}^{(k)}_{n,\var,M}$ its approximant of order $k$ (defined through (\ref{approxlevk})). Let $\Cal{F}_L$ be its associate limit function (defined through (\ref{risslo}) and (\ref{csisl})) and $\Cal{F}^{(k)}$ the approximant of $\Cal{F}_L$ (defined through (\ref{risoko}),(\ref{risokp}) and (\ref{vitak})).

Because of (\ref{panino1}), we have, for a given $k\geq 1$:

$$
\Cal{F}_{n,\var,M}^{(2k)}-\Cal{F}_L\leq\Cal{F}_{n,\var,M}  -\Cal{F}_L\leq \Cal{F}_{n,\var,M}^{(2k+1)}-\Cal{F}_L ,
$$

and therefore,

$$
|\Cal{F}_{n,\var,M}  -\Cal{F}_L|\leq |\Cal{F}_L- \Cal{F}^{(2k)}|+
|\Cal{F}_L- \Cal{F}^{(2k+1)}|+
$$

$$
\phantom{|\Cal{F}_{n,\var,M}  -\Cal{F}_L|+}|\Cal{F}^{(2k+1)}-\Cal{F}^{(2k+1)}_{n,\var,M}|+
|\Cal{F}^{(2k)}-\Cal{F}^{(2k)}_{n,\var,M}|.
$$

Since $\|V_L-\bar{V}^{(k)}\|_{L^{\infty}([0,T]\times\R^d)}$ and
$\|\E_{\tau_p}[\xi_L-\bar{\xi}^{(k)}]\|_{L^{\infty}([0,T]\times\R^d\times\R^d)}$ vanish in the $k\to\infty$ limit (cf. (\ref{convergenze})), and  since $\E_{\tau_p}[|\eta_L-\bar{\eta}^{(k)}|]\leq |V_L-\bar{V}^{(k)}|$ (cf. Remark \ref{oss}), the assertion is proved.
\end{proof}

As already pointed out, Corollary \ref{corminor} is valid in particular when $L$ is an expectation value operator.

\bigskip

\subsection{Equivalence between (\ref{cap}) and (\ref{sistk})}
\subsubsection{Motion of light particles in a decaying medium: equivalence of the light particle components of (\ref{cap}) and (\ref{sistk})}
We prove here a lemma which describes the behavior of the light particle component of the system (\ref{cap})  in the  $n\to\infty$ limit. In this lemma, we consider a particle system such that  the rate of death of the obstacles is given and independent from the light particles component, while the light particles are instantaneously absorbed each time they collide with an obstacle. The correspondent limit system consists of equations which are coupled only in a very weak way and the parameters determining the asymptotics are independent.

Let $g(t,c)\in L^1([0,T]; W^{1,\infty}(\R^d))$ be a non negative function. The risk function $V(t,c)$ defined, for $t\leq T$, by
$$
V(t,c)=\int_0^t g (s, c)ds
$$
is non decreasing as a function of $t$.

As in the previous sections, we define:
\begin{equation}
\label{campomedio}
\begin{array}{l}
\eta^V (t, c_h)= I_{\{V(t,c_h) < \tau_h\}}\\
\\
\xi^V_{n,\var,M} (t, z_i )=I_{\{ x_i (s)\notin \Bigcup_{k=1}^M B_{\var}(c_k)\eta^V (s,c_k)\,\,\,\forall \in [0,t)\}} .
\end{array}
\end{equation}

We have then:

\begin{lemma}
\label{lemmauno1}
Let $g(t,c)\in L^1([0,T]; W^{1,\infty}(\R^d))$ and $\xi^V_{n,\var, M}$ be defined by (\ref{campomedio}). Consider the sequence $Z_{\infty}\in (\R^d\times\R^d)^{\infty}$ and its associated sequence of empirical measures $\{\mu_n^0\}_{n=1}^{\infty}$. Assume $\mu_n^0 \deb f_0\in \Scr{S}$ and $|v|\mu^0_n \deb |v| f_0\in \Scr{S}$.
Then $\forall \phi \in C_b(\R^d\times\R^d)$,
$$
\Lim_{\lmul{n \rightarrow \infty\\\var\to 0}} \E^n \left[\left|\frac 1n \sum_{i=1}^n \xi^V_{n,\var,M} (t,z_i) \phi(T^t(z_i)) -\!\!\int_{\R^d\times\R^d}\!\!\!\!\! dx dv f(t,x, v) \phi(x,v)\right|^2\right]\!\!=0
$$
(independently of the order of the limits),
where
$(f,\sigma)$ is the (unique) solution of the following system of partial differential equations:
$$
\left\{
\begin{array}{l}
\pa_t f +v\cdot\nabla_x f = -|v| C_d  \si f \\
\pa_t \si= - g \si\\
f(0,x,v)=f_0 (x,v)\\
\si(0,x)=\mu.
\end{array}
\right.
$$
\end{lemma}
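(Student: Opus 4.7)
The plan is to reduce the claim to a mean-plus-variance argument, exploiting the crucial fact that in this auxiliary system the obstacles' dynamics is given (independent of the light particles), so that each light particle moves in an \emph{independent} stochastic environment up to the coupling through shared centers and exponential times.

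First I would compute the one-point average. Using the product representation (\ref{formaxi}) with $\Cal F_{n,M}(s,c_h)=V(s,c_h)$ together with the independence of the $\tau_h$'s and $\E_\tau[I_{\{V<\tau_h\}}]=e^{-V}$, followed by the standard Poisson exponential formula for
$\E^n_c[\prod_h (1-I_{\{c_h\in \Cal T_\var(t,z_i)\}}e^{-V(T_{z_i,c_h},c_h)})]$, I obtain
$$
\E^n[\xi^V_{n,\var,M}(t,z_i)]=\exp\Bigl(-\mu_\var\int_{\Cal T_\var(t,z_i)} e^{-V(T_{z_i,c},c)}\,dc\Bigr).
$$
Applying (\ref{BG}) and the change of variables (\ref{camvar}), and noticing that $\sigma(t,x)=\mu e^{-V(t,x)}$ solves $\partial_t\sigma=-g\sigma$ with $\sigma_0=\mu$, this equals $\E_{\tau_p}[\xi_L^\sigma(t,z_i)]+O(\var)$ uniformly in $z_i$ on the support of the regular $f_0$. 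Hence, using the free-flow change of variables,
$$
\frac{1}{n}\sum_{i=1}^n\phi(T^t(z_i))\E^n[\xi^V_{n,\var,M}(t,z_i)]
\;\longrightarrow\;\int dx\,dv\,f(t,x,v)\phi(x,v)
$$
by the assumed $\mu_n^0\deb f_0$ (the $|v|\mu_n^0\deb|v|f_0$ assumption controls the small error in the exponent uniformly).

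Next I would bound the variance. Since the factors in the two-point product representation depend on $c_h,\tau_h$ only through disjoint indices $h$, the Poisson expectation factors give
$$
\E^n[\xi^V(t,z_i)\xi^V(t,z_j)]=\E^n[\xi^V(t,z_i)]\,\E^n[\xi^V(t,z_j)]\cdot\exp\Bigl(\mu_\var\!\!\!\int_{\Cal T_\var(t,z_i)\cap\Cal T_\var(t,z_j)}\!\!\!e^{-\max(V(T_{z_i,c},c),V(T_{z_j,c},c))}\,dc\Bigr),
$$
the extra exponential coming from $\E_\tau[I_{\{a<\tau_h\}}I_{\{b<\tau_h\}}]=e^{-\max(a,b)}$ versus the product $e^{-a}e^{-b}$. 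Using $|e^x-1|\leq xe^x$ and (\ref{BG}), the covariance is controlled by
$\mu_\var[\Cal T_\var(t,z_i)\cap\Cal T_\var(t,z_j)]_L$.

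The main technical step is then the variance sum $n^{-2}\sum_{i,j}|\E^n[\xi_i\xi_j]-\E^n[\xi_i]\E^n[\xi_j]|$. Here I apply Lemma \ref{lemmatubo} with some $\beta\in(0,1/d)$: the non-grazing contribution yields a factor $\var^{1-d\beta}\to 0$, while the grazing contribution, being proportional to $\min(|v_i|,|v_j|)I_{|\sin\alpha_{ij}|<\var^\beta}$, is handled by (\ref{I}) (essentially Lemma \ref{misI}), giving at most $K\var^\zeta+o(a_n^{-d})$. Together with $\var\to 0$ and (\ref{conditio})--(\ref{conditiovar}), and observing that the diagonal $i=j$ contributes $O(1/n)$, the variance vanishes.

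Combining the two estimates, $\E^n[|X_n-\int f\phi|^2]\to 0$, where $X_n=n^{-1}\sum_i\xi^V(t,z_i)\phi(T^t(z_i))$, independently of the order of the limits since all the above estimates are joint in $n$ and $\var$. The main obstacle is the variance step: although the "covariance formula" is clean, the sum over $i,j$ must be split into geometrically transverse and grazing pairs, and only the lemmas on flow-tube intersection and on the fraction of near-parallel initial velocities (together with the scaling assumptions) make both contributions small.
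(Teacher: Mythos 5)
Your proposal is correct and follows essentially the same route as the paper: the explicit Poisson computation of $\E^n[\xi^V_{n,\var,M}(t,z_i)]$ combined with the change of variables (\ref{camvar}) for the mean, and for the second moment the same covariance correction (your $e^{-\max(V(T_{z_i,c},c),V(T_{z_j,c},c))}$ over ${\Cal T}_\var(t,z_i)\cap{\Cal T}_\var(t,z_j)$ is exactly the paper's indicator-split exponent, since $V$ is non-decreasing), controlled by splitting pairs into grazing and transverse ones via Lemma \ref{lemmatubo} and (\ref{I})/Lemma \ref{misI}, with the diagonal giving $O(1/n)$. No substantive difference from the paper's argument.
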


\begin{rem}
\label{ossv}
Of course, when $V=\bar{V}^{(k-1)}$, we have $\xi^V_{n,\var,M}=\hat{\xi}^{(k)}_{n,\var,M}$, so that obviously (by the triangular inequality) Lemma \ref{lemmauno1} implies, when $\var\to0$, $n\to\infty$, the limit
$\E^n[|\int dx dv \mu^0_n (\hat{\xi}^{(k)}_{n,\var,M}- \bar{\xi}^{(k)})\phi|]\to 0$
.
\end{rem}

\begin{proof}

According to the definition of maximal collision time (\ref{tcollis}), we have
$$
\xi^V_{n,\var,M} (t,z_j)=1 \iff
\begin{array}{c}
\forall h:1\leq h\leq M, \,\,\\ \\T_{z_j, c_h} \wedge t \geq \inf_{s \in \R_{+}} \{\eta^V (s,c_h) =0 \}\wedge t.
\end{array}
$$
\smallskip

The function $F(\cdot, c)=\inf\{z: z=V^{-1} (\cdot, c)\}$ is well defined and
$$\inf \{s \in \R_{+}; \eta^V (s,c_h) =0 \}=F(\tau_{c_h}, c_h).$$
Moreover, the maximal collision time $T_{z_j,c}$ is greater than $t$ if and only if $c$ does not belong to the tube
${\Cal T}_\var (t,z_j)$.
It follows that

\begin{eqnarray*}
P^n \left[\xi^V_{n,\var,M} (t,z_j)=1\right]&=&
P^n \left[\{1\leq h\leq M:\,T_{z_j, c_h} \wedge t \geq F( \tau_{c_h},c_h) \wedge t \}\right]\\
\\
&=&\!\!
\E^n \left[\prod_{h=1}^{M}\left(I_{\{c_h \notin {\mathcal T}_\var (t,z_j)\}} + I_{\{c_h \in {\mathcal T}_\var (t,z_j);\,\, T_{z_j, c_h} \geq V^{-1}(\tau_{c_h} , c_h) \wedge t\}}\right)\right]\\
&=&\exp\left[-\mu_\var \int_{{\Cal T}_\var (t,z_j)} dc e^{-V(T_{z_j,c},c)}\right].
\end{eqnarray*}

Using (\ref{camvar}), because of the weak convergence of $\mu_n^0$ and $|v|\mu_n^0$, we get
\begin{equation}
\label{convinmedia}
\E^n\left[\frac 1n \sum_{i=1}^n \xi^V_{n,\var,M} (t,z_i) \phi(T^t(z_i))\right] \stackrel{n\to\infty, \var\to 0}{\rightarrow}  \int_{\R^d\times\R^d} dx dv f(t,x, v) \phi(x,v),
\end{equation}
the result being independent of the ordering of the limits.

From (\ref{convinmedia}), we prove Lemma \ref{lemmauno1} as follows.
We call
\begin{eqnarray*}
\Cal{E}_{1\,n}&=&\E^n \left[\frac 1{n^2}\sum_{\lmul{i,j=1\\i\neq j}}^n \xi^V_{n,\var,M} (t,z_i)  \xi^V_{n,\var,M} (t,z_j) \phi(T^t(z_i))\phi(T^t(z_j))\right]\\
\Cal{E}_{2\,n}&=&\int_{\R^d\times\R^d}\!\!\!\! dx dv f(t,x, v) \phi(x,v)
\!-\E^n\!\! \left[\frac 2n \sum_{i=1}^n \xi^V_{n,\var,M} (t,z_i) \phi(T^t(z_i))\right].
\end{eqnarray*}

We write then:
$$
\E^n \left[\left|\frac 1n \sum_{i=1}^n \xi^V_{n,\var,M} (t,z_i) \phi(T^t(z_i)) - \int_{\R^d\times\R^d}\!\!\!\!\!\! dx dv f(t,x, v) \phi(x,v)\right|^2\right]=
$$
$$
\E^n \left[\frac 1{n^2}
\sum_{i=1}^n \xi^V_{n,\var,M} (t,z_i) \phi^2(T^t(z_i))\right]
+ \Cal{E}_{1\,n} +\Cal{E}_{2\,n}
\int_{\R^d\times\R^d}\!\! dx dv f(t,\!x,\!v) \phi(x,\!v)
$$
where
$\E^n \left[\frac 1{n^2}\Som_{i=1}^n \xi^V_{n,\var,M} (t,z_i) \phi^2(T^t(z_i))\right]$,  when \mbox{$n\to\infty$}, obviously vanishes and, thanks to (\ref{convinmedia}),
$$
\Cal{E}_{2\,n} \int_{\R^d\times\R^d}\!\! dx dv f(t,x, v) \phi(x,v)\to -
\left(\int_{\R^d\times\R^d}\!\! dx dv f(t,x, v) \phi(x,v)\right)^2 .
$$

As for
$\Cal{E}_{1\,n}$
 we observe that, for $i\neq j$:
$$
\E^n\left[\xi^V_{n,\var,M} (t,z_i)  \xi^V_{n,\var,M} (t,z_j)\right]=
\phantom{-\mu_{\var}\int_{{\Cal T}_\var (t,z_i)} dc\,
e^{-V(T_{z_i,c},c)}-\mu_{\var}\int_{{\Cal T}_\var (t,z_j)} dc\,
}
$$
$$
\exp\left[-\mu_{\var}\int_{{\Cal T}_\var (t,z_i)}\!\!dc\,
e^{-V(T_{z_i,c},c)}-\mu_{\var}\int_{{\Cal T}_\var (t,z_j)}\!\! dc\,
e^{-V(T_{z_j,c},c)}\right]
\times
$$
$$
\exp\!\!\left[\mu_{\var}\int_{{\Cal T}_\var (t,z_j)\bigcap {\Cal T}_\var (t,z_i)} \!\!\!\!\!\!\!\!\!dc
\left(I_{\{c:T_{z_i,c}\leq T_{z_j,c} \}}
e^{-V(T_{z_j,c},c)}
+I_{\{c:T_{z_i,c}> T_{z_j,c} \}}
e^{-V(T_{z_i,c},c)}\right)
\right]\!\!.
$$

We write
$
1=
I_{|\sin \alpha_{ij}|\geq\var^{\beta}}
+
I_{|\sin \alpha_{ij}|<\var^{\beta}},
$
 for $\beta <\frac 1d$.
Then, using (\ref{I}) and $[{\Cal T}_\var (t,z_j)\cap {\Cal T}_\var (t,z_i))]_L\leq \bar{K}_{\mu}{\var}^{d-1}$ (always valid for intersections of flow tubes, as shown also in the proof of Lemma \ref{lemmatubo})
we have:
\begin{eqnarray*}
\lefteqn{\frac{1}{n^2}\sum_{\lmul{i,j=1\\i\neq j}}^{n}
I_{|\sin \alpha_{ij}|<\var^{\beta}}
\E^n \left[\xi^V_{n,\var,M} (t,z_i)  \xi^V_{n,\var,M} (t,z_j)\right]}\\
&&\leq \frac{1}{n^2} \sum_{\lmul{i,j=1\\i\neq j}}^{n}
e^{\mu_{\var}[{\Cal T}_\var (t,z_j)\bigcap {\Cal T}_\var (t,z_i)]_L}
I_{|\sin \alpha_{ij}|<\var^{\beta}}
\leq K_{\mu} \var^{\zeta} + o(a_n^{-d}).
\end{eqnarray*}

As for the remaining part (since $I_{|\sin \alpha_{ij}|\geq\var^{\beta}}=1-
I_{|\sin \alpha_{ij}|<\var^{\beta}}$), from (\ref{I}) and Lemma \ref{lemmatubo} we get:
$$
\frac{1}{n^2}\sum_{\lmul{i,j=1\\i\neq j}}^{n} \phi(T^t(z_i))\phi(T^t(z_j))
I_{|\sin \alpha_{ij}|>\var^{\beta}} \E^n \left[\xi^V_{n,\var,M} (t,z_i)  \xi^V_{n,\var,M} (t,z_j)\right]=
$$
$$
o(a_n^{-d})+O(\var^{\zeta})+
\frac{1}{n^2}\!\sum_{i,j=1}^{n} \phi(T^t(z_i))\phi(T^t(z_j))
\times
$$
$$
\times
\exp\! \left[-\mu_{\var}\!\left(\int_{{\Cal T}_\var (t,z_i)} \!dc
e^{-V(T_{z_i,c},c)}
+\!\!\int_{{\Cal T}_\var (t,z_j)} \!dc
e^{-V(T_{z_j,c},c)}\right)\!\!+\!\! K_t \var^{1-d\beta}\right]
$$
since in this case $\mu_{\var} [{\Cal T}_\var (t,z_j)\bigcap {\Cal T}_\var (t,z_i))]_L=O(\var^{1-d\beta})\to 0$, so that (using  (\ref{camvar}) once more)
$$
\Cal{E}_{1\,n}
\to
\left( \int_{\R^d\times\R^d} dx dv f(s,x, v) \phi(x,v)\right)^2
$$
and Lemma \ref{lemmauno1} is proved.
\end{proof}

We emphasize again that all results in this Lemma are independent of the ordering of the limits $n\to\infty$, $\var\to 0$.

\subsubsection{Equivalence of the obstacle components of (\ref{cap}) and (\ref{sistk})}
Lemma \ref{lemmauno1} describes the asymptotic average behavior of the light particle component of the system (\ref{cap}). The asymptotic average behavior of its risk function, for an obstacle located in $c$, should be equivalent to the behavior associated  to the risk (\ref{risoko}). This fact is described by the proposition proved in this section and its corollary (notice that the risk function of a given obstacle with position $c_h$, differently from the generic risk function associated to a given position, is defined only if there exists an obstacle in $c_h$ ). The second proposition will be useful in the proof of equivalence of (\ref{cap}) to (\ref{approxlevk}), in the next section.

\smallskip

It will be convenient to define the following stochastic functions.
For $z=(x,v)\in\R^d\times\R^d$:
\begin{equation}
\label{rhodef}
\rho^{(k)}_{\var}(t,c,z)= 1-I_{\{c \in {\Cal T}_{\var} (t,z)\}}\, I_{\{{\bar V}^{(k)}(T_{z,c},c) < \tau_c\}}
\end{equation}

$$
\rho^{(k)}(t,z)=\int_{0}^{t}du \int_{\partial B_{1} (0)} |v\cdot\omega|d\omega \exp\left(-{\bar V}^{(k)} (u,x (u))\right)
$$

$$
Q_n^{ij}(s,t,c)=q_n (x_i (s) -c) q_n (x_j (t)-c)
$$

$$
\Cal{R}_{\var}^{ij\,(k)}(s,t,c)=\E_{\tau_c}[\rho^{(k)}_{\var} (s,c,z_i)\rho^{(k)}_{\var} (t,c,z_j)].
$$

\bigskip

Given these definitions, a fundamental tool in all estimates in this section is the following lemma:

\begin{lemma}
\label{lem:ind2}

Take a sequence $Z_{\infty}\in (\R^d\times\R^d)^{\infty}$ and its associated sequence of empirical measures $\{\mu_n^0\}_{n=1}^{\infty}$. Take a sequence $\{\Lambda_n\}_{n=1}^{\infty}$ of bounded, Lebesgue measurable sets such that
$B_n\subset \Lambda_n$ (with $B_n$ defined in (\ref{bn})) and
assume $\mu_n^0 \deb f_0\in \Scr{S}$ and $|v|\mu_0^n \deb |v| f_0\in \Scr{S}$. For $0 \leq s \leq t$, we have
\begin{equation}
\label{intrho}
\begin{array}{l}
\cfrac{1}{[\Lambda_n]_L}\int_{\Lambda_n} dc\,
\Cal{R}_{\var}^{ij\,(k)}(t,s,c)
=1-\cfrac{\var^{d-1}}{[\Lambda_n]_L} [\rho^{(k)}(t,z_i)+\rho^{(k)} (s,z_j)(1-\de_{ij})]\\
\phantom{\cfrac{1}{[\Lambda_n]_L}\int_{\Lambda_n} dc\,
\Cal{R}_{\var}^{ij\,(k)}(t,s,c)=}+\cfrac{R^{(2)}_{\var} (z_i,z_j,t,s)}{[\Lambda_n]_L}
\end{array}
\end{equation}
where, for each $\zeta\in(0,\frac12-\frac1{2d})$,
\begin{equation}
\label{ro0}
\cfrac{1}{n} \sum_{i=1}^{n} \|R^{(2)}_\var(z_i,z_j)\|_{L^{\infty}([0,T]\times[0,T])}\leq K_{r1}\var^{d-1+\zeta} |v_j|
+\var^{d-1}[o(a_n^{-d})]
\end{equation}

\noindent
and

\begin{equation}
\label{ro2}
\Lim_{\var \to 0} \var^{1-d} \cfrac{1}{n^2} \sum_{i,j=1}^{n} \|R^{(2)}_\var(z_i,z_j)\|_{L^{\infty}([0,T]\times[0,T])}=0.
\end{equation}

\end{lemma}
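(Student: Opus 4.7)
The plan is to expand the product defining $\Cal{R}^{ij\,(k)}_\var$, compute the $\tau_c$-expectation, integrate against $dc$ on $\Lambda_n$, and apply the change-of-variables formula (\ref{camvar}) to isolate the two $\rho^{(k)}$ contributions. Writing $\rho^{(k)}_\var(t,c,z)=1-I_{\{c\in\Cal{T}_\var(t,z)\}}I_{\{\bar V^{(k)}(T_{z,c},c)<\tau_c\}}$ and using that $\tau_c$ is exponential of rate $1$ together with the fact that $\bar V^{(k)}$ is non-decreasing in its time argument, I would first obtain
\begin{align*}
\Cal{R}_\var^{ij\,(k)}(t,s,c)=\;&1-I_{\{c\in\Cal{T}_\var(t,z_i)\}}e^{-\bar V^{(k)}(T_{z_i,c},c)}-I_{\{c\in\Cal{T}_\var(s,z_j)\}}e^{-\bar V^{(k)}(T_{z_j,c},c)}\\
&+I_{\{c\in\Cal{T}_\var(t,z_i)\cap\Cal{T}_\var(s,z_j)\}}\,e^{-\bar V^{(k)}(T_{z_i,c}\vee T_{z_j,c},c)}.
\end{align*}
Since $B_n\subset\Lambda_n$ contains both tubes, integrating the single-tube terms on $\Lambda_n$ reduces to integration on the tubes themselves, and (\ref{camvar}) yields the two main contributions $\var^{d-1}\rho^{(k)}(t,z_i)$ and $\var^{d-1}\rho^{(k)}(s,z_j)$ plus $\phi_\var$-remainders of respective size $O(\var^d|v_i|)$ and $O(\var^d|v_j|)$.

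The intersection integral is where the dichotomy $i=j$ versus $i\neq j$ enters. For $i=j$, one has $\Cal{T}_\var(s,z_i)\subset\Cal{T}_\var(t,z_i)$ thanks to $s\le t$, and $T_{z_i,c}\vee T_{z_i,c}=T_{z_i,c}$; hence (\ref{camvar}) applies to the smaller tube and produces $\var^{d-1}\rho^{(k)}(s,z_i)+O(\var^d|v_i|)$, which exactly cancels the $j=i$ single-tube term and explains the $(1-\delta_{ij})$ factor in (\ref{intrho}). For $i\neq j$ the intersection integral is bounded by $[\Cal{T}_\var(t,z_i)\cap\Cal{T}_\var(s,z_j)]_L$ via $e^{-\bar V^{(k)}}\le 1$, and I would then apply Lemma \ref{lemmatubo} with a free parameter $\beta\in(0,1/d)$ to separate transversal crossings (contribution $\var^{d(1-\beta)}$) from grazing ones (contribution $\var^{d-1}|v_j|T$ restricted to $|\sin\alpha_{ij}|<\var^{\beta}$).

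For (\ref{ro0}) I would sum with $\tfrac{1}{n}\sum_{i=1}^n$ and invoke Lemma \ref{misI}: the angle-indicator average is at most $K\var^\zeta+o(a_n^{-d})$ for any $\zeta<\frac{d-1}{2}\beta$, producing the $K_{r1}\var^{d-1+\zeta}|v_j|$ piece. The residual $\var^{d(1-\beta)}=\var^{d-1}\var^{1-d\beta}$ and the $O(\var^d|v_i|)$ $\phi_\var$-errors --- with $\tfrac{1}{n}\sum_i|v_i|$ uniformly bounded thanks to $|v|\mu_n^0\deb|v|f_0\in\Scr{S}$ --- are absorbed into $\var^{d-1}[o(a_n^{-d})]$ once $\beta$ is chosen close enough to $1/d$ and the scaling conditions (\ref{conditio})--(\ref{conditiovar}) are invoked; the latter in particular forces $\var=o(a_n^{-d})$. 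For (\ref{ro2}) I would sum (\ref{ro0}) once more with $\tfrac{1}{n}\sum_{j=1}^n$, use again the first-moment bound on $|v|$, and multiply by $\var^{1-d}$: both the $O(\var^\zeta)$ and the $o(a_n^{-d})$ contributions then vanish as $\var\to 0$, $n\to\infty$.

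The main obstacle is the quantitative calibration of $\beta$ in the intersection estimate: the constraint $\beta>2\zeta/(d-1)$ from Lemma \ref{misI} has to be compatible with making $\var^{d(1-\beta)}$ negligible against $\var^{d-1}a_n^{-d}$, which is exactly where the full strength of the scaling conditions (\ref{conditio})--(\ref{conditiovar}) is needed in order to cover the full range $\zeta\in(0,\tfrac12-\tfrac1{2d})$ claimed in (\ref{ro0}).
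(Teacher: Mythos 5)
Your proposal takes essentially the same route as the paper: expand the $\tau_c$-expectation in the definition (\ref{rhodef}), integrate over $\Lambda_n$ using the change of variables (\ref{camvar}) so that the single-tube terms produce the $\var^{d-1}\rho^{(k)}$ contributions (with the $i=j$ cross term cancelling exactly, which is where the $(1-\delta_{ij})$ comes from), and control the leftover intersection term through Lemma \ref{lemmatubo} combined with Lemma \ref{misI} and (\ref{I}). The only difference is one of explicitness: you spell out the calibration of $\beta$ needed to absorb the transversal piece $\var^{d(1-\beta)}$, which the paper disposes of summarily as $o(\var^{d-1})$.
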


\begin{proof}

Formula (\ref{intrho}) follows, through simple calculations,
from (\ref{camvar}), expanding the expectation value in the definition of $\Cal{R}_{\var}^{ij\,(k)}$, given in (\ref{rhodef}), and writing
$$\int_{\Cal T_\varepsilon (t,z)} dc e^{-\bar{V}^{(k)}(T_{z,c},c)}=
\varepsilon^{d-1}\left[\rho^{(k)}(t,z)+ C_d |v|\int_0^t du \phi_{\var}(u,x(u))\right];$$ doing this, we obtain for the remainder
\begin{equation}
\label{erre2}
\begin{split}
|R^{(2)}_{\var} (z_i,z_j,t,s)|
\leq\var^{d-1} C_d (|v_i|+|v_j|)\|\varphi_{\var}\|_{L^{\infty}([0,T]\times\R^d)} \\
\\
+[{\Cal T}_{\var}(s,z_j) \cap {\Cal T}_{\var} (t,z_i)]_L
\end{split}
\end{equation}
where $\|\varphi_{\var}\|_{L^{\infty}([0,T]\times\R^d)}=O(\var)\to 0$.

We use then the bound (\ref{erre2}) to obtain (\ref{ro0}) and (\ref{ro2}). In the single sum in (\ref{ro0}), so as in the double sum in (\ref{ro2}), thanks to the weak convergence of $|v|\mu^0_n$, the contribution of the first term on the right-hand side of (\ref{erre2}) is $O(\var^{d})$. As for the contribution of $[{\Cal T}_{\var}(s,z_j) \cap {\Cal T}_{\var} (t,z_i)]_L$, we bound the Lebesgue measure of the intersection of the tubes using Lemma \ref{lemmatubo};  then, thanks to the weak convergence of $\mu^0_n$ toward the (sufficiently) regular function $f_0$, we can use Lemma \ref{misI} and (\ref{I}), with $\beta<\frac 1d$, to estimate the contribution coming from the grazing collisions (being the remaining part in both sums $o(\var^{d-1})$).  So, from Lemma \ref{misI}  we get straightforwardly (\ref{ro0}) and, from (\ref{I}), we get the bound:
\begin{equation}
\label{R2}
\cfrac{1}{n^2} \sum_{i,j=1}^{n} \|R^{(2)}_{\var} (z_i,z_j)\|_{L^{\infty}([0,T]\times[0,T])}
\leq K_{R} \var^{d-1}[\var^{\zeta}+o(a_n^{-d})],
\end{equation}
and, when $\var\to 0$, we obtain (\ref{ro2}).

\end{proof}

We may now prove the main proposition of this section:

\begin{prop}
\label{lem:ind}
Consider the stochastic variables defined in (\ref{cap}) and (\ref{risoko}) and the sequences of positive real numbers $\{a_n\}_{n=1}^{\infty}$ and $\{\var_n\}_{n=1}^{\infty}$ satisfying conditions (\ref{conditio}) and (\ref{conditiovar}), i.e. $\Lim_{n\to\infty}a^ d_n \, n^{-\kappa} =0$, for some $\kappa\in(0,\frac 12)$, and $\Lim_{n\to\infty} a_n^d \var^{\zeta} = 0$, for some $\zeta\in (0,\frac12-\frac1{2d})$.

Then, $\Scr{P}$-almost everywhere w.r.t. sequences of initial data $Z_\infty$,
$$
\Lim_{n\rightarrow \infty}[\Lambda_n]_L\sup_{t\in[0,T]} {\E^n}\left[\left|{\hat A}^{(k)}_{n,\var,M} (t,c)-\bar{V}^{(k)}(t,c)\right|^2 I_{M\geq 1}\right]=0
$$
where $\{\Lambda_n\}$ is an increasing sequence of bounded Lebesgue measurable sets such that $B_n \subset \Lambda_n $ (with $B_n$ defined in (\ref{bn})) and $\Lim_{n\to\infty}\Lambda_n =\R^n$.
\end{prop}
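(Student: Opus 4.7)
\textbf{Strategy.}
I would split the second moment into variance plus squared bias,
\begin{equation*}
\E^n\bigl[|{\hat A}^{(k)}_{n,\var,M}(t,c)-{\bar V}^{(k)}(t,c)|^2\bigr]\leq \mathrm{Var}_{\E^n}({\hat A}^{(k)}_{n,\var,M}(t,c))+\bigl(\E^n[{\hat A}^{(k)}_{n,\var,M}(t,c)]-{\bar V}^{(k)}(t,c)\bigr)^2,
\end{equation*}
and argue that both summands decay polynomially in $n$, so multiplication by $[\Lambda_n]_L$ (which, $\Scr P$-a.s., grows at most sub-polynomially by the Schwartz tails of $f_0$ applied to (\ref{bn})) will not spoil the convergence. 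The indicator $I_{M\geq 1}$ is harmless since $P^n(M=0)=e^{-\mu_\var[\Lambda_n]_L}$ is exponentially small. The essential feature exploited is that in (\ref{cap}) the light-particle life functions $\hat\xi^{(k-1)}_{n,\var,M}$ see the random obstacles only through the \emph{deterministic} fictitious risks $\bar V^{(k-2)}$, so Campbell-type Poisson factorisations of the obstacle averages are available at every step.

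\textbf{Bias.}
From the product representation (\ref{formaxi}) and the Poisson law (\ref{poisson}),
\begin{equation*}
\E^n[\hat\xi^{(k-1)}_{n,\var,M}(s,z_i)]=\exp\!\Bigl(-\mu_\var\!\int_{{\Cal T}_\var(s,z_i)}\! e^{-\bar V^{(k-2)}(T_{z_i,c'},c')}dc'\Bigr),
\end{equation*}
and the parametrisation (\ref{camvar}) together with (\ref{BG}) identify this with $\E_{\tau_p}[\bar\xi^{(k-1)}(s,z_i)]+O(\var)$. Substituting into $\E^n[{\hat A}^{(k)}_{n,\var,M}(t,c)]$ and rewriting the integrand through the semi-explicit formula (\ref{semiexplk}) for $f^{(k-1)}$, the bias reduces to $\frac1n\sum_i\int_0^t q_n(x_i(s)-c)\E_{\tau_p}[\bar\xi^{(k-1)}(s,z_i)]\,ds-\Theta\int_0^t\!\!\int f^{(k-1)}(s,c,v)\,dv\,ds+O(\var)$, which tends to $0$ $\Scr P$-a.s.\ by (\ref{conditiomomj}) applied along the free flow. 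Under (\ref{conditio}) the fluctuation of the empirical measure against the delta-like kernel $q_n$ is of order $\sqrt{a_n^d/n}$, so the squared bias is $O(\var^2+n^{\kappa-1})$, uniformly in $t\in[0,T]$.

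\textbf{Variance.}
For $i\neq j$, the same Poisson factorisation combined with Lemma~\ref{lem:ind2} applied to $\Cal R^{ij(k-2)}_\var$ gives the closed form
\begin{equation*}
\E^n[\hat\xi_i(s)\hat\xi_j(s')]=\E^n[\hat\xi_i(s)]\,\E^n[\hat\xi_j(s')]\exp\bigl(\mu_\var\Delta R^{(2)}_\var\bigr),
\end{equation*}
with $\Delta R^{(2)}_\var:=R^{(2)}_\var(z_i,z_j,s,s')-R^{(2)}_\var(z_i,z_i,s,s)-R^{(2)}_\var(z_j,z_j,s',s')$. Because $\mu_\var|R^{(2)}_\var|$ is uniformly bounded (by (\ref{erre2}) together with Lemma~\ref{lemmatubo}), one has $|e^{\mu_\var\Delta R}-1|\leq K\mu_\var|\Delta R^{(2)}_\var|$, so that
\begin{equation*}
|\mathrm{Cov}(\hat\xi_i(s),\hat\xi_j(s'))|\leq K\mu_\var\bigl(|R^{(2)}_\var(z_i,z_j)|+|R^{(2)}_\var(z_i,z_i)|+|R^{(2)}_\var(z_j,z_j)|\bigr).
\end{equation*}
Integrating this bound against $q_n(x_i(s)-c)q_n(x_j(s')-c)$ and summing, the $\Scr P$-a.s.\ control (\ref{somma1}) on the $q_n$-weighted empirical averages together with (\ref{ro0})--(\ref{ro2}) (and the identity $\mu_\var\var^{d-1}=\mu$) reduce the off-diagonal variance to a quantity of order $\var^\zeta+o(a_n^{-d})$, which vanishes under (\ref{conditiovar}). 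The diagonal $i=j$ contribution is bounded trivially by $\frac{1}{n^2}\sum_i\bigl(\int_0^t q_n(x_i(s)-c)ds\bigr)^2\leq \frac{t\|q\|_\infty a_n^d}{n}\cdot\frac1n\sum_i\int_0^t q_n(x_i(s)-c)ds=O(a_n^d/n)=o(n^{\kappa-1})$ by (\ref{conditio}).

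\textbf{Main obstacle.}
The delicate point is the off-diagonal covariance: two particles with nearly parallel velocities produce a tube intersection ${\Cal T}_\var(s,z_i)\cap{\Cal T}_\var(s',z_j)$ of Lebesgue measure of order $\var^{d-1}$, and the individual covariance is then of order $1$ and cannot be made small in $\var$ alone. The collective weight of such grazing pairs is controlled precisely by the angular estimate of Lemma~\ref{misI}, which is already embedded into (\ref{ro2}); the scaling (\ref{conditiovar}) is tuned so that the resulting contribution is still $o(1)$. Once this is in place, multiplying by the sub-polynomial $[\Lambda_n]_L$ preserves the vanishing and the statement follows uniformly in $t\in[0,T]$.
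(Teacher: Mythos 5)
There is a genuine gap, and it lies precisely at the point your sketch dismisses in one clause: the treatment of the prefactor $[\Lambda_n]_L$ and of the indicator $I_{M\geq 1}$. In this proposition $c$ is not a fixed point of $\R^d$: it is the coordinate of a tagged obstacle of the Poisson configuration (this is why $I_{M\geq1}$ appears, and why the paper later combines the proposition with the identity $\E^n[(\var^{d-1}M)^2 I_{\mathrm{supp}\phi}(c)]=\mu^2[\Lambda_n]_L[\mathrm{supp}\phi\cap\Lambda_n]_L$ in Corollary \ref{equiostcomp}). Averaging the tagged obstacle's position, uniform on $\Lambda_n$, produces a factor $1/[\Lambda_n]_L$ which exactly cancels the prefactor; for instance $[\Lambda_n]_L\,\E^n[|\bar V^{(k)}(t,c)|^2 I_{M\geq1}]=(1-e^{-\mu_\var[\Lambda_n]_L})\int_{\Lambda_n}dc\,|\bar V^{(k)}(t,c)|^2$, which converges to the \emph{nonzero} limit $\int_{\R^d}dc\,|\bar V^{(k)}(t,c)|^2$. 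The paper's proof therefore expands the square into three terms and shows that each, after the $[\Lambda_n]_L$ normalization, tends to $\int_{\R^d}dc\,|\bar V^{(k)}(t,c)|^2$ (Poisson resummation via (\ref{formaxi}), Lemma \ref{lem:ind2}, the change of variables (\ref{camvar}), and the $\Scr{P}$-a.s. weak convergence of $q_n\mu_n^0\otimes q_n\mu_n^0$ toward $\Theta^2\de_0 f_0\otimes\de_0 f_0$ together with dominated convergence), so the statement follows by exact cancellation and no rate in $n$ is needed. Your scheme instead reads the expectation pointwise in $c$ and argues that $[\Lambda_n]_L$ grows sub-polynomially while the pointwise mean-square error decays polynomially. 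Neither half is available: the hypotheses impose no upper bound on $[\Lambda_n]_L$ (only $B_n\subset\Lambda_n$), and even for the minimal choice the errors are not polynomially small in $n$, since (\ref{conditio})--(\ref{conditiovar}) tie $\var$ and $a_n$ to $n$ only weakly (e.g. $a_n$ may grow logarithmically), the smoothing error $q_n\ast(\cdot)-(\cdot)(c)$ is only $O(a_n^{-1})$ (your claimed squared bias $O(\var^2+n^{\kappa-1})$ omits it), and the a.s. control of $\hat\omega^{\phi}_n$ in Appendix \ref{app2} comes with no explicit rate. So ``multiplication by $[\Lambda_n]_L$ will not spoil the convergence'' is exactly what remains unproved.

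That said, the technical ingredients you invoke are the right ones and are essentially those of the paper: the Poisson factorisation through (\ref{formaxi}), Lemma \ref{lem:ind2} and (\ref{erre2}) for the covariance of $\hat\xi^{(k-1)}_i,\hat\xi^{(k-1)}_j$, Lemmas \ref{lemmatubo} and \ref{misI} for the grazing pairs, and (\ref{somma1}) for the $q_n$-weighted sums; your off-diagonal covariance estimate is the analogue of the paper's remainder $R^{(4)}_{\var}$ (note the correct bound carries an extra factor $a_n^d$, i.e. it is $O(a_n^d\var^{\zeta}+o(1))$ rather than $O(\var^{\zeta}+o(a_n^{-d}))$, still vanishing under (\ref{conditiovar})). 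What is missing is the identification of the limits of the leading terms — the square, the double product and $|\bar V^{(k)}|^2$ all converging to $\int_{\R^d}dc\,|\bar V^{(k)}(t,c)|^2$ — which is where the tagged-obstacle average over $\Lambda_n$ and the product weak convergence (\ref{conditiomomj}) enter; without that step (or a genuinely quantitative pointwise argument, which the stated asymptotics do not support), the proposal does not establish the proposition.
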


\begin{proof}

We expand the square power in the expectation value and we calculate the value of the three resulting terms.

Because of definition (\ref{rhodef}), using (\ref{formaxi}), we can write:
$$
\hat{\xi}^{(k-1)}_{n,\var,M}(s,z_j)=
\prod_{h=1}^M  \rho^{(k-2)}_{\var} (s,c_h,z_j).
$$
We substitute this expression in the definition of ${\hat A}^{(k)}_{n,\var,M}$, in system (\ref{cap}), and we obtain (resumming on the Poissonian variables):
$$
[\Lambda_n]_L{\E^n}\left[\left|{\hat A}^{(k)}_{n,\var,M} (t,c)\right|^2\!\!\! I_{\{M\geq 1\}}\right]
=
$$
$$
\frac1{n^2}\sum_{i,j =1}^n \int_{[0,t]^2}\!\!\!\! ds_1\, ds_2 \left(\exp\left[\mu_{\var} [\Lambda_n]_L\big(\int_{\Lambda_n}\!dc\,
\cfrac{\Cal{R}_{\var}^{ij\,(k-2)}(s_1,s_2,c)}{[\Lambda_n]_L}-1\big)\right]
\!\!-
e^{-\mu_{\var} [\Lambda_n]_L}\right)
\times
$$
$$
\times\;\frac{\int_{\Lambda_n} dc\, Q_n^{ij}(s_1,s_2,c)\Cal{R}_{\var}^{ij\,(k-2)}(s_1,s_2,c)}{\int_{\Lambda_n}\!dc\,
\cfrac{\Cal{R}_{\var}^{ij\,(k-2)}(s_1,s_2,c)}{[\Lambda_n]_L}}.
$$

Then, using Lemma \ref{lem:ind2}, we have:
\begin{equation}
\label{alfan}
[\Lambda_n]_L{\E^n}\left[\left|{\hat A}^{(k)}_{n,\var,M} (t,c)\right|^2\!\!\! I_{\{M\geq 1\}}\right]
=
\end{equation}

$$
\frac1{n^2}\sum_{i,j =1}^n \int_{[0,t]^2}\!\!\!\! ds_1\, ds_2\,
e^{-\mu [\rho^{(k-2)}(s_1,z_i)+\rho^{(k-2)}(s_2,z_j)(1-\de_{ij})]}
$$

$$
\times(1+\var^{1-d}R^{ij(3)}_{\var})
\left[\int_{\Lambda_n} dc \, Q_n^{ij}(s_1,s_2,c)\Cal{R}_{\var}^{ij\,(k-2)}(s_1,s_2,c)\right]
$$
where $R^{ij(3)}_{\var}=R^{(3)}_{\var}(s_1,s_2,z_i,z_j)\leq K_t R^{(2)}_{\var}(s_1,s_2,z_i,z_j)$ is such that
$$
\cfrac{\var^{1-d}}{n^2}\Som_{i,j=1}^n\|R^{ij(3)}_{\var}\|_{L^\infty [0,T]\times[0,T]}\leq K_{R_3}(\var^{\zeta}+o(a_n^{-d})).
$$

We recall that,  using Lemma \ref{lem:ind2} and since $q\in \Scr{S}(\R^d)$, we have, $\Scr{P}$-almost everywhere w.r.t. sequences of initial data $Z_\infty$ and for any $D\subset \R^d$,  the bounds
\begin{equation}
\label{auff}
\cfrac{\var^{1-d}}{n^2}\Som_{i,j=1}^n \|R^{(3)}_{\var}\|_{L^\infty [0,T]\times[0,T]}\int_{D}dc\, Q_n^{ij}(s,t,c)\leq K_q \Theta a_n^d \|q\|_{\infty}(\var^{\zeta} + o(a_n^{-d})),
\end{equation}

$$
\frac{1}{n^2}\sum_{i,j=1}^{n}
\int_{{\Cal T}_{\var} (t,z_i)}
dc \, Q_n^{ij}(s,t,c)
\leq C_d t \var^{d-1} (a_n^{d} \|q\|_{\infty})\frac {K_1}n  \sum_{i=1}^{n}|v_i|.
$$

Therefore, using the definition of $\Cal{R}_{\var}^{ij\,(k-2)}$, given in (\ref{rhodef}), we get the equality
$$
[\Lambda_n]_L{\E^n}\left[\left|{\hat A}^{(k)}_{n,\var,M} (t,c)\right|^2 I_{\{M\geq 1\}}\right]= R^{(4)}_{\var}(t) +
$$

$$
\frac1{n^2}\sum_{i,j =1}^n \int_{[0,t]^2}\!\!\! ds_1 ds_2 e^{-\mu [(\rho^{(k-2)}(s_1,z_i)+\rho^{(k-2)}(s_2,z_j)(1-\de_{ij})]}\!\!\!
\int_{\Lambda_n}
\!\!\!
dc\,Q_n^{ij}(s_1,s_2,c)
$$
where
$$
\Sup_{t\in[0,T]}R^{(4)}_{\var}(t)\leq 4 K_q \Theta T^2 \|q\|_{\infty} a_n^{d}(\var^{\zeta} + o(a_n^{-d}))+ O(\var^{d-1}a_n^{d})\to 0,
$$
since $\Lim_{n \to \infty }a_n^d \var^{\zeta}=0$ with $\zeta\in (0, \frac12 -\frac1{2d})$.

Let $z=(x,v)$ and $\tilde{z}=(y,w)$. Define the sequence of measurable sets $\Lambda_n^s[z]=\{a\in\R^d : \exists b\in \Lambda_n\,\, s.t.\,\,a=b-x(s)\}$ and the sequence of functions $Q_{\Lambda_n^s}^*(z,x)=(I_{\Lambda_n^s[z]}q_n)\ast q_n (x)$.

Since $q$ is a radial function, we have the following equality:
\begin{eqnarray*}
\lefteqn{\frac{1}{n^2}\sum_{i,j=1}^{n}e^{-\mu [(\rho^{(k-2)}(s_1,z_i)+\rho^{(k-2)}(s_2,z_j)(1-\de_{ij})]}\int_{\Lambda_n} dc\, Q_n^{ij}(s_1,s_2,c)
=}\\
\\
&\!\!\!\!\!\int_{(\R^d\times\R^d)^2} dz d\tilde{z}
[\mu_n^0
\otimes\mu_n^0](z,\tilde{z})e^{-\mu[\rho^{(k-2)}(s_1,z)+\rho^{(k-2)}(s_2,\tilde{z})]}
Q_{\Lambda_n^{s_2}}^*(\tilde{z},
x(s_1)-y(s_2))&\\
\\
&\!\!\!\!\!+ \cfrac{1}{n} \int_{\R^d} dz \mu_n^0 (z)e^{-\mu\rho^{(k-2)}(s_1,z)}[1
-e^{-\mu\rho^{(k-2)}(s_2,z)}]Q_{\Lambda_n^{s_2}}^*(\tilde{z},
x(s_1)-y(s_2)),
&
\end{eqnarray*}
where the last term is bounded by
$$\frac 2n \int dz\mu_n^0 (q_n\ast q_n)(x(s_1)-y(s_2))\leq 2\frac{a_n^{d}}n\|q\|_\infty\Theta.$$
In the first term we can use Fubini's theorem, rewriting
$$
\int_{(\R^d\times\R^d)^2} dz d\tilde{z}
[\mu_n^0
\otimes\mu_n^0](z,\tilde{z})e^{-\mu[\rho^{(k-2)}(s_1,z)+\rho^{(k-2)}(s_2,\tilde{z})]}
Q_{\Lambda_n^{s_2}}^*(\tilde{z},
x(s_1)-y(s_2))=
$$
$$
\!\!\!\int_{\R^d} da \int_{(\R^d\times\R^d)^2}
dz d\tilde{z}
[\mu_n^0
\otimes\mu_n^0](z,\tilde{z})
\left[e^{-\mu \rho^{(k-2)}(s_1,z)}q_n(a-x(s_1))\right]\times
$$
$$
\times
\left[e^{-\mu \rho^{(k-2)}(s_2,\tilde{z})}
q_n(a-y(s_2))I_{\Lambda_n^{s_2}[\tilde{z}]}(a-y(s_2))\right]
$$
and, because of (\ref{somma1}), we can use Lebesgue's dominated convergence theorem to pass to the limit into the integral.

$\{\Lambda_n\}_{n=1}^\infty$ is an increasing sequence of Lebesgue measurable set and $I_{\Lambda_n}\to 1$
in $L^{\infty}(\R^d)$.
From the $\Scr{P}$-a.e. weak convergence $\otimes_{k=1}^{Q} q_n\mu^0_n \deb \otimes_{k=1}^Q \de_0 f_0 $ and the $L^{\infty}$ convergence of $I_{\Lambda_n^s[y]}$,
we have
$$
\begin{array}{l}
\int_{[0,t]^2 \times (\R^d \times\!\R^d)^2} ds_1 ds_2 dz d\tilde{z}
[\mu_n^0
\otimes \mu_n^0](z,\tilde{z})\times\\
\\
\times\; e^{-\mu[\rho^{(k-2)}(s_1,z)+\rho^{(k-2)}(s_2,\tilde{z})]}
Q_{\Lambda_n^{s_2}}^*\!(\tilde{z}, x(s_1)\!-\!y(s_2))\\
\\
\to \int_{\R^d}
dc \,\left|\bar{V}^{(k)}(t,c)\right|^2,
\end{array}
$$
so that we obtain finally
$$
\Lim_{n\to\infty}
\frac{1}{n^2}\!\Som_{i,j=1}^{n}
\int_{[0,t]^2\times\Lambda_n} ds_1 \, ds_2 \, dc\,
Q_n^{ij}
e^{-\mu [\rho^{(k-2)}(s_2,z_i)+\rho^{(k-2)}(s_1,z_j)(1-\de_{ij})]}=
$$
$$
\int_{\R^d}\!\!\!
dc\left|\bar{V}^{(k)}(t,c)\right|^2\!\!\!.
$$

We compute now the double product and, because of the definitions of $\rho^{(k)}_\var$, $\Lambda_n$ and of the boundedness of $\bar{V}^{(k)}$, we get :
$$
\begin{array}{l}
[\Lambda_n]_L{\E^n}\left[{\hat A}^{(k)}_{n,\var,M} (t,c) \bar{V}^{(k)}(t,c)I_{\{M\geq 1\}}\right]= O(a_n^d \var^{d-1})+\\
\\
\frac 1n\Som_{j=1}^{n}\int_0^t ds_1 \left(
\exp\left[-\mu_{\var}\int_{{\Cal T}_\var (s_1,z_j)} dc\, e^{- {\bar V}^{(k-2)}(T_{z_j, c},c)}\right]\right.\\
\\
\left.\times
\int_{\Lambda_n}
dc \bar{V}^{(k)} (t,c)
q_n (x_j (s_1) -c) \right).
\end{array}
$$
Using (\ref{camvar}), $\Scr{P}$-a.e. with respect to $Z_\infty$, we get:
$$
[\Lambda_n]_L{\E^n}\left[{\hat A}^{(k)}_{n,\var,M} (t,c) \bar{V}^{(k)}(t,c)I_{\{M\geq 1\}}\right]=
$$

$$
\frac 1n \Som_{i=1}^{n}\int_0^t ds_1
e^{-\mu \rho^{(k-2)}(s_1,z_i)}
\int_{\Lambda_n} \, dc\,\bar{V}^{(k)} (t,c) q_n (x_i (s_1) -c)+o(\var)
$$
so that, because of the $\Scr{P}$-a.e. weak convergence of $q_n\mu_n^0$ and the hypothesis on $f_0$, we obtain:
$$
-2\Lim_{n\to\infty}[\Lambda_n]_L{\E^n}\left[{\hat A}^{(k)}_{n,\var,M} (t,c) \bar{V}^{(k)}(t,c)I_{\{M\geq 1\}}\right]
=-2\int_{\R^d}
dc \,\left|\bar{V}^{(k)}(t,c)\right|^2 .
$$
For the last term we have:
$$
[\Lambda_n]_L{\E^n}\left[|\bar{V}^{(k)}(t,c)|^2 I_{\{M\geq 1\}}\right]=(1-e^{-\mu_\var [\Lambda_n]_L})\int_{\Lambda_n}dc\left|\bar{V}^{(k)}(t,c)\right|^2,
$$
whose limit is $\int_{\R^d}dc\left|\bar{V}^{(k)}(t,c)\right|^2$ because of the $L^{\infty}$ convergence of $I_{\Lambda_n}$.

Hence, we have established Proposition \ref{lem:ind}.
\end{proof}

In addition to Proposition  \ref{lem:ind}, we shall need its corollary

\begin{corol}
\label{equiostcomp}
Under the same hypothesis as Proposition \ref{lem:ind}, for the obstacles life functions defined in (\ref{vitak}) and in (\ref{cap}) and for all $\phi\in C_K(\R^d)$, $\Scr{P}$-almost everywhere w.r.t. sequences of initial data $Z_\infty$:
$$
\Lim_{n\rightarrow \infty}\E^n [\var^{d-1} \sum_{i=1}^{M}I_{\mathrm{supp}\phi}(c_i)|\hat{\eta}_{n,\var,M}(t, c_i)-\bar{\eta}(t,c_i) |]=0.
$$
\end{corol}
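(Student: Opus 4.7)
The plan is to reduce the convergence of the life functions to that of the associated risk functions via Lemma \ref{3.2}, and then to exploit Proposition \ref{lem:ind} together with the Poisson structure of (\ref{poisson}). Since $\hat\eta^{(k)}_{n,\var,M}(t,c_i)$ and $\bar\eta^{(k)}(t,c_i)$ share the same exponential time $\tau_i$ while the reference risk $\bar V^{(k)}(t,c_i)$ is deterministic, Lemma \ref{3.2} applied with $S_1=\hat A^{(k)}_{n,\var,M}(t,c_i)$, $S_2=S=\bar V^{(k)}(t,c_i)$, $\tau=\tau_i$ gives, for any $\delta>0$,
\begin{equation*}
|\hat\eta^{(k)}_{n,\var,M}(t,c_i)-\bar\eta^{(k)}(t,c_i)|\leq \tfrac{1}{\delta}|\hat A^{(k)}_{n,\var,M}(t,c_i)-\bar V^{(k)}(t,c_i)|+I_{\{|\bar V^{(k)}(t,c_i)-\tau_i|<\delta\}}.
\end{equation*}
Multiplying by $\var^{d-1}I_{\mathrm{supp}\phi}(c_i)$, summing over $i=1,\ldots,M$ and taking $\E^n$ splits the bound into two contributions. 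The indicator piece is handled by first integrating $\tau_i$ (exponential, independent of $\bar V^{(k)}$) to get $\E_{\tau_i}[I_{\{|\bar V^{(k)}-\tau_i|<\delta\}}]\leq 2\delta$, and then summing over the Poisson atoms via Campbell's formula, yielding a total of $2\delta\mu[\mathrm{supp}\phi]_L$, arbitrarily small as $\delta\to 0$.

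For the main contribution $\tfrac{\var^{d-1}}{\delta}\E^n[\sum_i I_{\mathrm{supp}\phi}(c_i)|\hat A^{(k)}-\bar V^{(k)}|(t,c_i)]$, I would apply the refined Campbell (Slivnyak--Mecke) identity for the Poisson distribution to rewrite it as
\begin{equation*}
\tfrac{\mu}{\delta}\int_{\mathrm{supp}\phi}\E^n\bigl[|\hat A^{(k)}_{n,\var,M+1}(t,c)-\bar V^{(k)}(t,c)|\bigr]\,dc,
\end{equation*}
where $\hat A^{(k)}_{n,\var,M+1}$ denotes the risk function computed with the Poisson configuration augmented by a deterministic point at $c$, and I used $\var^{d-1}\mu_\var=\mu$ from (\ref{BG}). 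Combining Cauchy--Schwarz with the inequality $|a|^2\leq 2|b|^2+2|a-b|^2$, the integrand is bounded by $(2\E^n[|\hat A^{(k)}_{n,\var,M}-\bar V^{(k)}|^2](c)+2\|\hat A^{(k)}_{n,\var,M+1}-\hat A^{(k)}_{n,\var,M}\|_\infty^2)^{1/2}$; Proposition \ref{lem:ind} then supplies $\E^n[|\hat A^{(k)}_{n,\var,M}-\bar V^{(k)}|^2]=o(1/[\Lambda_n]_L)$ uniformly in $c$ (the factor $I_{\{M\geq 1\}}$ only removes an exponentially small term $e^{-\mu_\var[\Lambda_n]_L}\|\bar V^{(k)}\|_\infty^2$).

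The main technical step, and the main obstacle of the proof, is to control the augmentation perturbation $\|\hat A^{(k)}_{n,\var,M+1}-\hat A^{(k)}_{n,\var,M}\|_\infty$. Adding one obstacle at $c$ modifies $\hat\xi^{(k-1)}(s,z_i)$ only for the light particles $z_i$ whose trajectory crosses $B_\var(c)$; using the parametrization of such crossings as in (\ref{camvar}) together with the $\Scr{P}$-a.s.\ bounds (\ref{somma1})--(\ref{somma2}) on the $q_n$-convolutions, a direct estimate yields the deterministic bound $K(a_n\var)^{d-1}$. Under the scaling hypothesis (\ref{conditiovar}) with $\zeta\in(0,\tfrac12-\tfrac1{2d})$, a short computation gives $d/\zeta>2d$, whence $a_n^2\var\leq a_n^{2-d/\zeta}\to 0$ and therefore $(a_n\var)^{d-1}\to 0$. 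Collecting the estimates, the main contribution is bounded by $\tfrac{\mu[\mathrm{supp}\phi]_L}{\delta}\sqrt{o(1/[\Lambda_n]_L)+O((a_n\var)^{2(d-1)})}$, which tends to $0$ as $n\to\infty$ for any fixed $\delta>0$; letting $\delta\to 0$ afterwards concludes.
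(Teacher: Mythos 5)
Your first step (the reduction via Lemma \ref{3.2}, giving the $\frac1\delta|\hat A^{(k)}_{n,\var,M}-\bar V^{(k)}|$ term plus the indicator term whose $\tau$-expectation yields $2\delta\mu[\mathrm{supp}\phi\cap\Lambda_n]_L$) is exactly the paper's. The gap is in how you treat the main term, and specifically in your appeal to Proposition \ref{lem:ind} ``uniformly in $c$''. In that proposition $c$ is not a deterministic location: it is an actual Poisson obstacle (this is why the factor $I_{\{M\geq 1\}}$ appears, and why the paper stresses just before the proposition that the risk function of an obstacle is defined only if an obstacle is present there), and $\E^n$ averages that obstacle's position uniformly over $\Lambda_n$. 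The factor $[\Lambda_n]_L$ in the statement is precisely the normalization of this spatial average, so the proposition is an integrated (in $c$) statement, essentially that $\int_{\Lambda_n} dc\,\E^n[|\hat A^{(k)}-\bar V^{(k)}|^2(t,c)]\to 0$ in a suitable Palm sense. It does not give, and cannot give, the pointwise bound $\E^n[|\hat A^{(k)}_{n,\var,M}(t,c)-\bar V^{(k)}(t,c)|^2]=o(1/[\Lambda_n]_L)$ uniformly in a fixed $c$: the sets $\Lambda_n$ are only required to contain $B_n$ of (\ref{bn}) and may grow arbitrarily fast relative to $n$, $a_n$, $\var$, whereas the error at a fixed $c$ with $\bar V^{(k)}(t,c)>0$ is governed by $n$, $a_n$, $\var$ alone. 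So after your Mecke step, bounding the integrand by $\bigl(2\E^n[|\hat A^{(k)}_{n,\var,M}-\bar V^{(k)}|^2](c)+2\|\hat A^{(k)}_{n,\var,M+1}-\hat A^{(k)}_{n,\var,M}\|_\infty^2\bigr)^{1/2}$ and invoking Proposition \ref{lem:ind} pointwise does not follow; your reading of $I_{\{M\geq1\}}$ as merely removing an $e^{-\mu_\var[\Lambda_n]_L}$ term confirms the misinterpretation.

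The Mecke route is not hopeless, but it needs the extra step you skipped: the Mecke integral $\mu_\var\int_{\mathrm{supp}\phi}\E^n[|\hat A^{(k)}_{\mathrm{aug}}(t,c)-\bar V^{(k)}(t,c)|]\,dc$ is (after Cauchy--Schwarz in $c$) essentially the quantity Proposition \ref{lem:ind} controls, but only after relating the Palm-type expectation $\E^n[\,\cdot\,(c_1)I_{\{M\geq1\}}]$ in the proposition to the Mecke integral, where a weight of order $1/(M+1)$ appears and must be handled (e.g.\ by a large-deviation cut on $M$); moreover your deterministic bound $K(a_n\var)^{d-1}$ for the augmentation perturbation is asserted rather than proved, and it hides exactly the thin-set empirical-measure estimates ($q_n$ weighted by indicators of trajectories meeting $B_\var(c)$) that the paper controls via (\ref{somma2}) and Lemma \ref{misI}. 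The paper's own argument avoids all of this: by exchangeability the sum equals $\E^n[\var^{d-1}M I_{\mathrm{supp}\phi}(c)\,|\bar V^{(k)}-\hat A^{(k)}_{n,\var,M}|(t,c)]$ with $c$ the Poisson point; Cauchy--Schwarz together with $\E^n[(\var^{d-1}M)^2 I_{\mathrm{supp}\phi}(c)]=\mu^2[\Lambda_n]_L[\mathrm{supp}\phi\cap\Lambda_n]_L$ and Proposition \ref{lem:ind} then gives $O(\sqrt{[\Lambda_n]_L})\cdot o(1/\sqrt{[\Lambda_n]_L})\to0$, with no augmentation comparison and no pointwise-in-$c$ claim needed.
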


\begin{proof}
Since for all $\de>0$ (cf. Remark \ref{oss})
$$
\E^n [\var^{d-1} \sum_{i=1}^{M}I_{\mathrm{supp}\phi}(c_i)|\hat{\eta}_{n,\var,M}(t, c_i)-\bar{\eta}(t,c_i) |]\leq
$$
$$
\frac 1\de \E^n[\var^{d-1} M I_{\mathrm{supp}\phi}(c)|{\bar V}^{(k)}(t,c) -{\hat A}^{(k)}_{n,\var,M}(t,c)|] +2\de\mu\left| \mathrm{supp} \phi \cap \Lambda_n\right|,
$$
the proof follows easily bounding the first term on the right-hand side by Cauchy-Schwarz's inequality and using the identity
$$
\E^n[(\var^{d-1} M)^2 I_{\mathrm{supp}\phi}(c)]=\mu^2 [\Lambda_n]_L [\mathrm{supp}\phi \cap \Lambda_n]_L
$$
and Proposition \ref{lem:ind}.
\end{proof}

We shall need moreover a modified form of the proposition, whose proof will be only sketched, being essentially the same as the one of Proposition \ref{lem:ind}

\begin{prop}
\label{corrie}

Under the same hypothesis as Proposition \ref{lem:ind}, $\!\!\Scr{P}$-almost everywhere w.r.t. sequences of initial data $Z_\infty$:
$$
\Lim_{n\rightarrow \infty}\Sup_{1\leq h\leq n}\left(\frac 1{|v_h|}{\E^n}\left[
|{\hat A}^{(k)}_{n,\var,M}
-\bar{V}^{(k)}|^2 (T_{z_h,c_1},c_1)M I_{c_1 \in {\Cal T}_\var (T,z_h)}\right]\right)=0.
$$

\end{prop}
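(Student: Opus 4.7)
The plan is to reduce Proposition \ref{corrie} to (a mild extension of) Proposition \ref{lem:ind}, through the Campbell--Mecke / Slivnyak formula for Poisson processes combined with the parametrization (\ref{camvar}) of the flow tube. For any functional $G$ depending on the obstacle configuration, the Palm structure of the Poisson process gives
\begin{equation*}
\E^n\bigl[M I_{c_1 \in A} G(c_1, \mathbf{c}_M)\bigr] = \mu_\var \int_A dc\, \E^n\bigl[G(c, \{c\} \cup \mathbf{c}_M)\bigr],
\end{equation*}
where on the right $\mathbf{c}_M$ is a fresh Poisson realization and $c$ is appended to it. Applied with $A = {\Cal T}_\var(T, z_h)$ and $G = |{\hat A}^{(k)} - {\bar V}^{(k)}|^2(T_{z_h, \cdot}, \cdot)$, then combined with the change of variable $c = x_h(u)+\var\omega$, $u \in [0,T]$, $\omega \in \pa B_1(0)$ from (\ref{camvar}) (under which $T_{z_h, c} = u$ and $dc = \var^{d-1}|v_h||n\cdot\omega|\,du\,d\omega$ up to the $O(\var)$ remainder $\phi_\var$), and with the Boltzmann--Grad scaling $\mu_\var \var^{d-1} = \mu$, this yields
\begin{equation*}
\frac{1}{|v_h|}\E^n\bigl[|{\hat A}^{(k)} - {\bar V}^{(k)}|^2(T_{z_h, c_1}, c_1)\,M\,I_{c_1 \in {\Cal T}_\var(T, z_h)}\bigr] = \mu \int_0^T \!\!du \int_{\pa B_1(0)} |n\cdot\omega|\,d\omega\, \E^n\bigl[|{\hat A}^{(k),+c_{u,\omega}} - {\bar V}^{(k)}|^2(u, c_{u,\omega})\bigr]
\end{equation*}
up to asymptotically negligible corrections, where $c_{u,\omega} = x_h(u) + \var\omega$ and ${\hat A}^{(k),+c}$ denotes ${\hat A}^{(k)}$ with the obstacle configuration augmented by an additional deterministic obstacle at $c$. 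The factor $|v_h|$ produced by the surface element of the flow tube exactly cancels the $|v_h|^{-1}$ in the statement.

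The inner expectation is then estimated by repeating verbatim the three-term square expansion of the proof of Proposition \ref{lem:ind}: the Poisson generating function formula together with Lemma \ref{lem:ind2} evaluates the expectation over the obstacle configuration, and the $\Scr P$-a.e. weak convergences of (\ref{conditiomomj}) identify the common limit $\int |{\bar V}^{(k)}(u, c')|^2 dc'$ of all three resulting terms, which then cancel. The extra deterministic obstacle at $c_{u,\omega}$ only multiplies ${\hat\xi}^{(k-1)}$ by one additional factor $\rho^{(k-2)}_\var(s, c_{u,\omega}, z_i) \in [0,1]$, producing corrections of the same order as the remainders already controlled in Lemma \ref{lem:ind2}; these vanish in the asymptotics (\ref{conditio})--(\ref{conditiovar}) for exactly the same reasons as in Proposition \ref{lem:ind}.

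The main obstacle is to establish this convergence \emph{uniformly} in the evaluation point $c_{u,\omega}$, which ranges over $\R^d$ as $(h, u, \omega)$ varies. Inspection of the proof of Proposition \ref{lem:ind} shows that all bounds coming from Lemmas \ref{lemmatubo}, \ref{misI}, and \ref{lem:ind2} involve only $L^p$ norms of $f_0$ and $q$ that are independent of the base point; the only place where $c$-uniformity requires a separate argument is in the $\Scr P$-a.e. weak convergences (\ref{conditiomomj}) of $q_n\mu_n^0$ and its products, but the rapid decay $f_0 \in \Scr{S}(\R^d\times\R^d)$, $q \in \Scr{S}(\R^d)$ together with the uniform pointwise bounds (\ref{somma1})--(\ref{somma2}) suffices to upgrade these to uniform-in-$c$ convergence on $\R^d$. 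Once this uniform version of Proposition \ref{lem:ind} is in hand, dominated convergence on the compact domain $[0, T] \times \pa B_1(0)$ gives the desired vanishing of the integral, uniformly in $h$.
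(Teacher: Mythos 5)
Your proposal is correct and follows essentially the same route as the paper: the paper also reduces Proposition \ref{corrie} to the three-term square expansion of Proposition \ref{lem:ind}, performing the Palm/Mecke step by direct resummation over the Poisson variables, applying the change of variables (\ref{camvar}) so that the tube factor $\var^{d-1}C_d|v_h|$ cancels $|v_h|^{-1}$ via $\mu_\var\var^{d-1}=\mu$, and obtaining uniformity in $z_h$ from the uniform-in-translation weak convergence of $(\Cal{T}^{v,s}_a q)_n\mu_n^0$ (Remark \ref{rdebotrasl}). Only a small slip: after the Palm reduction the common limit of the three terms at fixed $(u,\omega)$ is the pointwise quantity $|\bar{V}^{(k)}(u,x_h(u))|^2$, so the $u,\omega$ integration produces $\mu C_d\int_0^T du\,|\bar{V}^{(k)}(u,x_h(u))|^2$ as in the paper, not $\int |\bar{V}^{(k)}(u,c')|^2\,dc'$.
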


\begin{proof}

The proof is obtained along the same line as the proof of Prop. \ref{lem:ind}, since $T_{z_h, c_1}$ is independent of the stochastic times $\mbox{\Large{$\tau$}}_{\infty}$, is bounded by $T$ and all quantities involved depend on $T_{z_h, c_1}$ in a simple way, so to allow to get estimates uniform in $z_h$.

Using, instead of (\ref{auff}), the estimates
$$
\cfrac{\var^{1-d}}{n^2}\Som_{i,j=1}^n \|R^{ij(3)}_{\var}\|_{L^\infty [0,T]\times[0,T]}Q_n^{ij}(s,t,c)\leq K_r a_n^d (\var^{\zeta} + o(a_n^{-d})),
$$
$$
\frac{1}{n^2}\sum_{i,j=1}^{n}
\int_{{\Cal T}_{\var} (t,z_h)\cap{\Cal T}_{\var} (t,z_i)}
dc\,  Q_n^{ij}(s,t,c)
\leq K_Q |v_h|\var^{d-1} a_n^{d} (\var^{\zeta} + o (a_n^{-d}))
$$
which are a consequence of Lemma \ref{lem:ind2} and of the bounds (\ref{somma1}), (\ref{somma2}), we perform the change of variables (\ref{camvar}) and we get:
$$
\frac1{|v_h|}{\E^n}\left[M\left|{\hat A}^{(k)}_{n,\var,M} (T_{z_h,c_1},c_1)\right|^2 I_{c_1 \in {\Cal T}_\var (T,z_h)}\right]=O( a_n^d (\var^{\zeta} + o(a_n^{-d})))+
$$

$$
\frac {\mu C_d }{n^2} \sum_{i,j =1}^n \int_0^T ds \int_{[0,s]^2} \!\!\!ds_1
ds_2
Q_n^{ij}(s_1,s_2,x_h(s))
e^{-\mu [\rho^{(k-2)}(s_1,z_i)+\rho^{(k-2)}(s_2,z_j)]},
$$
while for the double product, we have
$$
\frac1{|v_h|}{\E^n}\left[M I_{\{c_1\in {\Cal T}_\var (T,z_h) \}}{\hat A}^{(k)}_{n,\var,M} (t,c_1) \bar{V}^{(k)}(t,c_1)\right]=
O(a_n^{d+1}\var)+
$$

$$
\frac {\mu C_d}n\Som_{j=1}^{n}\!\!\int_0^T\!\!\! ds \bar{V}^{(k)} (s,x_h(s))\!\int_0^{s}\!\!\!  ds_1 q_n (x_j (s_1) -x_h(s))
\exp[-\mu \rho^{(k-2)}(s_1,z_j) ].
$$
Of course
$$
\frac1{|v_h|}{\E^n}\left[M\left|\bar{V}^{(k)}(t,c_1)\right|^2 I_{c_1 \in {\Cal T}_\var (T,z_h)}\right]\!=\! \mu C_d \int_0^T ds \left|\bar{V}^{(k)}(s,x_h(s))\right|^2 + O(\var).
$$
Then, observing that, thanks to the $\Scr{P}$-a.e. weak convergence of $q_n\mu_n^0$,  the quantity  $$\frac 1{|v_h|}{\E^n}\left[
|{\hat A}^{(k)}_{n,\var,M} (T_{z_h,c_1},c_1)-\bar{V}^{(k)}(T_{z_h,c_1},c_1)|^2 M I_{c_1 \in {\Cal T}_\var (T,z_h)}\right]$$
vanishes for all fixed $h\in\N$, uniformly in $z_h$ because of the choice of the initial data (cf. Remark \ref{rdebotrasl} in Appendix \ref{app2}) and since all error estimates in this section are themselves uniform in $z_h$, the Proposition is proved.
\end{proof}

\subsection{Equivalence between (\ref{cap}) and  (\ref{approxlevk}): the $\cO,\cP$-frozen system}\label{frozen}

The last step to complete the proof of our main theorem is to  estimate the distance between the particle system (\ref{cap}) and the particle system
(\ref{approxlevk}). The difficulty, here, originates from the complicate dependence of the life functions, defining both systems, on the stochastic configuration: in particular, this dependence compels us to use (\ref{rischio}) for estimating differences of life functions for both species, preventing the use of any of its simplified forms, mentioned in Remark \ref{oss}.

Since the system (\ref{cap}) is equivalent in the large particles number limit to the system (\ref{sistk}), a mean--field system, what should happen for systems (\ref{cap}) and (\ref{approxlevk}) to be equivalent in the limit is that (\ref{approxlevk}) is somehow stable (in the large particles number limit) w.r.t. deletion (or addition) of a finite number of particles of the two species. This is in fact the case, as we shall show in the first Lemma in this section. Then, deleting light particles and obstacles in a suitable way from system (\ref{approxlevk}), we shall be able to simplify the dependence on the stochastic configuration in the estimates and finally to prove the asymptotic equivalence between (\ref{cap}) and (\ref{approxlevk}).

\medskip

Let $\cO$ and $\cP$ be finite subsets of ${\mathbb N}$ and $k$ the level of approximation. In this section, we introduce a sequence of particle systems where the obstacles with labels in $\cO$ (resp. light particles with labels in $\cP$) have no interaction with the light particles (resp. with the obstacles). These systems will give us a tool to estimate the required distance.

For a given configuration of obstacles, $\mathbf{c}_M=(c_1, \ldots, c_M)$ and a particle initial datum $\mathbf{z}_n=(z_1, \ldots, z_n) $, fix the sets of integers $\cO \subseteq \{1,\ldots, M\}$, $\cP \subseteq \{1, \ldots, n\}$ and define, for integers $M,k\geq1$:
\begin{equation}
\label{frozen-system}
\begin{array}{l}
\gamma_{n, \varepsilon, 0}^{(0,\emptyset,\cP)}(t,z_j)=\gamma_{n, \varepsilon, 0}^{(k,\emptyset,\cP)}(t,z_j)=I_{\{j \notin \cP\}},\\
\\
\gamma_{n, \varepsilon, M}^{(0,\cO,\cP)}(t,z_j)=I_{\{j \notin \cP\}},\\
\\
\delta_{n,\varepsilon,M}^{(0,\cO,\cP)}(t, c_\ell)=I_{\{\ell \notin \cO\}},\\
\\
\gamma_{n,\varepsilon,M}^{(k,\cO,\cP)}(t,z_j)= I_{\{x_j (s) \notin \Bigcup_{h=1}^M B_\var(c_h)\delta^{(k-1,\cO,\cP)} (c_h, s) \; \forall s \in [0,t)\}}I_{\{j \notin \cP\}},\\
\\
\delta_{n,\varepsilon,M}^{(k,\cO,\cP)}(t,c_\ell)=I_{\{A_{n,\varepsilon,M}^{(k,\cO,\cP)}(t, c_\ell) \leq \tau_{c_\ell}\}} I_{\{\ell \notin \cO\}},\\
\\
A_{n,\varepsilon,M}^{(k,\cO,\cP)}(t,c_\ell)=\cfrac{1}{n}\Som_{i=1}^n \int_{0}^{t} q_n (x_i(s)-c_\ell)\gamma_{n,\varepsilon,M}^{(k-1,\cO,\cP)}(s,z_i) ds .
\end{array}
\end{equation}

Remark that $A_{n,\varepsilon,M}^{(k,\cO,\cP)}(t,c_\ell)$ does not depend on $\tau_{c_h}$, $x_j, v_j$ for any $h \in \cO$ and any $j\in \cP$, and it does not depend on $c_h$, $h\in\cO$, whenever $\ell\neq h$. Moreover, for $(\cO,\cP)=(\emptyset, \emptyset)$, the system defined above is the $k$-th level approximation system defined in (\ref{approxlevk}), Section \ref{approximation}, i.e.
$$A_{n,\varepsilon,M}^{(k,\emptyset, \emptyset)}=V^{(k)}_{n,\varepsilon,M}, \qquad\gamma_{n,\varepsilon,M}^{(k,\emptyset, \emptyset)}=\xi_{n,\varepsilon,M}^{(k)}, \qquad \delta_{n,\varepsilon,M}^{(k,\emptyset, \emptyset)}=\eta_{n,\varepsilon,M}^{(k)}.$$

\smallskip

\subsubsection{Proof of the equivalence between (\ref{cap}) and  (\ref{approxlevk}) using the $\cO,\cP$-frozen system}

Using the bound (\ref{riduzione}), we can use the same strategy as in \cite{NOR}.

To shorten the notation, we define in this paragraph, for sets $\cO,\cO_1\subseteq \{1,\ldots, M\}$ and $\cP, \cP_1 \subseteq \{1, \ldots, n\}$:
$$
\Cal{A}^{nM(k)}_{\cO,\cP}(\cO_1,\cP_1)=\left| A_{n,\var,M}^{(k,\cO\cup\cO_1,\cP\cup\cP_1)}
-  A_{n,\var,M}^{(k,\cO,\cP)}\right|,
$$

\noindent
for $M= 1, \ldots$,

\begin{eqnarray*}
\Cal{I}_{\var,t,z_i}^{\cO, \cP, 0}(c_\ell)&=& 0\\
\\
\Cal{I}_{\var,t,z_i}^{\cO, \cP, M}(c_\ell)&=& M  I_{\{c_\ell\in {\Cal T}_\var (t,z_i)\}} I_{\cO\subset\{1,\ldots,M\}}I_{\cP\subset\{1,\ldots,n\}},
\end{eqnarray*}
and $\Cal{I}_{\var,t,z_i}^{M}=\Cal{I}_{\var,t,z_i}^{\emptyset, \emptyset, M}$.

We prove the following lemmas:

\begin{lemma}
\label{auxil}
Consider the stochastic variables defined in (\ref{frozen-system}). Then,
under the same hypothesis as in Proposition \ref{lem:ind}, for all integers $k\geq 1$, $p\geq 1$ and for all ${\Cal O}_1, {\Cal P}_1\subset \N$ s.t. $0<\#{\Cal O}_1, \#{\Cal P}_1 < \infty$, $\Scr{P}$-almost everywhere w.r.t. sequences of initial data $Z_\infty$, we have the limits

\begin{equation}
\label{aux1}\\
\Lim_{n\to\infty} \frac{1}{n^p}
\sum_{\lmul{1\leq j_1\leq n\\,\ldots,\\ 1\leq j_p \leq n}
}
\Sup_{z_{j_0}\in\R^d\times\R^d}\frac 1{|v_{j_0}|}\E^n\!\!\left[\int_{[0,T]^p}
\prod_{h=1}^{p}ds_{j_h} q_n (x_{j_h} (s_{j_h}) -c_{i_h}) I_{\{c_{i_h} \in {\Cal T}_\var (s_{j_{h-1}},z_{j_{h-1}})\}}\right.
\end{equation}
$$
\left.\times
\Cal{I}_{\var,s_{j_p}, z_{j_p}}^{\cO\cup\cO_1,\cP\cup\cP_1, M}(c_\ell)
M^{\#\{i_1,\ldots, i_p, \ell\}-1}
\Cal{A}^{nM(k)}_{\cO,\cP}(\cO_1,\cP_1)(T_{z_{j_p},c_\ell},c_\ell)
\phantom{\int}\right] =0
$$

$$
\Lim_{n\to\infty}\Sup_{z_{j_0}\in\R^d\times\R^d}\frac 1{|v_{j_0}|}\E^n\left[\Cal{I}_{\var,s_{j_0}, z_{j_0}}^{\cO\cup\cO_1,\cP\cup\cP_1, M}(c_\ell)\Cal{A}^{nM(k)}_{\cO,\cP}(\cO_1,\cP_1)(T_{z_{j_0},c_\ell},c_\ell)\right] =0\phantom{\int_{[0,T]^p}
\prod_{h=1}^{p}ds_{j_h} q_n (x_{j_h} (s_{j_h}) -c_{i_h}) I_{\{c_{i_h} \in {\Cal T}_\var (s_{j_{h-1}},z_{j_{h-1}})\}}}
$$

\noindent
for all $i_1 ,\ldots, i_p \in  \Cal O \cup {\Cal O}_1 $ and $s_{j_0}<T$, $z_{j_0}=(x_{j_0}, v_{j_0})$.
Limits (\ref{aux1}) are valid also in the case where $T$ replaces $T_{z_j,c_\ell}$.

\end{lemma}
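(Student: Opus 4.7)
The plan is to proceed by induction on the approximation level $k$, exploiting the fact that each $A^{(k,\cO,\cP)}_{n,\var,M}$ is built from $\gamma^{(k-1,\cO,\cP)}_{n,\var,M}$, which in turn is built from $\delta^{(k-2,\cO,\cP)}_{n,\var,M}$. The base case is $k=1$: here
\[
\gamma^{(0,\cO\cup\cO_1,\cP\cup\cP_1)}(s,z_i)-\gamma^{(0,\cO,\cP)}(s,z_i)=-I_{\{i\in\cP_1\setminus\cP\}},
\]
so $\Cal{A}^{nM(1)}_{\cO,\cP}(\cO_1,\cP_1)(t,c_\ell)\leq \#\cP_1\, T \, a_n^d \|q\|_{\infty}/n$, which vanishes as $n\to\infty$ by (\ref{conditio}) since $\kappa<1/2$. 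Plugging this into the expression of (\ref{aux1}), one is left with estimating the Poisson expectation of a product of $q_n$-factors along the trajectories together with the indicator $\Cal{I}^{\cO\cup\cO_1,\cP\cup\cP_1,M}_{\var,s_{j_p},z_{j_p}}(c_\ell) M^{\#\{i_1,\ldots,i_p,\ell\}-1}$. The Poissonian identity $\E^n_c[M^q\prod_{r=1}^{q} I_{\{c_{i_r}\in A_r\}}]=\mu_\var^q\prod_{r=1}^{q}[A_r]_L$, combined with the tube-volume computation (\ref{camvar}) and the Boltzmann-Grad relation (\ref{BG}), precisely compensates the $\var^{(d-1)\#\{i_1,\ldots,\ell\}}$ power, leaving a bound proportional to $a_n^d/n$ that vanishes.

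For the inductive step, suppose (\ref{aux1}) holds for $k-1$ (and any admissible $p$). Starting from $\Cal{A}^{nM(k)}_{\cO,\cP}(\cO_1,\cP_1)(T_{z_{j_p},c_\ell},c_\ell)$, insert the defining formula for $A^{(k,\cdot,\cdot)}_{n,\var,M}$ in (\ref{frozen-system}) and expand the difference of $\gamma^{(k-1,\cdot,\cdot)}_{n,\var,M}$ via the bound (\ref{riduzione}). One contribution is $I_{\{i\in\cP_1\setminus\cP\}}$ and is controlled as in the base case; the remaining contribution is a sum $\sum_{h=1}^M I_{\{c_h\in\Cal T_\var(s,z_i)\}}|\delta^{(k-2,\cO\cup\cO_1,\cP\cup\cP_1)}-\delta^{(k-2,\cO,\cP)}|(T_{z_i,c_h},c_h)$. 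The $I_{\{h\in\cO_1\}}$-part of this $\delta$-difference gives finitely many extra terms, again of the base-case type. For the remaining indices, Lemma \ref{3.2} yields, for every $\delta>0$,
\[
\E_{\tau_{c_h}}|\delta^{(k-2,\cO\cup\cO_1,\cP\cup\cP_1)}-\delta^{(k-2,\cO,\cP)}|(T_{z_i,c_h},c_h)\leq \tfrac{1}{\delta}\Cal{A}^{nM(k-2)}_{\cO,\cP}(\cO_1,\cP_1)(T_{z_i,c_h},c_h)+2\delta.
\]
Substituting this into (\ref{aux1}) produces the same expression with $p$ replaced by $p+1$ (the new integration variable being $s_{j_{p+1}}$, and the new tube indicator being $I_{\{c_{h}\in\Cal{T}_\var(s_{j_p},z_{j_p})\}}$), $k$ replaced by $k-2$, and a residual term of size $2\delta$ multiplied by a Poissonian moment which is $O(1)$ by the same flow-tube integration argument. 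The inductive hypothesis annihilates the principal term, and one then lets $\delta\to 0$. The second limit in (\ref{aux1}) is handled identically, all bounds being uniform in $z_{j_0}$ thanks to Remark \ref{rdebotrasl}.

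The main obstacle is the combinatorial control of the factor $M^{\#\{i_1,\ldots,i_p,\ell\}-1}$: each iteration of (\ref{riduzione}) introduces an additional $\sum_{h=1}^M$ and an additional trajectory integral. These potentially dangerous factors are balanced by the Boltzmann-Grad scaling $\mu_\var\var^{d-1}=\mu$, but only provided the intersections of flow tubes behave well in average. This is precisely the role of Lemma \ref{lem:ind2} (controlling the remainder $R^{(2)}$), Lemma \ref{lemmatubo} (bounding intersections of tubes) and Lemma \ref{misI} (controlling grazing collisions), whose combined output is of order $\var^{d-1}(\var^\zeta+o(a_n^{-d}))$; together with (\ref{conditio}) and (\ref{conditiovar}), and the integrability of velocity moments provided by the hypotheses on $f_0$, this yields at each induction step an error that is bounded uniformly in $z_{j_0}$ and vanishes in the prescribed asymptotics.
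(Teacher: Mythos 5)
Your overall architecture is the same as the paper's: induction on $k$, the crude $O(a_n^d/n)$ bound at level $k=1$, reduction of the $\gamma$-difference through (\ref{riduzione}), an application of Lemma \ref{3.2} with a free parameter $\delta$, and re-absorption of the resulting lower-level $\Cal{A}$-terms into the statement with $p+1$ integration variables. The genuine gap is in the key inequality of your inductive step. You invoke the simplified form of Lemma \ref{3.2} (Remark \ref{oss}), namely $\E_{\tau_{c_h}}\bigl|\delta^{(k-2,\cO\cup\cO_1,\cP\cup\cP_1)}-\delta^{(k-2,\cO,\cP)}\bigr|\le \frac1\delta\,\Cal{A}^{nM(k-2)}_{\cO,\cP}(\cO_1,\cP_1)+2\delta$, but that form requires at least one of the two risk functions evaluated at $c_h$ to be independent of $\tau_{c_h}$. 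For $k-2\ge 3$ and $h\notin\cO\cup\cO_1$ this fails: both $A^{(k-2,\cO,\cP)}(\cdot,c_h)$ and $A^{(k-2,\cO\cup\cO_1,\cP\cup\cP_1)}(\cdot,c_h)$ depend on $\tau_{c_h}$, because the activity of the obstacle $h$ feeds back into the survival of the light particles that build the risk function at $c_h$; without that independence the indicator term of Lemma \ref{3.2} cannot be bounded by $2\delta$ (if $S$ depends on $\tau$, $\E_\tau I_{\{|S-\tau|<\delta\}}$ need not be small). The paper resolves exactly this point by comparing both risk functions to the further-frozen function $A^{(k-2,\cO\cup\cO_1\cup\{h\},\cP\cup\cP_1\cup\{j\})}$, which is independent of $\tau_{c_h}$ by construction of (\ref{frozen-system}); this is the content of (\ref{ricor2}), and it produces the two differences $\Cal{A}^{nM(k-2)}_{\cO,\cP}(\cO_1\cup\{h\},\cP_1\cup\{j\})$ and $\Cal{A}^{nM(k-2)}_{\cO\cup\cO_1,\cP\cup\cP_1}(\{h\},\{j\})$, with augmented index sets, which is precisely why the lemma must be formulated and proved for arbitrary finite $\cO_1,\cP_1$. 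Your inductive hypothesis applied to $\Cal{A}^{nM(k-2)}_{\cO,\cP}(\cO_1,\cP_1)$ with unchanged sets therefore does not match the recursion that actually arises.

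Two secondary points. Since the recursion drops the level by two ($k\mapsto k-2$), you need both $k=1$ and $k=2$ as base cases; $k=2$ is not of the same type as $k=1$, because the deletion of obstacles first contributes there and produces the tube-intersection terms that require Lemma \ref{lemmatubo} and Lemma \ref{misI} together with $a_n^d\var^{\zeta}\to0$ (you only gesture at these lemmas in a closing remark). Also, in the base case the Poissonian identity you state holds for distinct labels, whereas the $i_1,\ldots,i_p,\ell$ may coincide; one then needs the grouping of repeated labels and the reduction $I_{\{c\in\bigcap_h\Cal{T}_\var\}}\le I_{\{c\in\Cal{T}_\var\}}$ together with (\ref{somma1}) and (\ref{somma2}), as in the paper. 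This last item is a fixable imprecision rather than a gap, but the misuse of Remark \ref{oss} in the inductive step is a real one.
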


\begin{rem}
This lemma shows that particle systems obtained from (\ref{approxlevk}) by deleting a finite number of light particles and/or obstacles are equivalent, in the prescribed asymptotics.
\end{rem}

\begin{proof}

The proof is obtained by induction.

For $k=1$ and $p\geq 0$, we have
$$
\Cal{A}^{nM(1)}_{\cO,\cP}(\cO_1,\cP_1)(T_{z_{j_p},c_\ell},c_\ell)\leq
\frac 1n \sum_{h\in \cP_1}\int_0^{T_{z_{j_p}, c_\ell}} ds q_n (x_h(s)-c_\ell).
$$
and the second limit in (\ref{aux1}) follows  straightforwardly, since
$$\Cal{I}_{\var,s_{j_p}, z_{j_p}}^{\cO\cup\cO_1,\cP\cup\cP_1, M}(c_\ell)\Cal{A}^{nM(1)}_{\cO,\cP}(\cO_1,\cP_1)(T_{z_{j_p},c_\ell},c_\ell)\leq
$$
$$ \Cal{I}_{\var,s_{j_p}, z_{j_p}}^{\cO\cup\cO_1,\cP\cup\cP_1, M}(c_\ell) \frac{a_n^d}{n}\#\cP_1 \|q\|_{\infty}T$$
and $\E^n[\Cal{I}_{\var,s_{j_p}, z_{j_p}}^{\cO\cup\cO_1,\cP\cup\cP_1, M}(c_\ell)]\leq \mu C_d T |v_{j_p}|$, for $p\geq 0$.

Denoting
$
N=\#\{i_1,\ldots, i_p, \ell\}
$, we build a partition of labels by grouping them in the following way.

We start from the label $\ell$ and we denote as $i_{k^{(1)}_1},\ldots, i_{k^{(1)}_{w_1}}$, with $k^{(1)}_1<k^{(1)}_2<\ldots<k^{(1)}_{w_1}$, the $w_1$ labels among the $p+1$ labels $i_1,\ldots, i_p, \ell$ having common value $\ell$; notice that $k^{(1)}_{w_1}=p+1$, being $c_{i_{p+1}}=c_{\ell}$ the obstacle associated to the light particle label $j_{p+1}$.
We then consider $i_{f_1}=\max\{i_s : i_s \neq \ell\}$ and we call  $i_{k^{(2)}_{1}},\ldots, i_{k^{(2)}_{w_2}}$ the $w_2$
labels having common value $i_{f_1}$, always using the ordering $k^{(2)}_1<k^{(2)}_2<\ldots<k^{(2)}_{w_2}$  (i.e., $k^{(2)}_{w_2}=f_1$). We build in this way $N$ groups of labels and we denote $\bar{q}$ the group label such that $k^{(\bar{q})}_{1}=1$ .

We can write then
\begin{equation}
\label{uffaux}\\
\frac{1}{n^p}
\sum_{\lmul{1\leq j_1\leq n\\,\ldots,\\ 1\leq j_p \leq n}
}
\E^n\!\!\left[\int_{[0,T]^p}
\prod_{h=1}^{p}ds_{j_h} q_n (x_{j_h} (s_{j_h}) -c_{i_h}) I_{\{c_{i_h} \in {\Cal T}_\var (s_{j_{h-1}},z_{j_{h-1}})\}}\right.
\end{equation}
$$
\left.\times
\Cal{I}_{\var,s_{j_p}, z_{j_p}}^{\cO\cup\cO_1,\cP\cup\cP_1, M}(c_\ell)
M^{\#\{i_1,\ldots, i_p, \ell\}-1}
\frac 1n \sum_{j_{p+1}\in \cP_1}\int_0^{T_{z_{j_{p}}, c_\ell}} ds q_n (x_{j_{p+1}}(s)-c_\ell)
\phantom{\int}\right]\leq
$$
$$
\frac{\E^n_c[M^N]}{[\Lambda_n]^N}\frac{1}{n^{p+1}}\sum_{j_{k^{(1)}_{w_1}}\in \cP_1}
\sum_{\lmul{1\leq j_{k^{(1)}_{h}}\leq n\\
h=1,\ldots,w_1-1
}}
\sum_{\lmul{1\leq j_{k^{(2)}_{h}}\leq n\\h=1,\ldots,w_q}}
\ldots
\sum_{\lmul{1\leq j_{k^{(N)}_{h}}\leq n\\h=1,\ldots,w_N}}
\prod\limits_{q=1}^{N}
\int_{\Lambda_n} dc_{i_{k^{(q)}_{w_{q}}}}\int_{[0,T]^{w_{q}}}
$$
$$
\prod\limits_{h=1}^{w_{q}}
\left[ds_{j_{k^{(q)}_{h}}}
q_n
(x_{j_{k^{(q)}_{h}}}(s_{j_{k^{(q)}_{h}}})-c_{i_{k^{(q)}_{w_{q}}}})
I_{\{c_{i_{k^{(q)}_{w_{q}}}}\in \bigcap\limits_{h=1}^{w_{q}}
{\Cal T}_\var (s_{j_{k^{(q)}_{h}-1}},z_{j_{k^{(q)}_{h}-1}})\}}
\right]
$$

We use then the bound
$$
I_{\{c_{i_{k^{(q)}_{w_{q}}}}\in \bigcap\limits_{h=1}^{w_{q}}
{\Cal T}_\var (s_{j_{k^{(q)}_{h}-1}},z_{j_{k^{(q)}_{h}-1}})\}}
\leq I_{\{c_{i_{k^{(q)}_{w_{q}}}}\in
{\Cal T}_\var (s_{j_{k^{(q)}_{1}-1}},z_{j_{k^{(q)}_{1}-1}})\}}.
$$
so that, thanks to (\ref{somma1}), we have:
$$
\!\!\!(\ref{uffaux})\leq
\frac{\E^n_c[M^N]}{[\Lambda_n]^N}
\left(\#\cP_1 \|q\|_{\infty}T\frac{a_n^d}{n}\right)
\frac{\prod\limits_{q=1}^{N}(K_1 T)^{w_q-1}}{n^{N-1}}
\int_{\Lambda^N_n}
\prod\limits_{q=1}^{N}
dc_{i_{k^{(q)}_{w_{q}}}}\left\{
I_{\{c_{i_{k^{(\bar{q})}_{w_{\bar{q}}}}}\in
{\Cal T}_\var (s_{j_{0}},z_{j_{0}})\}}
\right.
$$
$$
\left.
\!\!\!\!\!\!\!\!\!\prod\limits_{\lmul{q=1\\q\neq \bar{q}}}^{N}
\left[\sum_{1\leq j_{k^{(q)}_{1}}\leq n}
\int_{[0,T]}ds_{j_{k^{(q)}_{1}-1}}
q_n
(x_{j_{k^{(q)}_{1}-1}}(s_{j_{k^{(q)}_{1}-1}})-c_{i_{k^{(q)}_{1}-1}})
I_{\{c_{i_{k^{(q)}_{w_{q}}}}\in
{\Cal T}_\var (s_{j_{k^{(q)}_{1}-1}},z_{j_{k^{(q)}_{1}-1}})\}}
\right]\right\}
$$
and, using (\ref{somma2}) and $\E_c^n[M^{N}]\leq K (\mu_{\var}[\Lambda_n])^{N}$
we get finally:
$$
(\ref{uffaux})
\leq K \left(\#\cP_1 \|q\|_{\infty}T\frac{a_n^d}{n}\right)
\left( \frac{\mu T C_d K_2}{K_1}\right)^{N}
\frac{(T K_1)^{p+1}}{K_2}|v_{j_0}|
$$
so that (\ref{aux1}) is valid for $k=1$.

For $k=2$, since
$$
\Cal{I}_{\var,s_{j_p}, z_{j_p}}^{\cO\cup\cO_1,\cP\cup\cP_1, M}(c_\ell)\Cal{A}^{nM(2)}_{\cO,\cP}(\cO_1,\cP_1)(T_{z_{j_p},c_\ell},c_\ell)\leq\phantom{\Cal{I}_{\var,s_{j_p}, z_{j_p}}^{\cO\cup\cO_1,\cP\cup\cP_1, M}(c_\ell)}
$$
$$
\Cal{I}_{\var,s_{j_p}, z_{j_p}}^{\cO\cup\cO_1,\cP\cup\cP_1, M}(c_\ell)\left[
\frac 1n \sum_{h\in \cP_1}\int_0^{T_{z_{j_p}, c_\ell}} ds q_n (x_h(s)-c_\ell)
+
\right.
$$
$$
\left.
\frac 1n \sum_{h\notin \cP_1}\int_0^{T_{z_{j_p}, c_\ell}} ds q_n (x_h(s)-c_\ell)
\Som_{m\in\cO}I_{\{c_{m} \in {\Cal T}_\var (s_h,z_h)\}}
\right],
$$
in the same way (using Lemma \ref{lemmatubo} and Lemma \ref{misI}) we get the second limit in (\ref{aux1}) and
$$
(\ref{uffaux})
\leq K_D\left(\#\cP_1 \frac{a_n^d}{n}+\#\cO(a_n^d\var^{\zeta}+\var^{d-1})\right)
\left( \frac{\mu T C_d K_2}{K_1}\right)^{N}
\frac{(T K_1)^{p+1}}{K_2}|v_{j_0}|
$$
(with $K_D$ depending on $\|q\|_{\infty}$, $T$, $K_1$, $K_2$, $C_d$).

Let assume then that (\ref{aux1}) is true for $k-1$, with $k>2$. From the definition of
$A_{n,\var,M}^{(k,\cO,\cP)} (T_{z_{j_p},c_\ell},c_\ell)$ and the bound (\ref{riduzione}) we have, for $p\geq 0$:
\begin{eqnarray}
\label{ricor1}\\
\nonumber
\lefteqn{\Cal{I}_{\var,s_{j_p}, z_{j_p}}^{\cO\cup\cO_1,\cP\cup\cP_1, M}(c_\ell)
\Cal{A}^{nM(k)}_{\cO,\cP}(\cO_1,\cP_1)(T_{z_{j_{p}},c_\ell},c_\ell)
\leq \Cal{I}_{\var,s_{j_p}, z_{j_p}}^{\cO\cup\cO_1,\cP\cup\cP_1, M}(c_\ell)\times}\\
\nonumber
&&\left\{\left(\#\cP_1 \|q\|_{\infty}T\frac{a_n^d}{n}\right)+\frac 1n \sum_{j_{p+1}\notin \cP} \int_0^{T} ds q_n (x_{j_{p+1}} (s) -c_{\ell})\times\right.\\
\nonumber
&&
\left.\left[\sum_{m=1}^{M}
I_{\{c_m \in {\Cal T}_\var (s,z_{j_{p+1}})\}}
|\de_{n,\var,M}^{(k-2,\cO,\cP)}
- \de_{n,\var,M}^{(k-2,\cO\cup {\cO}_1,\cP\cup {\cP}_1)}|(T_{z_{j_{p+1}},c_m},c_m)\right]\right\},
\end{eqnarray}
and from Lemma \ref{3.2}, for $\de>0$,
\begin{equation}
\label{ricor2}
\E_{\tau_m}\left[\left| \de^{(k-2,\cO\cup {\cO}_1,\cP\cup {\cP}_1)}_{n,\var,M}
-   \delta^{(k-2,\cO,\cP)}_{n,\var,M}\right| (T_{z_{j_{p+1}},c_m},c_m)\right]
\leq 2\de+
\end{equation}

$$
\frac 1\delta
{\E_{\tau_m}}\left[
\Cal{A}^{nM(k-2)}_{\cO,\cP}
(\cO_1\cup\{m\},\cP_1\cup\{j_{p+1}\})(T_{z_{j_{p+1}},c_m},c_m)\right]
$$

$$
+\frac 1\delta
{\E_{\tau_m}}\left[\Cal{A}^{nM(k-2)}_{\cO\cup\cO_1,\cP\cup\cP_1}
(\{m\},\{j_{p+1}\})
(T_{z_{j_{p+1}},c_m},c_m)\right].
$$
When substituting (\ref{ricor2}) through (\ref{ricor1}) into the expectation values in formula (\ref{aux1}), the contribution to the expectation value coming from the first term on the righthand side of (\ref{ricor2}) is bounded by $2 K_I \de$, where $K_I=K_I (T, C_d, \mu,K_1,K_2,p,\#\{i_1,\ldots, i_p, \ell\})$, while the contribution to (\ref{aux1}) due to the last terms in (\ref{ricor2}) vanish asymptotically because of the inductive hypothesis.  The remaining term coming from (\ref{ricor1}) is $O(a_n^d n^{-1})$.

The last part of the Lemma is proved
replacing $T_{z_j,c_\ell}$ by $T$.
\end{proof}

\begin{lemma}

\label{gelato}
Consider the stochastic variables defined in (\ref{cap}) and (\ref{frozen-system}).
Under the same hypothesis as in Proposition \ref{lem:ind},
$\forall k\geq 1$, $\forall \cO, \cP \subset \N$ such that \mbox{$0 \le \#\cO, \#\cP < \infty$}, $\Scr{P}$-almost everywhere w.r.t. sequences of initial data $Z_\infty$,
\begin{equation}
\label{teorgelato}
\Lim_{n\to\infty}
\Sup_{1\leq u \leq n}\frac 1{|v_u|} \E^n\big[\Cal{I}_{\var,T,z_u}^{\cO, \cP, M}(c_\ell)|A^{(k, \cO, \cP)}_{n,\var, M} - {\hat A}^{(k)}_{n,\var, M}| (T_{z_u,c_\ell},c_\ell) |\big]=0.
\end{equation}

\end{lemma}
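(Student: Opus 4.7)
I would prove Lemma \ref{gelato} by induction on $k\ge 1$, following the strategy used in Lemma \ref{auxil}. The base cases $k=1$ and $k=2$ are verified directly: for $k=1$ one has $\hat\xi^{(0)}_{n,\var,M}\equiv 1$ and $\gamma^{(0,\cO,\cP)}_{n,\var,M}(t,z_j)=I_{\{j\notin\cP\}}$, so $|\hat A^{(1)}_{n,\var,M}-A^{(1,\cO,\cP)}_{n,\var,M}|\le \frac{\#\cP}{n}\|q\|_\infty T\, a_n^d$; for $k=2$, since $\bar V^{(0)}=0$ makes $I_{\{\bar V^{(0)}<\tau_h\}}=1$ almost surely while $\delta^{(0,\cO,\cP)}_{n,\var,M}(r,c_h)=I_{\{h\notin\cO\}}$, the difference reduces to a finite $h\in\cO$ sum of flow-tube indicators, bounded by a direct Poisson estimate using (\ref{BG}). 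In both cases, combining with the uniform bound $\frac{1}{|v_u|}\E^n[\Cal I^{\cO,\cP,M}_{\var,T,z_u}(c_\ell)]\le \mu C_d T$ (which follows from $\E^n_c[M I_{\{c_\ell\in\Cal T_\var(T,z_u)\}}]\le \mu_\var[\Cal T_\var(T,z_u)]_L$) and (\ref{conditio}) gives the claim.

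For the inductive step ($k\ge 3$, assuming the statement at level $k-2$), I would split the difference into an $i\in\cP$ piece (handled as in the base case) and an $i\notin\cP$ piece, for which the bound (\ref{riduzione}) gives
\begin{equation*}
|\hat\xi^{(k-1)}_{n,\var,M}-\gamma^{(k-1,\cO,\cP)}_{n,\var,M}|(s,z_i)\le\sum_{h=1}^M I_{\{c_h\in\Cal T_\var(s,z_i)\}}\bigl|I_{\{\bar V^{(k-2)}(T_{z_i,c_h},c_h)<\tau_h\}}-\delta^{(k-2,\cO,\cP)}_{n,\var,M}(T_{z_i,c_h},c_h)\bigr|.
\end{equation*}
The $h$-sum is split into $h\in\cO$ (where $\delta^{(k-2,\cO,\cP)}\equiv 0$, the residual indicator $I_{\{\bar V^{(k-2)}<\tau_h\}}$ being handled, as in the $k=2$ base case, by a Poisson-flow-tube estimate exploiting the finiteness of $\#\cO$ and the uniform boundedness of $\bar V^{(k-2)}$) and $h\notin\cO$. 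For the latter, Lemma \ref{3.2} in the simplified form of Remark \ref{oss} (applicable because $\bar V^{(k-2)}$ is $\tau_h$-independent) yields, for any $\delta>0$,
\begin{equation*}
\E_{\tau_h}\bigl|\cdot\bigr|(T_{z_i,c_h},c_h)\le 2\delta+\frac{1}{\delta}\E_{\tau_h}|A^{(k-2,\cO,\cP)}_{n,\var,M}-\bar V^{(k-2)}|(T_{z_i,c_h},c_h),
\end{equation*}
and a triangle inequality then splits $|A^{(k-2,\cO,\cP)}_{n,\var,M}-\bar V^{(k-2)}|\le|A^{(k-2,\cO,\cP)}_{n,\var,M}-\hat A^{(k-2)}_{n,\var,M}|+|\hat A^{(k-2)}_{n,\var,M}-\bar V^{(k-2)}|$, the first summand being controlled by the induction hypothesis and the second by Proposition \ref{corrie} (both precisely tailored to evaluations at maximal collision times).

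The principal technical obstacle is the combinatorial bookkeeping required to propagate these estimates back through the outer structure. After (\ref{riduzione}) and Lemma \ref{3.2}, the quantity to estimate carries an outer $s$-integration weighted by $q_n(x_i(s)-c_\ell)$ summed over $i\notin\cP$, an inner sum over $h\notin\cO$ weighted by $I_{\{c_h\in\Cal T_\var(s,z_i)\}}$, and a residual risk difference evaluated at $T_{z_i,c_h}$, while both the induction hypothesis and Proposition \ref{corrie} furnish control only in the form $\sup_{z_v}|v_v|^{-1}\E^n[\,\cdot\,]$ with a single $\Cal I$-factor. This is overcome by the same device employed in the proof of Lemma \ref{auxil}: grouping indices that refer to the same obstacle, exchanging sums and expectations with the help of (\ref{somma1})--(\ref{somma2}) and Lemma \ref{lem:ind2}, and exploiting the uniformity in the outer particle variable provided by the inductive statement and by Proposition \ref{corrie}. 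Sending $n\to\infty$ first and then $\delta\to 0$ concludes the inductive step.
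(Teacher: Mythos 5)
Your overall skeleton (induction in steps of two, the $k=1,2$ base cases, splitting off the $i\in\cP$ and $h\in\cO$ contributions, reducing via (\ref{riduzione}) to a difference of obstacle life functions, then Lemma \ref{3.2} plus a triangle inequality feeding Proposition \ref{corrie} and the inductive hypothesis) is the same as the paper's, and your base cases are essentially right (for $k=2$ the $h=\ell$ contribution is not a "direct Poisson estimate": it produces a tube-intersection term $\mu_\var\|q_n\|_\infty[{\Cal T}_\var\cap{\Cal T}_\var]_L$ which needs Lemma \ref{lemmatubo} and Lemma \ref{misI}, whence the $a_n^d\var^{\zeta}$ piece in the paper's bound). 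The genuine gap is in the inductive step, at exactly the point you defer to "the same device employed in the proof of Lemma \ref{auxil}". You take $\bar V^{(k-2)}$ as the reference risk in Lemma \ref{3.2}, so the residual you must control inside the expectation is $|A^{(k-2,\cO,\cP)}-\hat A^{(k-2)}|+|\hat A^{(k-2)}-\bar V^{(k-2)}|$ evaluated at $(T_{z_i,c_h},c_h)$, sitting under the weights $\Cal I^{\cO,\cP,M}_{\var,T,z_u}(c_\ell)\,q_n(x_i(s)-c_\ell)\,I_{\{c_h\in{\Cal T}_\var(s,z_i)\}}$. The inductive hypothesis and Proposition \ref{corrie} are statements of the form $\sup_u |v_u|^{-1}\E^n[\Cal I(\cdot)\,|\cdot|]$, so to use them you must factorize the $c_\ell$-dependent weight out of the expectation. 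But both $A^{(k-2,\cO,\cP)}$ and $\hat A^{(k-2)}$ depend on the whole obstacle configuration, in particular on $c_\ell$ and on the $\tau$'s (through $\gamma^{(k-3,\cO,\cP)}$ and $\hat\xi^{(k-3)}$), so the expectation does not split, and the bookkeeping trick of Lemma \ref{auxil} does not apply: there the Poisson expectation factorizes because the integrand is a product of explicit functions ($q_n$'s and tube indicators) of single obstacle positions, which is not the case for your residual. Bounding the offending weight crudely (e.g. $q_n\le a_n^d\|q\|_\infty$) is fatal, since the inductive hypothesis only gives $o(1)$, not $o(a_n^{-d})$.

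This is precisely why the paper, instead of referencing $\bar V^{(k-2)}$, uses as reference in Lemma \ref{3.2} the doubly frozen risk $A^{(k-2,\cO\cup\{\ell\},\cP\cup\{j\})}$: the substitution error $\Cal A^{nM(k-2)}_{\cO,\cP}(\{\ell\},\{j\})$ is controlled by Lemma \ref{auxil} (used as a quantitative input, not merely as a template), and the frozen risk is independent of $c_\ell$ (and of $\tau_\ell$, $z_j$), which is what allows the expectation to factor into $\E^n[I_{\{c\in{\Cal T}_\var(T,z_u)\}}q_n(x_j(s)-c)]$ times $\E^n[\Cal I^{\cO,\cP,M}_{\var,s,z_j}(c_1)|\bar V^{(k-2)}-A^{(k-2,\cO\cup\{\ell\},\cP\cup\{j\})}|(T_{z_j,c_1},c_1)]$; only then does the triangle inequality through $\hat A^{(k-2)}$ deliver terms in the exact sup-form required, the first via Cauchy--Schwarz and Proposition \ref{corrie}, the second via the inductive hypothesis at the \emph{enlarged} sets $(\cO\cup\{\ell\},\cP\cup\{j\})$ (which is the real reason the lemma must be proved for all finite $\cO,\cP$). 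Your proposal omits both the doubly frozen reference and any use of Lemma \ref{auxil}, so the step where you invoke the inductive hypothesis and Proposition \ref{corrie} is not justified as written; to repair it you would have to reinstate this decoupling device (or supply an alternative argument for the factorization, which your sketch does not contain).
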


\begin{proof}

We can prove Lemma \ref{gelato} by induction.

For $k=1$ and $1\leq u \leq n$ we have:
$$
\!\!\frac 1{|v_u|}\E^n[\Cal{I}_{\var,T,z_u}^{\cO, \cP, M}(c_\ell) |A_n^{(1,\cO,\cP)}
- {\hat A}^{(1)}_n |(T_{z_u,c_\ell},c_\ell)
]
\leq
(\#\cP) C_d T^2  \mu
\|q\|_{\infty} \frac {a_n^{d}}{n},
$$
and for $k=2$ (using Lemma \ref{misI})
$$
\!\!\frac 1{|v_u|}\E^n[\Cal{I}_{\var,T,z_u}^{\cO, \cP, M}(c_\ell) |A_n^{(2,\cO,\cP)}
- {\hat A}^{(2)}_n |(T_{z_u,c_\ell},c_\ell)
]
\leq K_m (\#\cP \frac {a_n^{d}}{n} + \#\cO \var^{d-1} + a_n^{d} \var^{\zeta})
$$
so that (\ref{teorgelato}), because of (\ref{conditio}) and (\ref{conditiovar}), is valid ($K_m$ depends on $\|q\|_{\infty}$, $T$, $\mu$, $K_2$, $C_d$).

\smallskip

Let assume (\ref{teorgelato}) is valid for a given $k>1$ and $\forall \cO, \cP$.

We shall use the notation $\Som_{j\notin\cP}=\Som_{\lmul{1\leq j\leq n\\\\j \notin \cP}}$ and $\Som_{h\notin\cO}=\Som_{\lmul{1\leq h \leq M\\\\ h\notin\cO}}$.

From the definitions (\ref{frozen-system}) and (\ref{cap})
we obtain
\begin{eqnarray*}
\lefteqn{
\frac 1{|v_u|}\E^n [
\Cal{I}_{\var,T,z_u}^{\cO, \cP, M}(c_\ell)|A_{n,\var,M}^{(k,\cO,\cP)}
- {\hat A}^{(k)}_{n,\var,M}| (T_{z_u,c_\ell},c_\ell)
] \leq
(\#\cP) C_d T^2 \mu
\|q\|_{\infty}
\frac {a_n^{d}}{n}+}\\
\\
&&\!
\E^n \Big [\frac{
\Cal{I}_{\var,T,z_u}^{\cO, \cP, M}(c_\ell)} {|v_u|n}
\sum_{j \notin \cP}\int_0^{T}\!\!\! ds  q_n (x_j (s) -c_\ell)
\left|{\hat \xi}^{(k-1)} - \gamma^{(k-1,\cO,\cP)}\right| (s,z_j) \Big] .
\end{eqnarray*}

Since both ${\hat \xi}^{(k-1)}(s,x_j,v_j)$ and $\gamma^{(k-1,\cO,\cP)} (s,x_j,v_j)$ are of the form (\ref{formaxi}), using (\ref{riduzione}) and
then Lemma \ref{lemmatubo} and Lemma \ref{misI},
$\Scr{P}$-almost everywhere w.r.t. sequences of initial data $Z_\infty$
we get (we omit the time integral):
\begin{equation}
\label{gel1}
\frac1{n} \Som_{j \notin \cP} \E^n \left[
\frac{\Cal{I}_{\var,T,z_u}^{\cO, \cP, M}(c_\ell)}{|v_u|}
q_n (x_j (s) -c_\ell) \left|{\hat \xi}^{(k-1)}
- \gamma^{(k-1,\cO,\cP)}\right| (s,z_j)
\right]\leq
\end{equation}
$$
T
C_d \mu\|q\|_{\infty}a_n^d \left[T C_d(\#\cO ) (\frac 1n \Som_{i=1}^{n} |v_i|)\, \var^{d-1}
+ 3 K_c(\var^{\zeta}+o(a_n^{-d}))\right]+
$$
$$
\!\Som_{j \notin \cP}\!
\E^n\!\!\!\left[\frac{\Cal{I}_{\var,T,z_u}^{\cO, \cP, M}(c_\ell)}{n|v_u|}
q_n (x_j (s)\!-\!c_\ell)\!\!\!\!\!
\Som_{h\notin\cO\cup\{\ell\}}\!\!\!\!\! I_{c_h \in {\Cal T}_\var (s,z_j)}
\!\!\left| \bar{\eta}^{(k-2)}\!
-\!\delta^{(k-1,\cO,\cP)}_{n,\var,M}\right|\!\!(T_{z_j,c_h},c_h)\!
\right]\!\!.
$$
We use then lemma \ref{3.2} and we write, for $\delta>0$:
\begin{equation}
\label{ciccia}
\E_{\tau_{\ell}}\left[\left| \bar{\eta}^{(k-2)}
-   \delta^{(k-1,\cO,\cP)}_{n,\var,M}\right| (T_{z_j,c_h},c_h)
\right] \leq 2\de +
\end{equation}
$$
\frac{\E_{\tau_{\ell}}}\delta\left[\left\{\left| \bar{V}^{(k-2)}
- A_{n,\var,M}^{(k-2,\cO\cup \{\ell\},\cP\cup \{j\})}\right|
+
\Cal{A}^{nM(k-2)}_{\cO,\cP}
(\{\ell\},\{j\})
\right\}(T_{z_j,c_h},c_h)
\right]\!,
$$
since $ A_{n,\var,M}^{(k-2,\cO\cup \{\ell\},\cP\cup \{j\})}(T_{z_j,c_h},c_h)$ does not depend on $\tau_{\ell}$.

The contribution to the right-hand side of (\ref{gel1}) coming from the last term in the last expectation value of (\ref{ciccia}) vanishes in the limit thanks to Lemma \ref{auxil} and the contribution coming from the first term is bounded by $2\delta (K_a + K_b a_n^d \var^{\zeta}+o(a_n^{-d}))$ for each $\de>0$.

We evaluate now  the remaining term, recalling that $h\neq \ell$.

When $h\neq \ell$, since $\left| \bar{V}^{(k-2)}
- A_{n,\var,M}^{(k-2,\cO\cup \{\ell\},\cP\cup \{j\})}\right|(T_{z_j,c_h},c_h)$ is independent of $c_\ell$, we have (again omitting the time integral):
\begin{eqnarray*}
\lefteqn{
\!\!\!\!\E^n\!\!
\left[
\frac{\Cal{I}_{\var,T,z_u}^{\cO, \cP, M} (c_\ell)}
{|v_u| n}\!\!\!\!\!\! \Som_{\lmul{j \notin \cP\\ h\notin\cO \cup \{\ell\}}}
\!\!\!
\!\!\!
q_n (x_j (s)\!-\!c_\ell)
I_{c_h \in {\Cal T}_\var (s,z_j)}\!
\left| \bar{V}^{(k-2)}\!\!
-\!\! A_{n,\var,M}^{(k-2,\cO\cup \{\ell\},\cP\cup \{j\})}\right|\!\!(T_{z_j,c_h},c_h)\!
\right]
}
\\
\\
&&
\!\!\!
\leq  \frac {\mu_\var [\Lambda_n]_L}{n |v_u|}
\sum_{j \notin \cP}
\E^n[I_{c\in{\Cal T}_\var (T,z_u)}  q_n (x_j (s)\!-c)]\\
\\
&&\phantom{\leq  \frac {\mu_\var}{n |v_u|}
\sum_{j \notin \cP}}\times
\E^n\left[\Cal{I}_{\var,s,z_j}^{\cO, \cP, M} (c_1) \left| \bar{V}^{(k-2)}
- A_{n,\var,M}^{(k-2,\cO\cup \{\ell\},\cP\cup \{j\})}\right|(T_{z_j,c_1},c_1)
\right].
\end{eqnarray*}

By the triangular inequality, we may then bound the time integral in $[0,T]$ of this quantity,
$\Scr{P}$-almost everywhere w.r.t. sequences of initial data $Z_\infty$, by the sum of:
\begin{eqnarray*}
\lefteqn{\!\!\!
\frac {\mu_\var [\Lambda_n]_L}{n|v_u|}
\sum_{j \notin \cP} \int_0^{T} ds\,
\E^n[I_{\{c\in{\Cal T}_\var (T,z_u)\}} q_n (x_j (s) -c)]}\\
\\
&\times
\E^n\!\!\left[\Cal{I}_{\var,s,z_j}^{\cO, \cP, M} (c_1)\left| \bar{V}^{(k-2)}
- \hat{A}_{n,\var,M}^{(k-2)}\right| (T_{z_j,c_1},c_1)
\right]
&
\\
\\
&\!\!\!\leq
K_V\sqrt{\Sup_{1\leq u \leq n}\!\!\E^n\left[\frac 1{|v_u|}\Cal{I}_{\var,T,z_u}^{M} (c_1)\left| \bar{V}^{(k-2)}
- \hat{A}_{n,\var,M}^{(k-2)}\right|^2 (T_{z_u,c_1},c_1)\right]},&
\end{eqnarray*}
vanishing because of Proposition \ref{corrie}, and
\begin{eqnarray*}
\lefteqn{
\frac {\mu_\var [\Lambda_n]_L}{n |v_u|}
\sum_{j \notin \cP} \int_0^{T}ds\,
\E^n[I_{\{c\in{\Cal T}_\var (T,z_u)\}} q_n (x_j (s) -c)]}\\
\\
&&\times
\E^n\!\!\left[\Cal{I}_{\var,s,z_j}^{\cO, \cP, M} (c_1)
\left|\hat{A}_{n,\var,M}^{(k-2)}
- A_{n,\var,M}^{(k-2,\cO\cup \{\ell\},\cP\cup \{j\})}\right| (T_{z,c_1},c_1)
\right]
\\
\\
&&\leq
K_A\Sup_{1\leq j \leq n}\!\! \E^n\left[\frac{\Cal{I}_{\var,T,z_j}^{\cO, \cP, M} (c_1)}{|v_j|}\left| A_{n,\var,M}^{(k-2,\cO\cup \{\ell\},\cP\cup \{j\})}
- \hat{A}_{n,\var,M}^{(k-2)}\right| (T_{z_j,c_1},c_1)\right]\!,
\end{eqnarray*}
vanishing because of the inductive hypothesis (the constants there depend on $C_d$, $\mu$, $T$, $K_1$, $K_2$).
Since, because of condition $a_n^d \var^{\zeta}\to 0$, all terms vanish, the Lemma is proved.
\end{proof}

Collecting all results in this section we prove

\begin{prop}
\label{capbar}
Consider the stochastic variables defined in (\ref{approxlevk}) and (\ref{cap}). Then,
under the same hypothesis as in Proposition \ref{lem:ind}, $\Scr{P}$-almost everywhere w.r.t. sequences of initial data $Z_\infty$,
\begin{equation}
\label{particelle}
\Lim_{n\to\infty}\frac 1n\sum_{j=1}^{n}\E^n \Big[
\left|{\hat \xi}^{(k)}(t,z_j) - \xi_{n,\varepsilon,M}^{(k)}(t,z_j)\right|\Big]=0
\end{equation}
and
\begin{equation}
\label{ostacoli}
\Lim_{n\to\infty}\E^n \Big[\var^{d-1} M I_{\mathrm{supp}\phi}(c_\ell)|\eta^{(k)}_{n,\var,M}(t, c_\ell)-\hat{\eta}^{(k)}_{n,\var,M}(t,c_\ell) |\Big]=0.
 \end{equation}
\end{prop}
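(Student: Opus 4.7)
\textbf{Proof plan for Proposition \ref{capbar}.}

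The central idea is to use the product representation (\ref{formaxi}) together with the reduction (\ref{riduzione}) and Lemma \ref{3.2} to convert differences of light particle (resp.\ obstacle) life functions into differences of risk functions, and then triangulate through the deterministic reference risk $\bar{V}^{(k-1)}$ (resp.\ $\bar{V}^{(k)}$), for which the tools Proposition \ref{lem:ind}, Proposition \ref{corrie} and Lemma \ref{gelato} already provide the required control.

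For (\ref{particelle}), I first apply (\ref{riduzione}) to $\hat{\xi}^{(k)}_{n,\var,M}$ and $\xi^{(k)}_{n,\var,M}$ (both of the form (\ref{formaxi})) to obtain
$$
|\hat{\xi}^{(k)}_{n,\var,M}-\xi^{(k)}_{n,\var,M}|(t,z_j)\le\sum_{h=1}^{M}I_{\{c_h\in\Cal{T}_\var(t,z_j)\}}|\bar{\eta}^{(k-1)}-\eta^{(k-1)}_{n,\var,M}|(T_{z_j,c_h},c_h).
$$
Since $\bar{V}^{(k-1)}$ is deterministic, Lemma \ref{3.2} in the simplified form of Remark \ref{oss} yields, for any $\delta>0$,
$\E_{\tau_h}|\bar{\eta}^{(k-1)}-\eta^{(k-1)}_{n,\var,M}|(T_{z_j,c_h},c_h)\le \tfrac1\delta\E_{\tau_h}|V^{(k-1)}_{n,\var,M}-\bar{V}^{(k-1)}|(T_{z_j,c_h},c_h)+2\delta$. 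Writing
$V^{(k-1)}_{n,\var,M}-\bar{V}^{(k-1)}=(V^{(k-1)}_{n,\var,M}-\hat{A}^{(k-1)}_{n,\var,M})+(\hat{A}^{(k-1)}_{n,\var,M}-\bar{V}^{(k-1)})$, taking the full expectation and averaging $\frac{1}{n}\sum_j$, I resum the Poissonian obstacles (integrating out one $c_h$ under the Poisson distribution turns $\E^n[\sum_h I_{\{c_h\in\Cal{T}_\var(t,z_j)\}}\,X(T_{z_j,c_h},c_h)]$ into a $\Cal{I}$-weighted expectation of $X$). The $(\hat{A}^{(k-1)}-\bar{V}^{(k-1)})$-term is then bounded by Cauchy--Schwarz combined with Proposition \ref{corrie} (the $|v_j|$ produced by the Poisson integration cancels the $|v_j|^{-1}$ in Proposition \ref{corrie}), while the $(V^{(k-1)}-\hat{A}^{(k-1)})$-term is exactly the quantity controlled by Lemma \ref{gelato} applied with $\cO=\cP=\emptyset$. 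Letting $n\to\infty$ and then $\delta\to0$ gives (\ref{particelle}).

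For (\ref{ostacoli}), both $\hat{\eta}^{(k)}_{n,\var,M}(t,c_\ell)$ and $\eta^{(k)}_{n,\var,M}(t,c_\ell)$ are of the form $I_{\{S(t)<\tau_\ell\}}$. Applying Lemma \ref{3.2} with the reference $S=\bar{V}^{(k)}$ and integrating $\tau_\ell$ yields
$$
\E_{\tau_\ell}|\eta^{(k)}_{n,\var,M}-\hat{\eta}^{(k)}_{n,\var,M}|(t,c_\ell)\le\tfrac{1}{\delta}\E_{\tau_\ell}|V^{(k)}_{n,\var,M}-\bar{V}^{(k)}|(t,c_\ell)+\tfrac{1}{\delta}\E_{\tau_\ell}|\hat{A}^{(k)}_{n,\var,M}-\bar{V}^{(k)}|(t,c_\ell)+2\delta.
$$
Multiplying by $\var^{d-1}M I_{\mathrm{supp}\phi}(c_\ell)$ and taking the full expectation, the $|\hat{A}^{(k)}-\bar{V}^{(k)}|$-contribution is estimated exactly as in the proof of Corollary \ref{equiostcomp}, via Cauchy--Schwarz and Proposition \ref{lem:ind} (together with $\E^n[(\var^{d-1}M)^2 I_{\mathrm{supp}\phi}(c_\ell)]=\mu^2[\Lambda_n]_L[\mathrm{supp}\phi\cap\Lambda_n]_L$). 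The $|V^{(k)}-\bar{V}^{(k)}|$-contribution is split once more via $\hat{A}^{(k)}$, and the remaining piece $\E^n[\var^{d-1}MI_{\mathrm{supp}\phi}(c_\ell)|V^{(k)}_{n,\var,M}-\hat{A}^{(k)}_{n,\var,M}|(t,c_\ell)]$ is reduced, by integrating out the Poissonian obstacle $c_\ell$ (so $\E^n[\var^{d-1}MI_{\mathrm{supp}\phi}(c_\ell)X]=\mu\int_{\mathrm{supp}\phi\cap\Lambda_n}\!dc_\ell\,\E^n_{\mathrm{rest}}[X]$), to a bound uniform in $c_\ell\in\mathrm{supp}\phi$ on $\E^n_{\mathrm{rest}}[|V^{(k)}_{n,\var,M}-\hat{A}^{(k)}_{n,\var,M}|(t,c_\ell)]$; unfolding this latter difference via the definition of $V^{(k)}$ and $\hat{A}^{(k)}$ involves $|\xi^{(k-1)}_{n,\var,M}-\hat{\xi}^{(k-1)}_{n,\var,M}|$ inside the kernel $q_n$, and applying (\ref{riduzione}) followed by Lemma \ref{3.2} once more leads to terms falling precisely under the inductive scope of Lemma \ref{gelato} and Lemma \ref{auxil}.

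The main technical obstacle is this last piece of (\ref{ostacoli}): after integrating out one Poissonian point $c_\ell$, the cross-correlation between the remaining configuration and the trajectories produces chains of Poissonian expectations analogous to those controlled in Lemma \ref{auxil}. The cleanest route is to read the proof as an inductive argument on $k$ in parallel with the inductive proofs of Lemma \ref{auxil} and Lemma \ref{gelato}, using the frozen system (\ref{frozen-system}) with $\cO=\{\ell\}$ to absorb the obstacle that has been integrated out, so that all error contributions are dominated by $a_n^d/n$, $a_n^d\var^\zeta$ and $o(a_n^{-d})$, all vanishing by (\ref{conditio}) and (\ref{conditiovar}). The conclusion follows by first letting $n\to\infty$ and then $\delta\to0$.
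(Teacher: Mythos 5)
Your proposal is correct and follows essentially the same route as the paper's proof: reduce the life-function differences to risk-function differences via (\ref{riduzione}) and Lemma \ref{3.2}, triangulate through $\hat{A}^{(k)}$ and $\bar{V}^{(k)}$, and close with Lemma \ref{gelato}, Lemma \ref{auxil}, Proposition \ref{corrie} and, for the obstacle part, the Cauchy--Schwarz/Proposition \ref{lem:ind} argument already used for Corollary \ref{equiostcomp}. The only (harmless) deviation is that you take the deterministic $\bar{V}^{(k-1)}$ (resp.\ $\bar{V}^{(k)}$) as the third reference function in Lemma \ref{3.2}, whereas the paper uses the frozen risk $A^{(k,\{h\},\{j\})}_{n,\var,M}$; both are independent of the relevant exponential time, so the $2\delta$ bound applies in either case, and your handling of the remaining hard piece of (\ref{ostacoli}) --- unfolding one level with (\ref{riduzione}) and absorbing the integrated-out obstacle via the frozen system with $\cO=\{\ell\}$ and the chains of Lemma \ref{auxil} --- is exactly the mechanism of the paper's proof.
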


\begin{proof}

We start from (\ref{particelle}), bounding it first as
$$
\frac 1n\!\sum_{j=1}^{n}\E^n\Big[
\left|{\hat \xi}_{n,\varepsilon,M}^{(k)}
- \xi_{n,\varepsilon,M}^{(k)}\right|\!\!(t,z_j)\Big]\!\leq
\frac 1n\!\sum_{j=1}^{n}\!\E^n\Big[\Cal{I}_{\var,t,z_j}^{M} (c)|\eta^{(k)}_{n,\var,M}
-\bar{\eta}^{(k)}|(T_{z_j,c}, c)\Big].
$$

Using lemma \ref{3.2}
and the bound
\begin{equation}
\label{tr}
|\bar{V}^{(k)}
- A_{n,\var, M}^{(k,\{h\},\{j\})}|\leq |\bar{V}^{(k)}
- \hat{A}_{n,\var, M}^{(k)}|+|\hat{A}_{n,\var, M}^{(k)}
- A_{n,\var, M}^{(k,\{h\},\{j\})}|,
\end{equation}
we get, $\forall \de >0$:
$$
\E_{\tau_h}\left[|\eta_{n,\var,M}^{(k)}
- \bar{\eta}^{(k)}|(\cdot,c_h)\right]
\leq  2\de +\frac 1\de \E_{\tau_h}\left[|\Cal{A}^{nM(k)}_{\emptyset,\emptyset}(\{h\}, \{j\})|(\cdot,c_h)\right]
$$
$$
+\frac 1\de \E_{\tau_h}\left[\left\{|\hat{A}^{(k)}_{n,\var,M}
- A_{n,\var, M}^{(k,\{h\},\{j\})}|+
|\hat{A}_{n,\var, M}^{(k)}
- \bar{V}^{(k)}|\right\}
(\cdot,c_h)\right].
$$

Define
$$
\Scr{E}^{(k)}_{h,l}(t)=\Sup_{z\in\R^d\times\R^d}\frac 1{|v|}\E^n
[\Cal{I}_{\var,t,z}^{\{h\}, \{j\}, M}(c_{l})\Cal{A}^{nM(k)}_{\emptyset,\emptyset}(\{h\}, \{j\})(T_{z,c_l},c_l)]
$$

$$
+
\Sup_{1\leq i\leq n}\frac 1{|v_{i}|}\E^n
[\Cal{I}_{\var,t,z_i}^{M}(c_l)
|\bar{V}^{(k)}
- \hat{A}_{n,\var, M}^{(k)}|^2 (T_{z_i,c_l},c_l)]
$$

$$
+
\Sup_{1\leq i\leq n}\frac 1{|v_{i}|}\E^n
[\Cal{I}_{\var,t,z_i}^{\{h\}, \{j\}, M}(c_l)
|A_{n,\var, M}^{(k,\{h\},\{j\})}
- \hat{A}_{n,\var, M}^{(k)}|(T_{z_i,c_l},c_l)].
$$

We get, for all $\de >0$:
$$
\frac 1n\!\sum_{j=1}^{n}\E^n\Big[
\left|{\hat \xi}_{n,\varepsilon,M}^{(k)}
- \xi_{n,\varepsilon,M}^{(k)}\right|(t,z_j)\Big]\leq
\left(\frac1n \Som_{i=1}^{n}|v_i|\right)\left[C_d T \mu \de
+\frac1\de \Sup_{t\in[0,T]}\Scr{E}_{h,h}(t)\right]
$$
and (\ref{particelle}) follows from Lemma \ref{auxil}, Proposition \ref{corrie} and Lemma \ref{gelato}.

As for (\ref{ostacoli}), we use Lemma \ref{3.2} to get the bounds
$$
\E_{\tau_h}\left[|\eta_{n,\varepsilon,M}^{(k)}
- \hat{\eta}_{n,\varepsilon,M}^{(k)}|(t,c_\ell)\right]
\leq  2\de +
$$
$$
\frac 1\de \E_{\tau_h}\left[\left\{|\Cal{A}^{nM(k)}_{\emptyset,\emptyset}(\{h\}, \{j\})|+
|A_{n,\var, M}^{(k,\{h\},\{j\} )}
- \hat{A}^{(k)}_{n,\var,M}|\right\}
(t,c_\ell)\right],
$$
and, for $h\neq \ell$,
$$
\E_{\tau_h}\left[|\eta_{n,\varepsilon,M}^{(k-2)}
- \de^{k-2,\{\ell\},\{j\}}_{n,\varepsilon,M}|(t,c_h)\right]
\leq  2\de +
\frac 1\de \E_{\tau_h}\left[\Cal{A}^{nM(k-2)}_{\emptyset,\emptyset}(\{h\}, \{j\})|(t,c_h)\right],
$$
and we bound the difference of risk functions through (\ref{riduzione}).

Noticing that the quantity $\Cal{A}^{nM(k)}_{\emptyset,\emptyset}(\{\ell\}, \{j\})(t, c_h)$ does not depend on $c_\ell$ for $h\neq\ell$
when $c_\ell \notin \Bigcup_{i=1}^{n} {\Cal T}_\var (t,z_i)$,
we get then, through (\ref{tr}), for all $\de >0$:
$$
\E^n \Big[\var^{d-1} M I_{\mathrm{supp}\phi}(c_\ell)|\eta^{(k)}_{n,\var,M}(t, c_\ell)-\hat{\eta}^{(k)}_{n,\var,M}(t,c_\ell) |\Big]\leq
\mu[\mathrm{supp}\phi]_L K_p (\de+\frac{a_n^d}{\de n})
$$
$$
+\frac{K_d }{\de^3} \left[ \Sup_{s\in[0,T]}\Scr{E}^{(k-2)}_{\ell,\ell-1}(s)+ a_n^{d}\var^{d-1}\Sup_{s\in[0,T]}\Scr{E}^{(k-2)}_{\ell,\ell}(s)
\right.
$$
$$
\left. +\var^{d-1}
\frac1n\Som_{1\leq i \leq n}
\Sup_{\lmul{s\in[0,T]\\z\in\R^d\times\R^d}}\frac 1{|v|}\E^n\!\!\left[\int_{[0,T]} ds_{i} q_n (x_{i} (s_{i}) -c_{\ell}) I_{\{c_{\ell} \in {\Cal T}_\var (s,z)\}}\right.\right.
$$
$$
\left.\left.\times
\Cal{I}_{\var,s_{i}, z_{i}}^{\{\ell\},\{j\}, M}(c_{\ell-1})
M
\Cal{A}^{nM(k-2)}_{\emptyset,\emptyset}(\{\ell\},\{j\})(T_{z_{i},c_{\ell-1}},c_{\ell-1})
\phantom{\int}\right] \right],
$$
where $K_p=K_p(T,\|q\|_{\infty}, K_1, K_2,\mu,C_d)$ and $K_d=K_d(\mu,\Theta,\||v|f_0\|_{L_1})$.
The proof of the proposition follows using Lemmas \ref{auxil}, \ref{gelato} and Proposition \ref{corrie}.
\end{proof}

\bigskip
\section{Final proposition: asymptotic equivalence of (\ref{approxlevk}) and (\ref{sistk})}\label{capitolofinale}
We may now prove our final proposition, which will allow us to establish
the vanishing limits of (\ref{fmu1}) and (\ref{fsi1}) :
\begin{prop}
\label{capbarfin}
For the life functions defined in (\ref{vitak}) and in (\ref{approxlevk}), under the same hypothesis as in Proposition \ref{lem:ind} and  for all $\phi\in C_b(\R^d\times\R^d)$ and $\psi\in C_K(\R^d)$, the following limits are valid, $\Scr{P}$-almost everywhere w.r.t. sequences of initial data $Z_\infty$:
\begin{equation}
\Lim_{n\to\infty}\E^n\Big[|\frac 1n\sum_{i=1}^{n}\phi(T^t(z_i))
({\bar \xi}^{(k)}(t,z_i) - \xi_{n,\varepsilon,M}^{(k)}(t,z_i))|\Big]=0
\end{equation}
and
\begin{equation}
\Lim_{n\to\infty}\E^n \Big[\var^{d-1} \sum_{i=1}^{M}I_{\mathrm{supp}\psi}(c_i)|\eta^{(k)}_{n,\var,M}(t, c_i)-\bar{\eta}^{(k)}(t,c_i) |\Big]=0.
\end{equation}
\end{prop}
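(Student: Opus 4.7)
The plan is to route both differences through the intermediate system (\ref{cap}) and then invoke results already established: the ``approximating-system vs.\ intermediate'' halves will be handled by Proposition \ref{capbar}, while the ``intermediate vs.\ limit-approximation'' halves will be handled by Remark \ref{ossv} (the $V=\bar V^{(k-1)}$ specialization of Lemma \ref{lemmauno1}) and by Corollary \ref{equiostcomp}. No fresh asymptotic analysis should be needed; everything reduces to the triangle inequality plus a Poissonian symmetrization in the obstacle statement.

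For the light particle statement, I would split
\[
\bar\xi^{(k)} - \xi^{(k)}_{n,\var,M} = \bigl(\bar\xi^{(k)} - \hat\xi^{(k)}_{n,\var,M}\bigr) + \bigl(\hat\xi^{(k)}_{n,\var,M} - \xi^{(k)}_{n,\var,M}\bigr)
\]
and apply the triangle inequality inside $\E^n$ after multiplying by $\phi(T^t(z_i))/n$ and summing over $i$. The first piece vanishes by Remark \ref{ossv}: indeed $\hat\xi^{(k)}_{n,\var,M}=\xi^{\bar V^{(k-1)}}_{n,\var,M}$, and the test function $\phi\circ T^t$ is in $C_b(\R^d\times\R^d)$, so Lemma \ref{lemmauno1} applies verbatim. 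For the second piece I would simply bound $|\phi(T^t(z_i))|\le\|\phi\|_\infty$ and invoke the conclusion (\ref{particelle}) of Proposition \ref{capbar}.

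The obstacle statement follows the same skeleton. I would write
\[
\eta^{(k)}_{n,\var,M} - \bar\eta^{(k)} = \bigl(\eta^{(k)}_{n,\var,M} - \hat\eta^{(k)}_{n,\var,M}\bigr) + \bigl(\hat\eta^{(k)}_{n,\var,M} - \bar\eta^{(k)}\bigr).
\]
The second summand, multiplied by $\var^{d-1}\sum_{i=1}^M I_{\mathrm{supp}\psi}(c_i)$ and averaged under $\E^n$, is exactly the quantity bounded in Corollary \ref{equiostcomp} (applied with test function $\psi$; its proof, which reduces to Proposition \ref{lem:ind}, is valid at each fixed approximation level $k$). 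For the first summand I would use the fact that the Poisson law (\ref{poisson}) is invariant under relabellings of the centers, and that both $\hat\eta^{(k)}_{n,\var,M}$ and $\eta^{(k)}_{n,\var,M}$ depend symmetrically on $(\mathbf c_M,\tau_1,\ldots,\tau_M)$, to reduce
\[
\E^n\!\left[\var^{d-1}\sum_{i=1}^M I_{\mathrm{supp}\psi}(c_i)\bigl|\eta^{(k)}_{n,\var,M}-\hat\eta^{(k)}_{n,\var,M}\bigr|(t,c_i)\right] = \E^n\!\left[\var^{d-1} M\, I_{\mathrm{supp}\psi}(c_1)\bigl|\eta^{(k)}_{n,\var,M}-\hat\eta^{(k)}_{n,\var,M}\bigr|(t,c_1)\right],
\]
which vanishes by (\ref{ostacoli}) (with $\phi$ there chosen to be $\psi$ and $\ell=1$).

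The only real friction I anticipate is bookkeeping rather than analysis: Corollary \ref{equiostcomp} is stated without the $(k)$-superscript, so one must briefly note that its derivation applies at every fixed approximation level, and one has to verify that the $\tau_p$-dependence appearing in $\bar\xi^{(k)}$ (resp.\ the independent exponentials in $\bar\eta^{(k)}$) is compatible with the joint expectation $\E^n$ used in the statement. Once those are recorded, the two limits follow by collecting the four estimates above.
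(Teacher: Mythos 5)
Your proposal is correct and follows essentially the same route as the paper: the triangle inequality through the intermediate system (\ref{cap}), with Lemma \ref{lemmauno1} (via Remark \ref{ossv}) and Corollary \ref{equiostcomp} handling the comparisons to $\bar\xi^{(k)}$, $\bar\eta^{(k)}$, and Proposition \ref{capbar} ((\ref{particelle}) and (\ref{ostacoli})) handling the comparisons to $\hat\xi^{(k)}_{n,\var,M}$, $\hat\eta^{(k)}_{n,\var,M}$. The paper performs the same reduction of the obstacle sum to $M$ times a single-center term that you justify explicitly by exchangeability, so your extra bookkeeping remarks only make explicit what the paper leaves implicit.
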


\begin{proof}
By the triangular inequality:
$$
\E^n\Big[|\frac 1n\sum_{i=1}^{n}\phi(T^t(z_i))
[{\bar \xi}^{(k)}
- \xi_{n,\varepsilon,M}^{(k)}](t,z_i)
|\Big]\leq
$$
$$
\!\!\!\E^n \Big[|\frac 1n\sum_{i=1}^{n}\phi(T^t(z_i))
[{\bar \xi}^{(k)}
- \hat{\xi}_{n,\varepsilon,M}^{(k)}](t,z_i)
|+\frac{\|\phi\|_{\infty}}n\sum_{i=1}^{n}
|{\hat \xi}^{(k)}_{n,\varepsilon,M}-\xi_{n,\varepsilon,M}^{(k)}|(t,z_i)\Big]
$$
and the right-hand side term vanishes  because of Lemma \ref{lemmauno1} and Proposition \ref{capbar}.

In the same way,
$$
\E^n [\var^{d-1} \sum_{i=1}^{M}I_{\mathrm{supp}\psi}(c_i)|\eta^{(k)}_{n,\var,M}
-\bar{\eta}^{(k)}|(t,c_i)]\leq
$$
$$
\E^n [\var^{d-1} M I_{\mathrm{supp}\psi}(c)\{|\eta^{(k)}_{n,\var,M}
-\hat{\eta}^{(k)}_{n,\var,M}|
+
|\hat{\eta}^{(k)}_{n,\var,M}
-\bar{\eta}^{(k)}|\}(t,c)]
$$
and the right-hand side term vanishes  because of Corollary \ref{equiostcomp} and Proposition \ref{capbar}.
\end{proof}

This proposition completes the proof of theorem \ref{teorema}.

\section{Appendix}
\subsection{Existence and Uniqueness of solutions for the limit system}
\label{app1}
We give here the theorem of existence and uniqueness of solutions of system (\ref{sistlim}), which we can state as follows:

\begin{theo}
\label{esistunic}
Let $\!f_0\!\geq\!0$ and $\sigma_0\!\geq\!0$ be s.t. $\!\!f_0\!\in\!L^1\!(\R_v^d; W^{1,\infty}\!(\R_x^d))$, $\!vf_0\!\in\!L^1\!(\R_v^d; W^{1,\infty}\!(\R_x^d))\cap L^{\infty}(\R_x^d\times \R_v^d)$,  $v^2 f_0\in L^1 (\R_v^d; L^{\infty}(\R_x^d))$ and $\sigma_0\in W^{1,\infty}(\R_x^d)$. Then for each $T>0$ there exists an unique solution $(f,\si)$ to the initial value problem  (\ref{sistlim}) in the interval $[0,T]$.

\end{theo}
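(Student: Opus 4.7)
The natural approach is to construct the solution as the limit of the iterative scheme (\ref{sistk}) (suitably generalized to accommodate $\sigma_0\in W^{1,\infty}(\R^d_x)$ in place of the constant $\mu$), which the paper has already given in semi-explicit form (\ref{semiexplk}). The virtue of this scheme is that, because both $f^{(k)}$ and $\sigma^{(k)}$ are expressed as products of the initial data with exponentials of non-positive arguments, the a priori bounds
\[
0\le f^{(k)}(t,x,v)\le f_0(x-vt,v),\qquad 0\le \sigma^{(k)}(t,x)\le \sigma_0(x)
\]
hold for free. In particular $\|f^{(k)}\|_{L^\infty}\le\|f_0\|_{L^\infty}$, $\|\sigma^{(k)}\|_{L^\infty}\le\|\sigma_0\|_{L^\infty}$, and
\[
\int_{\R^d}dv\, f^{(k)}(t,x,v)\le \int_{\R^d}dv\,\|f_0(\cdot,v)\|_{L^\infty_x}<\infty,
\]
uniformly in $t,x,k$, thanks to the hypothesis $f_0\in L^1(\R^d_v;W^{1,\infty}(\R^d_x))$. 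The same hypothesis combined with $vf_0\in L^1(\R^d_v;L^\infty(\R^d_x))$ and $v^2f_0\in L^1(\R^d_v;L^\infty(\R^d_x))$ will give analogous uniform bounds on the moments we need below.

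Next I would estimate consecutive differences using the elementary inequality $|e^{-a}-e^{-b}|\le|a-b|$ for $a,b\ge 0$, applied to (\ref{semiexplk}). This yields
\[
|f^{(k+1)}-f^{(k)}|(t,x,v)\le C_d|v|f_0(x-vt,v)\int_0^t |\sigma^{(k)}-\sigma^{(k-1)}|(s,x-v(t-s))\,ds,
\]
\[
|\sigma^{(k+1)}-\sigma^{(k)}|(t,x)\le \Theta\,\sigma_0(x)\int_0^t ds\int_{\R^d}dv\,|f^{(k)}-f^{(k-1)}|(s,x,v).
\]
Introducing $\alpha_k(t)=\|\sigma^{(k)}-\sigma^{(k-1)}\|_{L^\infty([0,t]\times\R^d)}$ and $\beta_k(t)=\sup_{s\le t}\|\int dv|f^{(k)}-f^{(k-1)}|(s,\cdot,v)\|_{L^\infty(\R^d)}$, the bound on $|v|f_0$ gives $\beta_{k+1}(t)\le K_1\,t\,\alpha_k(t)$, while $\alpha_{k+1}(t)\le \Theta\|\sigma_0\|_\infty\int_0^t \beta_k(s)\,ds$. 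Composing these two estimates, $\alpha_{k+2}(t)\le K_2\frac{t^2}{2}\alpha_k(t)$ (and similarly for $\beta$), so by induction $\alpha_{2k}(T),\beta_{2k}(T)\le C\,(K_2T^2/2)^k/k!$. This is summable, so $\{f^{(k)}\}$ and $\{\sigma^{(k)}\}$ are Cauchy in $L^\infty([0,T]\times\R^d\times\R^d)$ and $L^\infty([0,T]\times\R^d)$ respectively, and (after a similar estimate) $\int dv\,f^{(k)}$ is Cauchy in $L^\infty([0,T]\times\R^d)$. Passing to the limit in (\ref{semiexplk}) produces a solution $(f,\sigma)$ of (\ref{sistlim}) in the mild (semi-explicit) sense, which, by the regularity of $f_0$ and $\sigma_0$, can be upgraded to a solution of the PDEs themselves along characteristics.

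For uniqueness, given two solutions $(f_i,\sigma_i)$, $i=1,2$, the same semi-explicit representation together with $|e^{-a}-e^{-b}|\le|a-b|$ produces the coupled Gronwall-type inequalities above with $(f^{(k+1)}-f^{(k)})$ replaced by $(f_1-f_2)$ and $(\sigma^{(k)}-\sigma^{(k-1)})$ by $(\sigma_1-\sigma_2)$; iterating them $k$ times as in the convergence argument bounds $\|f_1-f_2\|_{L^\infty([0,T])}$ and $\|\sigma_1-\sigma_2\|_{L^\infty([0,T])}$ by $C(K_2T^2/2)^k/k!\to 0$, forcing equality. The main obstacle is controlling $\int dv f^{(k)}$ uniformly in $L^\infty_{t,x}$ so that the Cauchy estimate for $\alpha_{k+1}$ closes: this is precisely where the two distinct integrability-times-$L^\infty_x$ hypotheses on $f_0$, $vf_0$ and $v^2f_0$ enter, via the inequality $\int dv\,|v|^jf_0(x-vt,v)\le \int dv\,\||v|^jf_0(\cdot,v)\|_{L^\infty_x}$ applied at $j=0,1$ (with $j=2$ available if one wishes to propagate $C^1$ regularity to justify the strong form of the first equation along characteristics).
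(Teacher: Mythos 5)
Your proposal is correct, and it reaches the conclusion by a route that differs in organization from the paper's. The paper proves Theorem \ref{esistunic} via Proposition \ref{mappapf}: it sets up the three-component space $\Cal{W}$ with its $L^{\infty}$ norms, shows with exactly your Lipschitz estimates (the inequality $|e^{-a}-e^{-b}|\le|a-b|$ together with the norms $\|vf_0\|_{L^{\infty}}$, $\|vf_0\|_{L^{1}(\R^d_v;L^{\infty}(\R^d_x))}$, $\|\si_0\|_{L^{\infty}}$) that the map $M$ of (\ref{trasf}) is a strict contraction for $T<T_0=1/A$, obtains the fixed point by Banach's theorem, and then prolongs to arbitrary $T$ using the fact that the fixed point propagates the weighted regularity of the data ($f(t)\le f_0$ transported, $\si(t)\le\si_0$, plus the $W^{1,\infty}$ statements in the second bullet of the proposition). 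You instead run the explicit Picard iteration (\ref{sistk})--(\ref{semiexplk}) and close coupled difference estimates with the composed $t^2/2$ gain, getting factorial decay $(K_2T^2/2)^k/k!$; this buys you convergence and uniqueness directly on any $[0,T]$ with no small-time restriction and no continuation step, and as a by-product it gives quantitatively the convergences (\ref{convergenze}) of the approximating scheme that the paper uses later. What the paper's route buys in exchange is brevity and, importantly, the explicit record of the propagation of regularity ($f(t,\cdot,\cdot)\in L^1(\R^d_v;W^{1,\infty}(\R^d_x))$, $\si(t,\cdot)\in W^{1,\infty}(\R^d)$, the $v$ and $v^2$ moments), which is not only what makes the prolongation work but is also needed elsewhere in the paper (e.g. for (\ref{camvar}) and the Lipschitz bounds on $V_L$); you only gesture at this ("upgraded along characteristics", the remark on $v^2f_0$), which is acceptable for the theorem as stated but would have to be carried out to serve the rest of the paper. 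One small implicit step, common to both arguments: uniqueness is at the level of the semi-explicit (mild/characteristics) formulation (\ref{solsemiexpl}), i.e. one takes for granted that any solution of (\ref{sistlim}) satisfies that representation; the paper does the same through its definition of the map $M$.
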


The proof of theorem (\ref{esistunic}) is the consequence of the following proposition:

\begin{prop}

\label{mappapf}

Consider the space
$$
\Cal{W}=\{\mathbf{F}=(F_1,F_2,F_3): \, F_i\geq 0\, i=1,2,3,\phantom{F_1\in L^{\infty}([0,T]\times\R^d\times\R^d)}
$$
$$
\phantom{\{\mathbf{F}=(F_1,F_2,F_3):}
F_1\in L^{\infty}([0,T]\times\R^d\times\R^d), F_2,F_3\in L^{\infty}([0,T]\times\R^d)\,\}
$$
with norm
$$
\|\mathbf{F}\|_{\Cal{W}}=\|F_1\|_{L^{\infty}([0,T]\times\R^d\times\R^d)}+\|F_2\|_{L^{\infty}([0,T]\times\R^d)}+\|F_3\|_{L^{\infty}([0,T]\times\R^d)}
$$
and the map $M=(M_1, M_2, M_3):\Cal{W}\to \Cal{W}$ defined, for $f_0,\sigma_0\geq 0$, as:

\begin{equation}
\label{trasf}
\begin{array}{l}
M_1[\mathbf{F}](t,x,v)=f_0(x-vt,v) e^{-C_d |v| \int_0^t ds F_3 (s,x-v(t-s))}\\
\\
M_2[\mathbf{F}](t,x)=\int_{\R^d} dvf_0(x-vt,v) e^{-C_d |v| \int_0^t ds F_3 (s,x-v(t-s))}\\
\\
M_3[\mathbf{F}](t,x)=\si_0 (x) e^{- \Theta \int_0^t ds F_2(s,x)} .
\end{array}
\end{equation}

Assume $\si_0\!\in\!
W^{1,\infty}\!(\R^d\!)$ and $f_0\!\in\!L^1\!(\R_v^d;\!W^{1,\infty}\!(\R_x^d)\!)$,  together with
$vf_0\!\in\!L^1\!(\R_v^d;\!W^{1,\infty}\!(\R_x^d)\!)\cap L^{\infty}(\R_x^d\times \R_v^d)$ and $v^2 f_0\in L^1 (\R_v^d; L^{\infty}(\R_x^d))$.

Then

\begin{itemize}

\item $\!\!M$ is a strict contraction on $\Cal{W}$  when $\!T\!\!<\!T_0$, where $\!T_0$ depends on $C_d$, $\!\Theta$, $\|vf_0\!\|_{\!L^{\infty}\!(\R\times\R^d\!)}$, $\|vf_0\|_{L^{1}(\R_v^d;W^{1,\infty}(\R_x^d))}$, $\|\si\|_{W^{1,\infty}(\R^d)})$

\phantom{ciao}

\item Let $M[\mathbf{f}]=\mathbf{f}$ be the (unique) fixed point of $M$, then for each $t\in [0,T]$, $T< T_0$,
\begin{eqnarray*}
f_1(t,\cdot, \cdot)&\in& L^{1}(\R_v^d;W^{1,\infty}(\R_x^d)),\\
vf_1(t,\cdot, \cdot)&\in& L^1 (\R_v^d; W^{1,\infty}(\R_x^d))\cap L^{\infty}(\R_x^d\times \R_v^d)
,\\
v^2 f_1(t,\cdot, \cdot)&\in& L^1 (\R_v^d; L^{\infty}(\R_x^d))\\
f_3(t,\cdot)&\in& W^{1,\infty}(\R^d)
.
\end{eqnarray*}

\end{itemize}

\end{prop}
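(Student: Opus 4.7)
The plan is to apply Banach's fixed point theorem to $M$ on the whole space $\Cal W$ in order to produce a unique $\mathbf f$, and then to bootstrap the additional regularity stated in the second bullet. For the contraction, I will exploit the elementary bound $|e^{-a}-e^{-b}|\leq|a-b|$ (valid for $a,b\geq 0$) applied to the exponentials in (\ref{trasf}), obtaining the pointwise estimates
$$
|M_1[\mathbf F]-M_1[\mathbf G]|(t,x,v)\leq C_d T\,|v|f_0(x-vt,v)\,\|F_3-G_3\|_{L^\infty},
$$
$$
|M_2[\mathbf F]-M_2[\mathbf G]|(t,x)\leq C_d T\int_{\R^d}dv\,|v|f_0(x-vt,v)\,\|F_3-G_3\|_{L^\infty},
$$
$$
|M_3[\mathbf F]-M_3[\mathbf G]|(t,x)\leq \Theta T\,\si_0(x)\,\|F_2-G_2\|_{L^\infty}.
$$
Taking $L^\infty$-norms and invoking $vf_0\in L^\infty(\R^d_x\times\R^d_v)$, $vf_0\in L^1(\R^d_v;L^\infty(\R^d_x))$ and $\si_0\in L^\infty(\R^d)$, I obtain $\|M[\mathbf F]-M[\mathbf G]\|_{\Cal W}\leq TK_*\|\mathbf F-\mathbf G\|_{\Cal W}$ with $K_*$ depending only on $C_d$, $\Theta$, $\|vf_0\|_{L^\infty(\R^d_x\times\R^d_v)}$, $\|vf_0\|_{L^1(\R^d_v;L^\infty(\R^d_x))}$ and $\|\si_0\|_{L^\infty}$. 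Setting $T_0=1/K_*$ yields strict contraction for every $T<T_0$, so Banach's theorem delivers a unique non-negative fixed point $\mathbf f=(f_1,f_2,f_3)\in\Cal W$ (non-negativity is preserved by $M$ and passes to the Picard limit).

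For the regularity, my strategy is to propagate the target $W^{1,\infty}$-type bounds along the Picard iteration $\mathbf F^{(k+1)}=M[\mathbf F^{(k)}]$, starting from the natural seed $\mathbf F^{(0)}=(f_0(x-vt,v),\int dv\,f_0(x-vt,v),\si_0)$, and then pass to the limit. Differentiating (\ref{trasf}) in $x$ gives for instance
$$
\nabla_x M_1[\mathbf F]=(\nabla_x f_0)(x-vt,v)\,e^{-C_d|v|\int_0^t F_3}-C_d|v|f_0(x-vt,v)\,e^{-C_d|v|\int_0^t F_3}\!\!\int_0^t\!(\nabla_x F_3)(s,x-v(t-s))\,ds,
$$
and analogous identities for $\nabla_x M_2$ (in which $\nabla_x F_3$ appears) and $\nabla_x M_3$ (in which $\nabla_x F_2$ appears). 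Bounding the exponentials by $1$ and invoking the hypotheses $f_0,vf_0\in L^1(\R^d_v;W^{1,\infty}(\R^d_x))$ and $\si_0\in W^{1,\infty}(\R^d)$, these identities produce a closed linear system for $(\|\nabla_x M_i[\mathbf F]\|_{L^\infty})_{i=1,2,3}$ in terms of $(\|\nabla_x F_j\|_{L^\infty})_{j=2,3}$ with coefficients proportional to $T$. Possibly shrinking $T_0$ further, this system leaves a sufficiently large ball in the regularity class invariant, so that the Picard limit $\mathbf f$ inherits the $W^{1,\infty}$-type bounds. The velocity-weighted norms of $f_1$ are treated by the same computation, each extra factor of $|v|$ being absorbed by the moment assumptions on $vf_0$ and $v^2f_0$.

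I expect the main obstacle to be precisely this coupling of the gradient estimates: since $\nabla_x M_1,\nabla_x M_2$ carry $\nabla_x F_3$ while $\nabla_x M_3$ carries $\nabla_x F_2$, the regularity estimates cannot be decoupled and must be closed jointly, possibly at the price of a smaller $T_0$ than the one provided by the plain $L^\infty$ contraction. This is also the reason why $T_0$ in the statement depends on $\|vf_0\|_{L^1(\R^d_v;W^{1,\infty}(\R^d_x))}$ and $\|\si_0\|_{W^{1,\infty}}$ in addition to the $L^\infty$ quantities. Finally, the global-in-time existence and uniqueness asserted in Theorem \ref{esistunic} will follow by iterating the local construction on a finite cover of $[0,T]$ by sub-intervals of length $<T_0$, using the pointwise bounds $0\leq f_1(t,x,v)\leq f_0(x-vt,v)$ and $0\leq f_3(t,x)\leq\si_0(x)$ readable off (\ref{trasf}) to prevent any blow-up of the relevant norms.
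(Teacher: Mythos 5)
Your proposal is correct and follows essentially the same route as the paper: the contraction estimates via $|e^{-a}-e^{-b}|\leq|a-b|$, the resulting constant $K_*$ (the paper's $A$) depending on $C_d$, $\Theta$, $\|vf_0\|_{L^{\infty}}$, $\|vf_0\|_{L^{1}(\R^d_v;L^{\infty}(\R^d_x))}$, $\|\si_0\|_{L^{\infty}}$, and $T_0=1/K_*$ are exactly those of the paper, which then dismisses the second bullet as following directly from (\ref{trasf}). The only (harmless) difference is that you propagate the gradient bounds along the Picard iteration and hedge about shrinking $T_0$, whereas they can be read off the fixed-point identities themselves: substituting the bound for $\nabla_x f_2$ into that for $\nabla_x f_3$ gives a Gronwall-type inequality, so the $W^{1,\infty}$ bounds close on the whole interval $[0,T]$ without any further smallness of $T$.
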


\begin{rem}
The map (\ref{trasf}) is defined so to represent the solutions to the linear problem (\ref{sistk}) (cf. the proof of Theorem \ref{esistunic} on next page), and therefore it does not depend on $F_1$ (the sources in the linear problem are indeed $F_2$ and $F_3$). Three components are (of course) nevertheless needed to get a map having as unique fixed point the solution to the nonlinear problem (\ref{sistlim}).
\end{rem}

\begin{proof}[Proof of Proposition \ref{mappapf}]

Consider $\mathbf{F}, \mathbf{G}\in \Cal{W}$. Then
$$
\begin{array}{l}
\!\!\!\|M_1[\mathbf{F}]\!-\!M_1[\mathbf{G}]\|_{L^{\infty}([0,T]\!\times\!\R^d\!\times\!\R^d)}\!\leq \!
C_d T \|vf_0\|_{L^{\infty}(\R^d\!\times\!\R^d)} \|F_3\!-\!G_3\|_{L^{\infty}([0,T]\!\times\!\R^d)}\\
\\
\!\!\!\|M_2[\mathbf{F}]\!-\!M_2[\mathbf{G}]\|_{L^{\infty}([0,T]\!\times\!\R^d)}\!\leq\!
C_d T \|vf_0\|_{L^{1}(\R^d;L^{\infty}(\R^d))}  \|F_3\!-\!G_3\|_{L^{\infty}([0,T]\!\times\!\R^d)}\\
\\
\!\!\!\|M_3[\mathbf{F}]\!-\!M_3[\mathbf{G}]\|_{L^{\infty}([0,T]\!\times\!\R^d)}\!\leq\!
\Theta T \|\si_0\|_{L^{\infty}([0,T]\!\times\!\R^d)}\|F_2\!-\!G_2\|_{L^{\infty}([0,T]\!\times\!\R^d)}
\end{array}
$$

We can therefore write
\begin{equation}
\label{contract}
\|M[\mathbf{F}]-M[\mathbf{G}]\|_{\Cal{W}} \leq A T \|\mathbf{F}- \mathbf{G}\|_{\Cal{W}},
\end{equation}
where $\!A$ is a constant depending on $C_d$, $\!\Theta$ and on the norms $\|vf_0\|_{\!L^{\infty}\!(\R^d\!\times\R^d\!)}$, $\|\si_0\|_{\!L^{\infty}\!([0,T]\!\times\R^d\!)}$, $\|vf_0\|_{L^{1}(\R^d_v; L^{\infty}(\R_x^d))}$. Whenever $T<T_0$, where $T_0=\frac 1A$, $M$ is a contraction on the complete space $\Cal{W}$. Therefore, whenever $T< \frac 1A$, there exists a unique fixed point $(f,g,\si)=M(f,g,\si)$ (and of course, $g=\int dv f$).

The properties of the fixed point functions listed in the second part of the thesis of the proposition follow trivially from (\ref{trasf}) and the assumptions on $f_0$.
\end{proof}

\begin{proof}[Proof of theorem \ref{esistunic}]

The map $M$ maps the point $\mathbf{F}$ in the point $M[\mathbf{F}]$, with $M_2[\mathbf{F}]=\int\!dv M_1\![\mathbf{F}]$ and  $(\!M_1\![\mathbf{F}],\!M_3\![\mathbf{F}]\!)$ solution to the linear problem (\ref{sistk}), with sources $\int_{\R^d}\!dv f^{(k-1)}\!=F_2$ and $\si^{(k-1)}=F_3$ and initial data $f(0,x,v)=f_0(x,v)$, $\si(0,x)=\si_0(x)$. The fixed point $\mathbf{f}$ of $M$ is therefore s.t. $\mathbf{f}=(f,\int dv f , \si)$, where $(f,\si)$ is the solution of (\ref{sistlim}) for $t\in [0,T]$, with $T< \frac 1A$.

Because of the property of the fixed point $(f,\int dv f,\si)$, the solution is prolongeable for any value of $T>0$.
\end{proof}

\smallskip

\begin{rem}
Since $f_1(t,\cdot, \cdot) \in L^{1}(\R_v^d;W^{1,\infty}(\R_x^d))$, $\int dv f_1(t,\cdot,v)\in C_b(\R^d)$.
\end{rem}

\subsection{Condition for the weak convergence $q_n \mu_0^n\deb\delta_0 f_0$}\label{app2}
We state here a simple condition on the growth rate of the generic term of the sequence $\{a_n\}$ such that the product of two weakly convergent measures converges weakly to the product of the two limit measures.

\begin{lemma}
\label{convergdebole}

Let $\Scr{P}$ be a probability measure defined as (\ref{pgrande}), with one particle probability density $f_0$ s.t.
\begin{equation}
\label{velo}
\int dv f_0(\cdot, v) \in \mathscr{S}(\R^d),
\end{equation}
and $q\in \mathscr{S}(\R^d)$ a non negative function s.t.
$$
\int q(x)dx = \Theta >0.
$$
Take a sequence of positive real numbers $\{a_n\}_{n=1}^{\infty}$ ($a_n>0$) such that for some $\kappa \in (0,\frac 12)$
$$
\Lim_{n\to\infty} \frac{a_n^d}{n^{\kappa}}=0.
$$

Then, given $\phi\in C_b(\R^d\times \R^d)$,  $\Scr{P}$-a.s.,
\begin{equation}
\frac 1n \sum_{h=1}^{n} a_n^d q(a_n x_h)\phi(x_h, v_h) \to \Theta \int_{\R^d} f_0(0,v) \phi(0,v) dv .
\end{equation}

\end{lemma}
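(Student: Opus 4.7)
\textbf{Proof proposal for Lemma \ref{convergdebole}.}
The plan is to split the random sum into its $\Scr{P}$-expectation (a deterministic term that we identify with the claimed limit) and its fluctuations (a purely probabilistic piece that we control via Borel--Cantelli). Under $\Scr{P}$ the variables $z_h=(x_h,v_h)$ are i.i.d.\ with common density $f_0$, so setting
$$
Y_{h,n}=a_n^d\,q(a_n x_h)\,\phi(x_h,v_h),\qquad S_n=\frac1n\sum_{h=1}^n Y_{h,n},
$$
the $\{Y_{h,n}\}_{h=1}^n$ form, for each fixed $n$, an i.i.d.\ triangular array. I will first check that $\E_{\Scr{P}}[Y_{1,n}]\to \Theta\int f_0(0,v)\phi(0,v)\,dv$, and then show that $S_n-\E_{\Scr{P}}[Y_{1,n}]\to 0$ $\Scr{P}$-a.s.

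For the mean, performing the change of variables $y=a_n x$ gives
$$
\E_{\Scr{P}}[Y_{1,n}]=\int_{\R^d\times\R^d} q(y)\,\phi(y/a_n,v)\,f_0(y/a_n,v)\,dy\,dv.
$$
Since $\phi\in C_b$, $q\in \Scr{S}(\R^d)$ and $F(x):=\int f_0(x,v)\,dv\in \Scr{S}(\R^d)$, the integrand is dominated by $\|\phi\|_\infty q(y)F(y/a_n)$, which is uniformly integrable. Pointwise continuity of $F$ and of $x\mapsto\int\phi(x,v)f_0(x,v)\,dv$ (which follows from the Schwartz regularity of $F$ together with dominated convergence against $\phi$) then lets me pass to the limit to obtain $\Theta\int f_0(0,v)\phi(0,v)\,dv$.

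For the fluctuations the key point is that the second moment of $Y_{1,n}$ grows with $n$: the same substitution yields, for every integer $p\ge 1$,
$$
\E_{\Scr{P}}[|Y_{1,n}|^{2p}]=a_n^{d(2p-1)}\!\!\int q(y)^{2p}\,|\phi(y/a_n,v)|^{2p} f_0(y/a_n,v)\,dy\,dv\;\le\; C_p\,a_n^{d(2p-1)},
$$
using $\phi\in C_b$ and $q\in \Scr{S}$. In particular $\mathrm{Var}(Y_{1,n})\le C a_n^d$, so a naive Chebyshev bound only gives $\Scr{P}(|S_n-\E Y_{1,n}|>\varepsilon)\lesssim a_n^d/n$, which under (\ref{conditio}) need not be summable. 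I will therefore use a fourth--moment (Rosenthal--type) inequality for i.i.d.\ centered variables $\tilde Y_{h,n}:=Y_{h,n}-\E Y_{1,n}$,
$$
\E\Bigl|\sum_{h=1}^n\tilde Y_{h,n}\Bigr|^{2p}\le C_p\Bigl(n^p\bigl(\mathrm{Var}\,Y_{1,n}\bigr)^{p}+n\,\E|\tilde Y_{1,n}|^{2p}\Bigr),
$$
which gives $\E|S_n-\E Y_{1,n}|^{2p}\le C_p\bigl(a_n^{dp}/n^p+a_n^{d(2p-1)}/n^{2p-1}\bigr)$. Invoking (\ref{conditio}) in the form $a_n^d\le C n^\kappa$ with $\kappa<1/2$, both terms are bounded by $C n^{-\alpha_p}$ with $\alpha_p=\min\bigl(p(1-\kappa),(2p-1)(1-\kappa)\bigr)$, and for any $p\ge 2$ we have $\alpha_p>1$. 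Chebyshev plus Borel--Cantelli then give $S_n-\E Y_{1,n}\to 0$ $\Scr{P}$-a.s., and combined with the deterministic convergence of $\E Y_{1,n}$ this yields the lemma.

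The main obstacle is precisely the blow--up of the moments of the concentrating kernel $a_n^d q(a_n\cdot)$: the variance alone is too weak, and the exponent $\kappa<\tfrac12$ in (\ref{conditio}) is exactly what makes the fourth--moment bound summable (since $1/(1-\kappa)<2$), allowing Borel--Cantelli to close the argument. The only subtlety I foresee is justifying continuity of $x\mapsto\int\phi(x,v)f_0(x,v)\,dv$ at the origin from hypothesis (\ref{velo}); a dominated convergence argument using the Schwartz bound on $F$ as an envelope handles this.
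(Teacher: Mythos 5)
Your proposal is correct, but it takes a genuinely different route from the paper. The paper proves the lemma on the Fourier side: writing $q(x)=(2\pi)^{-d}\int e^{ik\cdot x}\hat q(k)\,dk$, it turns the random sum into an integral over $k$ of $\hat q(k/a_n)$ against the centered empirical averages $\frac1n\sum_h \nu^\phi_h(k)$, whose summands are \emph{uniformly bounded} by $2\|\phi\|_\infty$; the $k$-integral is then split by the threshold $Cn^{-\kappa}$, the ``good'' region contributing $C a_n^d n^{-\kappa}\|\hat q\|_{L^1}$ (this is exactly where hypothesis (\ref{conditio}) is consumed), and the ``bad'' region being killed by high moments of order $j$ ($\E|\frac1n\sum_h\nu_h^\phi(k)|^j\lesssim n^{-j/2}$), Tchebycheff and Borel--Cantelli. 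You instead work directly in physical space with the triangular array $Y_{h,n}=a_n^d q(a_n x_h)\phi(x_h,v_h)$, exploiting that the change of variables gives $\E|Y_{1,n}|^{2p}\lesssim a_n^{d(2p-1)}$ (the $L^{2p}$-mass of the concentrating kernel rather than its sup), and closing with Rosenthal/fourth moments plus Chebyshev and Borel--Cantelli. Both arguments are moment-plus-Borel--Cantelli at heart, but your decomposition avoids Fourier analysis entirely and is in fact slightly stronger: with $p$ large your estimate $\E|S_n-\E Y_{1,n}|^{2p}\lesssim a_n^{dp}n^{-p}+a_n^{d(2p-1)}n^{-(2p-1)}$ is summable whenever $a_n^d=O(n^\kappa)$ for \emph{any} $\kappa<1$, so the restriction $\kappa<\tfrac12$ is only what lets you stop at $p=2$, whereas the paper's splitting uses $a_n^dn^{-\kappa}\to0$ with $\kappa<\tfrac12$ in an essential way. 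What the Fourier route buys the authors is that the random summands stay bounded uniformly in $n$ (all the $a_n$-dependence sits in $\hat q(\cdot/a_n)$), which makes the argument transfer verbatim, with constants uniform in the shift, to the translated kernels and tensorized versions needed in Remarks \ref{rdebomom} and \ref{rdebotrasl}. Finally, note that both your proof and the paper's implicitly use $a_n\to\infty$ and some pointwise regularity of $f_0$ in $x$ at the origin for the deterministic term (hypothesis (\ref{velo}) alone does not give continuity of $x\mapsto\int\phi(x,v)f_0(x,v)\,dv$ nor a pointwise meaning to $f_0(0,\cdot)$); you flag this correctly, and in the paper's application $f_0\in\Scr{S}(\R^d\times\R^d)$, so it is harmless in context.
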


\begin{proof}

We observe first that, since $q\in \mathscr{S}(\R^d)$, we can use the identity
$$
q(x) =\frac{1}{(2\pi)^d}\int_{\R^d} e^{i k\cdot x} \hat{q}(k)dk
$$
where $\hat{q}\in \mathscr{S}(\R^d)$ is the Fourier transform of $q$.

Because of $\hat{q}\in \mathscr{S}(\R^d)$ and (\ref{velo}), in all calculations below  we can apply Fubini's theorem.

We can then write:

\begin{equation}
\begin{split}
\left|\frac 1n \sum_{h=1}^{n} a_n^d q(a_n x_h)\phi(x_h, v_h) -\Theta \int_{\R^d} f_0(0,v) \phi(0,v) dv\right|\leq \\
\\
\frac{1}{(2\pi)^d}\!\!\int_{\R^d} \big|\hat{q}\big(\frac{k}{a_n}\big)\big|
\left|\frac 1n \sum_{h=1}^{n}e^{i k\cdot x_h}\phi(x_h, v_h) -\!\!\int_{\R^d\times\R^d} \!\!\!\!\!\!e^{i k\cdot x} f_0(x,v) \phi(x,v)dx dv \right|dk\\
\\
+ \left|\int_{\R^d\times\R^d} a_n^d q(a_n x) f_0(x,v) \phi (x,v)dx dv -\Theta \int_{\R^d} f_0(0,v) \phi(0,v) dv \right| .
\end{split}
\end{equation}

We consider then, for $h=1,\ldots$, the sequence of independent stochastic variables
$$
\nu^{\phi}_h (k) = e^{i k\cdot x_h} \phi(x_h,v_h) - \int_{\R^d\times\R^d} e^{i k\cdot x}f_0(x,v) \phi(x,v),
$$
s.t. $\E_{\Scr{P}} [\nu^{\phi}_h(k)]=0$ and $\E_{\Scr{P}}[|\nu^{\phi}_h|^2 (k)]\leq 2 \|\phi\|_{\infty}^{2}$
and for which we have, for any even $j\in \N$,
\begin{equation}
\label{mompari}
\E_{\Scr{P}} \left[|\frac 1n \sum_{h=1}^{n}\nu^{\phi}_h (k)|^j \right]\leq \frac{K_j}{n^{\frac j2}}
(\E_{\Scr{P}}[|\nu|^2(k)])^{\frac j2} \leq \frac{K_j}{n^{\frac j2}}
(\sqrt{2} \|\phi\|_{\infty})^j.
\end{equation}

We define then
\begin{equation}
\hat{\omega}^{\phi}_n=
\frac{1}{(2\pi)^d}\int_{\R^d}\!\!dk \big|\hat{q}\big(\frac{k}{a_n}\big)\big|
\left|\frac 1n \sum_{h=1}^{n}\nu^{\phi}_h (k)\right| I_{\{k: |\frac 1n \Som_{h=1}^{n}\nu^{\phi}_h (k)|> \frac{C}{n^{\kappa}}\}}
\end{equation}
and we get:
\begin{equation}
\begin{split}
\left|\frac 1n \sum_{h=1}^{n} a_n^d q(a_n x_h)\phi(x_h, v_h) -\Theta \int_{\R^d} f_0(0,v) \phi(0,v) dv\right|\leq
C \frac{a_n^d}{n^{\kappa}}\|\hat{q}\|_{L^1} +  \hat{\omega}^{\phi}_n\\
\\
+ \left|\int_{\R^d\times\R^d} a_n^d q(a_n x) f_0(x,v) \phi (x,v)dx dv -\Theta \int_{\R^d} f_0(0,v) \phi(0,v) dv \right|
\end{split}
\end{equation}

Since, because of the characteristic function in the definition of $\hat{\omega}^{\phi}_n$,
$$
\E_{\Scr{P}}(\hat{\omega}^{\phi}_n) \leq  \frac{n^{(j-1)\kappa}}{C (2\pi)^d}\int_{\R^d}\!\!dk \big|\hat{q}\big(\frac{k}{a_n}\big)\big| \E_{\Scr{P}} \left[|\frac 1n \sum_{h=1}^{n}\nu^{\phi}_h (k)|^j \right],
$$

by Tchebycheff inequality and (\ref{mompari}),  for $\de>0$,
$$
\Scr{P}(|\hat{\omega}^{\phi}_n|> \de) \leq \frac{\E_{\Scr{P}}(\hat{\omega}^{\phi}_n)}{\de}\leq \frac{K}{\de} a^d_n n^{(j-1)\kappa - \frac j2}
$$
(with $K$ depending on $\|\phi\|_{\infty}$ and $\|\hat{q}\|_{L^1}$); we obtain then, for $\kappa\in(0,\frac 12 )$ and for all $\de>0$, $\Som_n \Scr{P}(|\hat{\omega}^{\phi}_n|> \de) <\infty$, and therefore
$\hat{\omega}^{\phi}_n \stackrel{\Scr{P} a.s.}{\to} 0$.

Since $\Lim_{n\to\infty}\frac{a_n^d}{n^{\kappa}}=0$ and $q_n(x)=a_n^d q(a_n x)$ is such that $q_n \deb \Theta \delta_0$, we get finally
$$
\frac 1n \sum_{h=1}^{n} a_n^d q(a_n x_h)\phi(x_h, v_h) \stackrel{\Scr{P} a.s.}{\to} \Theta \int_{\R^d} f_0(0,v) \phi(0,v) dv .
$$
\end{proof}

\begin{prop}
\label{cdebole}
In the same hypothesis as in Lemma \ref{convergdebole}, on a full measure set with respect to $\Scr{P}$
$$
q_n(x) \mu_0^n(x,v) \deb \Theta\delta_0(x) f_0(x,v).
$$
\end{prop}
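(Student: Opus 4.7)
The plan is to upgrade the pointwise-in-$\phi$, $\Scr{P}$-almost sure convergence of Lemma \ref{convergdebole} into a convergence valid on a \emph{single} set of full $\Scr{P}$-measure for \emph{every} $\phi\in C_b(\R^d\times\R^d)$ simultaneously, which is exactly the content of the weak convergence $q_n\mu_n^0\deb \Theta\delta_0 f_0$ of finite measures. The only real difficulty is measure-theoretic bookkeeping: each $\phi$ comes with its own exceptional null set, and one must force the union of these null sets to remain $\Scr{P}$-negligible while still covering $C_b$.

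First I would choose a countable family of test functions rich enough to determine weak convergence of finite measures on $\R^d\times\R^d$: a countable subset $\{\phi_k\}_{k\in\N}$ of $C_c(\R^d\times\R^d)$ dense for the uniform norm (which exists since $C_c$ is separable in that norm), adjoined with the constant function $\phi\equiv 1$ and with a sequence of cut-offs $\chi_R\in C_c(\R^d\times\R^d)$ satisfying $0\leq\chi_R\leq 1$ and $\chi_R\uparrow 1$ pointwise. Applying Lemma \ref{convergdebole} to each element of this enlarged countable list and intersecting the associated full-$\Scr{P}$-measure events produces a set $\Omega_0$ of full $\Scr{P}$-measure on which the stated convergence holds simultaneously for all test functions in the list. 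In particular, testing against $\phi\equiv 1$ delivers on $\Omega_0$ the total-mass convergence $\frac{1}{n}\sum_{i=1}^n q_n(x_i)\to \Theta\int_{\R^d}f_0(0,v)\,dv$, hence a uniform-in-$n$ upper bound on the total mass of $q_n\mu_n^0$.

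Mass convergence combined with pointwise convergence on a uniformly dense subset of $C_c$ yields, via an elementary $\var/3$ approximation argument, convergence on $\Omega_0$ against every $\phi\in C_c(\R^d\times\R^d)$. The final, and main, step is the extension from $C_c$ to $C_b$: for $\phi\in C_b$ and $\var>0$, choose $R$ so large that $\Theta\int_{\R^d}(1-\chi_R(0,v))f_0(0,v)\,dv<\var/\|\phi\|_\infty$ (possible by monotone convergence, since $\chi_R\uparrow 1$ and $f_0\in\Scr{S}$). Splitting $\phi=\phi\chi_R+\phi(1-\chi_R)$, the first piece lies in $C_c$ and converges by the previous step, while the second is controlled in absolute value by $\|\phi\|_\infty\cdot\frac{1}{n}\sum_{i=1}^n(1-\chi_R)(x_i,v_i)q_n(x_i)$, a quantity which by the mass convergence and the convergence on $\chi_R\in C_c$ tends to $\|\phi\|_\infty\int(1-\chi_R)\,d(\Theta\delta_0 f_0)<\var$. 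The bootstrap hinges precisely on having $\phi\equiv 1$ admissible in Lemma \ref{convergdebole} to pin down the total mass; everything else is measure-theoretic plumbing.
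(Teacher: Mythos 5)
Your proposal is correct and follows essentially the same route as the paper: a countable uniformly dense family of test functions (the paper works in $C_0(\R^d\times\R^d)$) together with the constant function $\phi\equiv 1$, intersection of the countably many exceptional $\Scr{P}$-null sets, a density argument using the uniform mass bound, and an upgrade from vague to weak convergence via the convergence of total masses. The only difference is cosmetic: where the paper invokes \cite{MA} (Theorem 6.8, p.~90) for the final ``vague convergence plus convergence of masses implies weak convergence'' step, you prove that step directly with the cutoff functions $\chi_R$.
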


\begin{proof}
We consider first functions in the separable space $C_0(\R^d\times\R^d)$, the set of continuous functions vanishing at infinity. Let $D$ be a countable dense set in  $C_0(\R^d\times\R^d)$ and consider the two sets
$$
\!\!\!\!A=\{ Z_{\infty}: \forall \phi \in D \,\int_{\R^d\times\R^d}\!\!\!\!dxdv q_n(x) \mu_0^n (x,v) \phi(x,v) \to \Theta \int_{\R^d}\!\!\! dv f_0(0,v) \phi(0,v)\}
$$
and
$$
B=\{ Z_{\infty}: \int_{\R^d\times\R^d}dxdv q_n(x) \mu_0^n (x,v) \to \Theta\int_{\R^d} dv f_0(0,v)\}.
$$

Because of Lemma \ref{convergdebole}, both sets have full measure, so as their intersection $A\cap B$.

For $Z_{\infty}\in A\cap B$ the sequence $\{q_n \mu_n^0\}_{n=1}^{\infty}$ is a sequence of finite  positive measures s.t. for all $\phi\in C_0(\R^d\times\R^d)$
\begin{equation}
\label{ciccio}
\int_{\R^d\times\R^d} dx dv q_n \mu_0^n \phi \to \Theta\int_{\R^d} dv
f_0(0,v)\phi(0,v)< \infty.
\end{equation}
Since the convergence in (\ref{ciccio}) is valid also for $\phi=1$, on the set $A\cap B$, weakly in the sense of measure,
$$
q_n (x)\mu_n (x,v) \deb \Theta\delta_0 (x) f_0 (x,v)
$$
(see e.g. \cite{MA}, p.90, theorem 6.8), and the proposition is proved.
\end{proof}

\begin{rem}
\label{rdebomom}
Since $f_0\in \Scr{S}(\R^d\times\R^d)$ and $\int dv f_0 \in \Scr{S}(\R^d)$, the convergence proved in Proposition \ref{cdebole} is valid also for $|v|^j q_n \mu^n_0$, $j=1,2,\ldots$ (i.e. $|v|^j q_n \mu^n_0\deb |v|^j \delta_0 f_0$). To show it, it suffices to rewrite the proof with $\nu_h^{\phi}$ replaced with $|v|^j\nu_h^{\phi}$. Under the same hypothesis, it is also possible to prove, by induction, that $\otimes_{k=1}^{M} q_n\mu_n \deb \otimes_{k=1}^M \de_0 f_0 $, for $M=1,2\ldots$.
\end{rem}

\begin{rem}
\label{rdebotrasl}
Whenever, in addition to the hypothesis in Lemma \ref{convergdebole}, we have $f_0\in L^1(\R_v^d; W^{1,\infty}(\R_x^d))$ and $v^2 f_0\in L^1 (\R_v^d; L^{\infty}(\R_x^d))$, we can obtain, following the same steps, the weak convergence of $(\Cal{T}^{v,s}_a q)_n \mu_n^0$ and $|v|(\Cal{T}^{v,s}_a q)_n \mu_n^0$, where we define $(\Cal{T}^{v,s}_a g)(x)=g(x+vt+a)$. The convergence is uniform in $a \in\R^d$, as can be easily checked.

Moreover, a very simplified form of the procedure allows to prove that, given a limit density  $f_0\in \Scr{S}(\R^d\times\R^d)$ and two positive integers $P, Q$, on a full measure set w.r.t. $\Scr{P}$, $|v|^j \mu^n_0\deb |v|^j f_0$  and $\otimes_{k=1}^{i} |v_k|^j\mu_n \deb \otimes_{k=1}^i |v|^j f_0 $, for $j=0,\ldots,P$ and $i=1,\ldots, Q$.
\end{rem}


\medskip
Received xxxx 20xx; revised xxxx 20xx.
\medskip

\end{document}